	\let\MYoriglatexcaption\caption
	\renewcommand{\caption}[2][\relax]{\MYoriglatexcaption[#2]{#2}}
\newtheorem{theorem}{Theorem}
\newtheorem{remark}{Remark}
\newtheorem{conjecture}{Conjecture}
\begin{document}

\title{Distributed area coverage control with imprecise robot localization: Simulation and experimental studies}

\author{Sotiris~Papatheodorou,
        Anthony~Tzes \emph{Senior Member, IEEE},
        Konstantinos~Giannousakis
        and~Yiannis~Stergiopoulos
\thanks{This work has received funding from the European Union's Horizon 2020 Research and Innovation Programme under the Grant Agreement No.644128, AEROWORKS.}
\thanks{The authors are with the Electrical \& Computer Engineering Department, University of Patras, Rio, Achaia 26500, Greece. Corresponding author's email: \small\tt{tzes@ece.upatras.gr}}}
%
\markboth{IEEE Transactions on Automatic Control}%
{}
%
%
\maketitle
\begin{abstract}
This article examines the area coverage problem for a network of mobile robots with imprecise agents’ localization. Each robot has uniform radial sensing ability, governed by first order kinodynamics. The convex-space is partitioned based on the Guaranteed Voronoi (GV) principle and each robot's area of responsibility corresponds to its GV-cell, bounded by hyperbolic arcs. The proposed control law is distributed, demanding the positioning information about its GV-Delaunay
neighbors. Simulation and experimental studies are offered to highlight the efficiency of the proposed control law.
\end{abstract}

\begin{IEEEkeywords}
Multi-agent systems, Multi-robot systems, Mobile robots, Cooperative systems, Autonomous agents.
\end{IEEEkeywords}

\IEEEpeerreviewmaketitle
\section{Introduction}
\IEEEPARstart{A}{rea} coverage of a planar region by a swarm of robotic agents is an active field of research with great interest and potential. The task involves the dispersal inside a region of interest of autonomous, sensor--equipped robotic agents in order to achieve sensor coverage of that region. Distributed control algorithms are a popular choice for the task because of their computational efficiency compared to centralized controllers, their ability to adapt in real--time to changes in the robot swarm and their reliance only on local information. Thus they can be implemented on robots with low computational power and low power radio transceivers, since each agent is required to communicate only with its neighbors. The downside of distributed control schemes is that they lead to local optima of their objective function because of their lack of information on the state of all agents. 

Area coverage can be either static \cite{Mahboubi_IEEETAC2016,vanEeden_SICE2016,Simonetto_Allerton2012} in which case the agents converge to some static final configuration or sweep \cite{Song_Automatica2013,Franco_EJC2015a,Zhai_Automatica2013} in which the agents constantly move in order to satisfy a constantly changing coverage objective. 

Other factors that need to be taken into account are the type of the region surveyed \cite{Stergiopoulos_IEEETAC15,Alitappeh_SC2016}, the sensing pattern and performance of the sensors \cite{Stergiopoulos_IETCTA10,Arslan_ICRA2016,Pimenta_CDC08,Stergiopoulos_ICRA14}, or the particular dynamics of the agents \cite{Luna_CDC2010}. 

There have been varied approaches to area coverage such as distributed optimization \cite{Cortes_SIAMJCO05,Cortes_ESAIMCOCV05}, model predictive control \cite{Nguyen_MED2016,Mohseni_IEEESJ2016}, game theory \cite{Ramaswamy_ACC2016} and optimal control \cite{Nguyen_ISIC2016}.

All localization systems have an inherent uncertainty in their measurements, thus there has been significant work on taking this uncertainty into account either by data fusion \cite{Hage_MED2016}, safe trajectory planning \cite{Davis_IEEERAL2016} or spatial probability \cite{Habibi_IEEETAC2016}. Our approach creates a suitable space partioning scheme in lieu of the agents' positioning uncertainty and a distributed gradient--based control law.

This article examines the problem of persistent area coverage of a convex planar region by a network of homogeneous agents equipped with isotropic sensors.  In order to take into account the localization uncertainty, each agent is assumed to lie somewhere inside a circular positioning uncertainty region and Guaranteed Voronoi partitioning \cite{Evans_CCCG2008} of the region of interest is constructed from these uncertain regions. Subsequently, a distributed control law is derived based on the aforementioned partitioning so that a coverage objective increases monotonously. Because of the complexity of that law, a suboptimal one is proposed and both laws are compared through simulations. Additionally, the suboptimal control law is evaluated by two experiments using differential drive robots. This article is an extension of \cite{Papatheodorou_MED2016} which did not contain simulation results from the optimal control law and lacked the experimental evaluation of the suboptimal control law.

The layout of the article is as follows. The required mathematical preliminaries are presented in Section \ref{sec:preliminaries}. The Voronoi and the Guaranteed Voronoi space partitions are examined in Section \ref{sec:partitioning}. Section \ref{sec:problem_law} contains the definition of the coverage objective and the derivation of the gradient ascent based control law that maximizes it. Additionally it contains the proposed suboptimal control law. Simulation studies are presented in Section \ref{sec:simulations} in order to compare and evaluate both control laws. Section \ref{sec:experiments} contains experimental results from the implementation of the suboptimal control law and is followed by concluding remarks.
\section{Mathematical Preliminaries}
\label{sec:preliminaries}
Let us assume a compact convex region $\Omega \in \mathbb{R}^2$ under surveillance and a network of $n$ identical dimensionless mobile agents (nodes). Each agent's dynamics are governed by
\begin{equation}
\dot{q}_i = u_i, ~u_i \in \mathbb{R}^2, ~q_i \in \Omega, ~i \in I_n
\label{eq:dynamics}
\end{equation}
where $u_i$ is the control input for each agent and $I_n = \{1,~2,\dots, n\}$. 

The agents' positions are not known precisely and thus in the general case each agent has a positioning uncertainty of different measure. We assume that an upper bound $r_i^u$ for the positioning uncertainty of each agent is known and thus its center lies within a disk
\begin{equation}
C_i^u (q_i,r_i^u) = \left\{ q \in \Omega \colon \parallel q-q_i \parallel \leq r_i^u \right\}, ~i \in I_n
\label{eq:positioning_uncertainty}
\end{equation}
where $\parallel \cdot \parallel$ is the Euclidean metric and $q_i$ the agent's position as reported by its localization equipment.

All agents are equipped with identical omnidirectional sensors with circular sensing footprints and as such the sensed region of each agent can be defined as
\begin{equation*}
C_i^s (q_i,r^s) = \left\{ q \in \Omega \colon \parallel q-q_i \parallel \leq r^s \right\}, ~i \in I_n
\end{equation*}
where $r^s$ is the common range of the sensors.

We also define for each agent a `Guaranteed Sensed Region' (GSR) as the region that is sensed by the agent for all of its possible positions inside $C_i^u$. The GSRs are defined as
\begin{equation*}
C_i^{gs}(C_i^u,C_i^s) = \left\{ \bigcap_{q_i} C_i^s(q_i,r^s), ~\forall q_i \in C_i^u \right\}, ~i \in I_n
\end{equation*}
and since $C_i^u$ and $C_i^s$ are disks, the previous expression can be further simplified into
\begin{equation}
C_i^{gs}(q_i,r^s-r_i^u) = \left\{ q \in \Omega: \parallel q-q_i\parallel \leq r^s-r_i^u \right\},~i \in I_{n},
\label{eq:guaranteed_sensing}
\end{equation}
provided that $r^s \geq r_i^u$. 

When an agent's position is known precisely, i.e. $r_i^u = 0$, its GSR is equivalent to its sensed region, whereas when an agent's positioning uncertainty is too large compared to its sensing capabilities, i.e. $r_i^u > r^s$, then $C_i^{gs} = \emptyset$.


\section{Space Partitioning}
\label{sec:partitioning}
Most distributed area coverage schemes assign an area of responsibility to each agent based on information from its neighboring agents. Then each agent is responsible for maximizing the coverage solely on its own responsibility region. The most common partitioning scheme is the Voronoi partitioning, described in Section \ref{sec:voronoi}. However in this article, because of the localization uncertainty of the agents, the guaranteed Voronoi diagram described in Section \ref{sec:gvoronoi} is used for the assignment of areas of responsibility.

\subsection{Voronoi Diagram}
\label{sec:voronoi}

The Voronoi diagram for a set of points $Q = \{q_1, q_2 \dots q_n \}, ~q_i \in \Omega$ is defined as follows
\begin{equation*}
V_i = \left\{ q \in \Omega \colon \left\| q-q_i \right\|
\leq\left\|q-q_j \right\|,~~\forall
j\in I_n,~j\neq i\right\},~~i\in I_n.
\end{equation*}
Each $V_i \subset \Omega$ is called the Voronoi cell of node $i$. A Voronoi diagram of 6 points constrained in a subset $\Omega$ of $\mathbb{R}^2$ is shown in Figure \ref{fig:V_GV_comparison} (left), where the cell boundaries are shown in blue.

It is important to note that the Voronoi diagram is a complete tessellation of $\Omega$, that is $\bigcup_{i \in I_n} V_i = \Omega$. 

Another useful property of Voronoi diagrams is their duality to Delaunay triangulations. Let us define the Delaunay neighbors of a point $q_i$ as
\begin{equation}
	N_i = \left\{j\in I_n,~j\neq i \colon V_i\cap V_j \neq \emptyset\right\},~i\in I_n,
	\label{eq:delaunay}
\end{equation}
When constructing the Voronoi cell of a point $q_i$, only its Delaunay neighbors need to be considered, thus simplifying the definition of its Voronoi cell into
\begin{equation*}
V_i = \left\{ q \in \Omega \colon \left\| q-q_i \right\|
\leq\left\|q-q_j \right\|,~~\forall
j\in N_i\right\},~~i\in I_n.
\end{equation*}
Using this property, a Voronoi diagram can be constructed by first finding the Delaunay triangulation of the input points and then for each cell $V_i$ by finding the intersection of the halfplanes defined by point $q_i$ and all points in $N_i$.

\begin{figure}[htb]
	\centering
	\ifx\singlecol\undefined
		\includegraphics[width=0.23\textwidth]{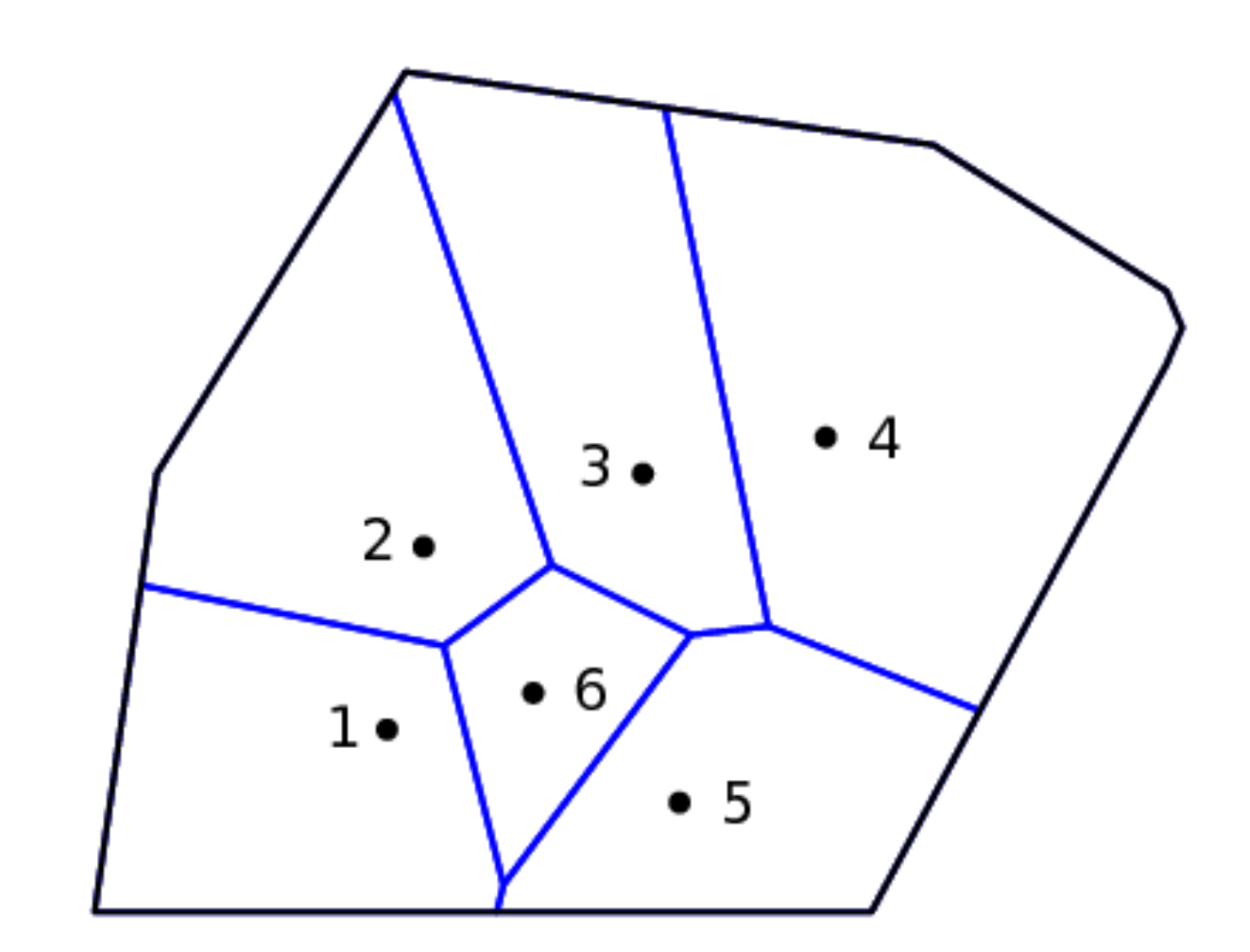}\hspace{0.01cm}
		\includegraphics[width=0.23\textwidth]{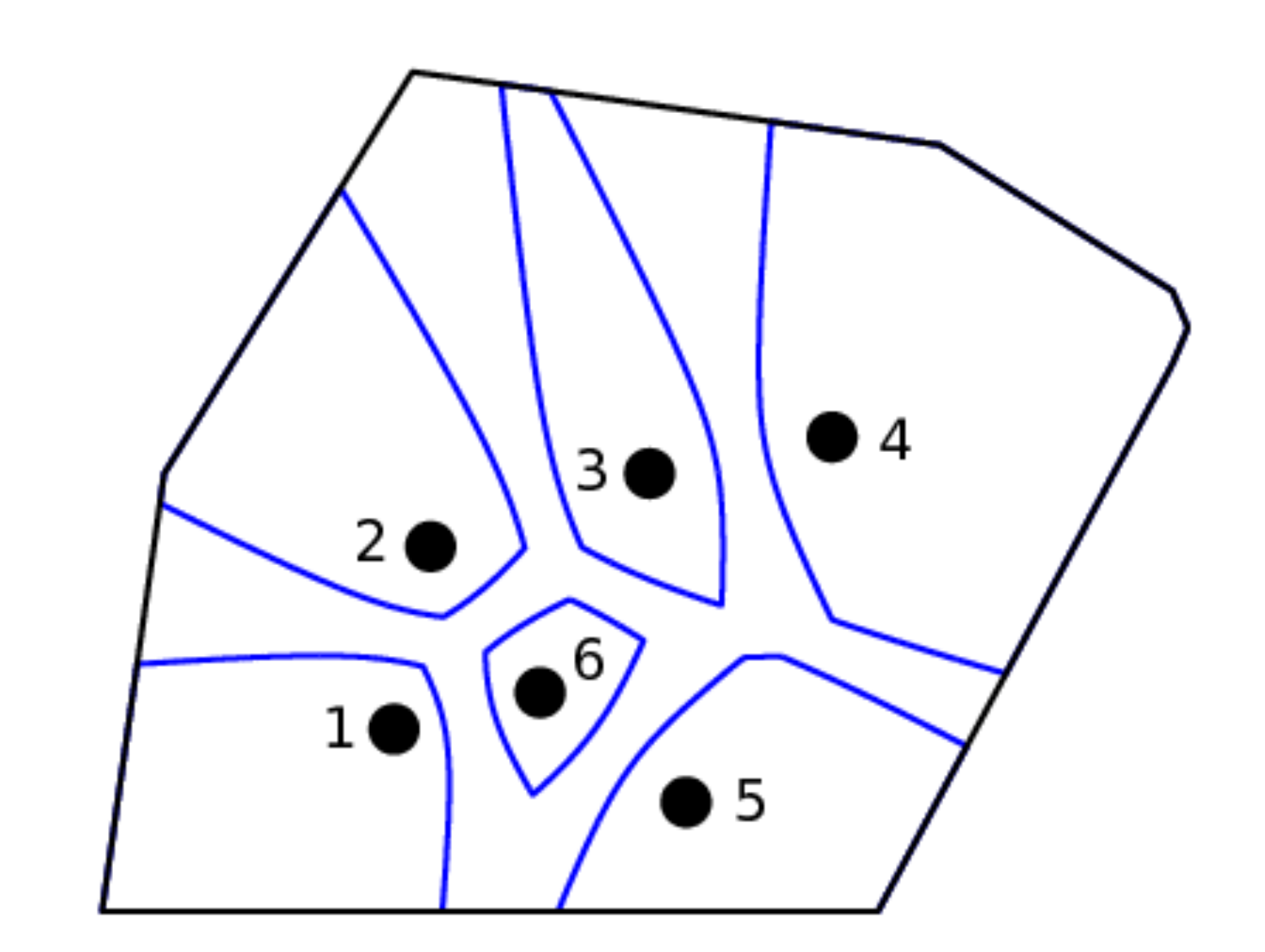}
	\else
		\includegraphics[width=0.45\textwidth]{figures/V_6_nodes_numbered.pdf}\hspace{0.01cm}
		\includegraphics[width=0.45\textwidth]{figures/GV_6_nodes_numbered.pdf}
	\fi
	\caption{Voronoi diagram (left) for 6 dimensionless nodes, with its equivalent Guaranteed Voronoi diagram (right) in the case of disks (centered on those points).}
	\label{fig:V_GV_comparison}
\end{figure}

\subsection{Guaranteed Voronoi Diagram}
\label{sec:gvoronoi}

The Guaranteed Voronoi (GV) diagram is defined for a set of uncertain regions $D = \{D_1, D_2, \dots , D_n \}, ~D_i \subset \Omega$. Each uncertain region $D_i$ contains all the possible positions of a point $q_i \in \mathbb{R}^2$ whose localization cannot be measured precisely. Thus the GV diagram is defined as
\begin{eqnarray}
	\nonumber
	V_i^g &=& \left\{q\in\Omega \colon \right. \max \left\|q-q_i\right\|  \leq \min \left\|q-q_j\right\| \\
	&&\forall j\in  I_n,\left. ~j\neq i, ~q_i \in D_{i},~q_j \in D_{j}\right\},~i\in I_n.
	\label{eq:GV_definition_max_min}
\end{eqnarray}
An equivalent definition of the GV diagram is as follows
\begin{equation*}
	V_i^g = \bigcap_{i \in I_n} H_{ij}
\end{equation*}
where
\small
\begin{equation*}
	H_{ij} = \left\{ q\in\Omega\colon \max\left\|q-q_i\right\|
	\leq \min\left\|q-q_j\right\|, ~\forall q_i \in D_i, ~\forall q_j \in D_j
	\right\}.
\end{equation*}
\normalsize
This definition is analogous to the construction of the classic Voronoi cell by the halfplane intersection method described previously, with the difference that the regions being intersected are no longer halfplanes.

As such, each cell $Vi^g$ contains the points on the plane that are closer to the corresponding point $q_i$ of region $D_i$ for all possible configurations of the uncertain points. A Guaranteed Voronoi diagram of 6 regions constrained in a subset $\Omega$ of $\mathbb{R}^2$ is shown in Figure \ref{fig:V_GV_comparison} (right), where similarly the cell boundaries are shown in blue.

\begin{remark}
In contrast to the classic Voronoi diagram, the Guaranteed Voronoi diagram does not constitute a tessellation of $\Omega$, since $\bigcup_{i \in I_n} V_i^g \subset \Omega$. We define a neutral region $\mathcal{O}$ so that $\mathcal{O} \cup \bigcup_{i \in I_n} V_i^g = \Omega$.
\label{rem:neutral}
\end{remark}

Similarly to the Voronoi diagram, we can define the Guaranteed Delaunay neighbors $N_i^g$ of a region $D_i$ so that in order to construct $V_i^g$, only the regions in $N_i^g$ need be considered. Thus the Guaranteed Delaunay neighbors are defined as
\begin{equation}
	N_i^g = \left\{ j \in I_n, j \neq i \colon \partial H_{ij} \cap \partial V_i^g \neq \emptyset \right\}.
	\label{eq:guaranteed_delaunay}
\end{equation}

Having defined the Guaranteed Delaunay neighbors, the definition of the GV diagram can be simplified into
\begin{equation}
	V_i^g = \bigcup_{i \in N_i} H_{ij}.
\label{eq:GV_partitioning}
\end{equation}

It is important to note that when two or more uncertain regions overlap, none of them are assigned GV cells. In addition, if the uncertain regions degenerate into points, the GV diagram converges to the classic Voronoi diagram.

\subsection{Guaranteed Voronoi Diagram of Disks}

The GV diagram is examined in the case when the uncertain regions $D_i$ are disks $C_i^u$ as defined in Equation (\ref{eq:positioning_uncertainty}). In that case it has been shown that $H_{ij}$ are regions on the plane bounded by hyperbolic branches. For any two nodes $i$ and $j$, $\partial H_{ij}$ and $\partial H_{ji}$ are the two branches of a hyperbola with foci at $q_i$ and $q_j$ and semi-major axis equal to the sum of the radii of $C_i^u$ and $C_j^u$. The parametric equation of these hyperbola branches can be found in the Appendix.

The dependence of the GV cells of two disks on the distance of their centers $d_{ij} = d(q_i,q_j)$ is as follows. When the disks $C_i^u, C_j^u$, overlap both of the cells $V_i^g, V_j^g$ are empty as shown in Fig. \ref{fig:cells_distance} (a). If the disks represent the possible positions of the agents, this is an undesirable configuration as it can lead to collisions. When the disks $C_i^u, C_j^u$ are outside tangent (i.e., $d(q_i,q_j) = r_i^u+r_j^u$), the resulting cells are rays starting from the centers of the disks $q_i, q_j$ and extending along the direction of $q_iq_j$ as seen in Fig \ref{fig:cells_distance} (b). When the disks are disjoint, the GV cells are bounded by the two branches of a hyperbola. As $d_{ij}$ increases further, the eccentricity of the hyperbola increases and the distance of the disk centers from the hyperbola vertices (the points of the hyperbola closest to its center) increases as seen in Fig \ref{fig:cells_distance} (c), (d). The distance between the hyperbola's vertices remains constant at $r_i+r_j$ as $d_{ij}$ increases.
	
The GV cells of two disks also depend on the sum of their radii $r_i + r_j$ as seen in Figure \ref{fig:cells_dimensions}. As $r_i + r_j$ decreases, the eccentricity of the hyperbola increases and the result on the cells is the same as if the distance of the disks' centers increased, as explained previously. When $r_i + r_j = 0$, the GV cells are the classic Voronoi cells.

\begin{figure}[htb]
	\centering
	\ifx\singlecol\undefined
		\subfloat[]{
		\includegraphics[width=0.23\textwidth]{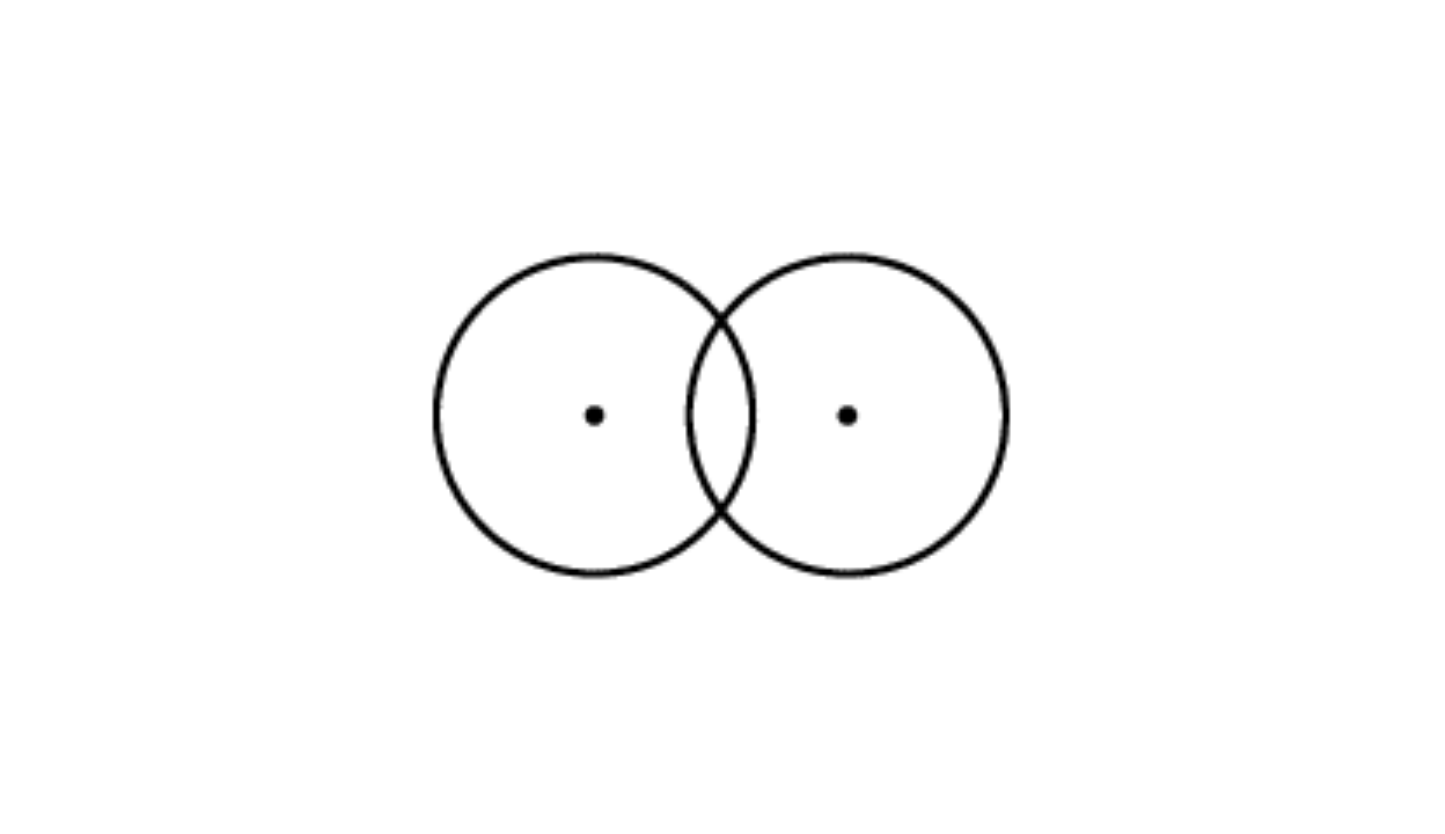} }
		\subfloat[]{ \includegraphics[width=0.23\textwidth]{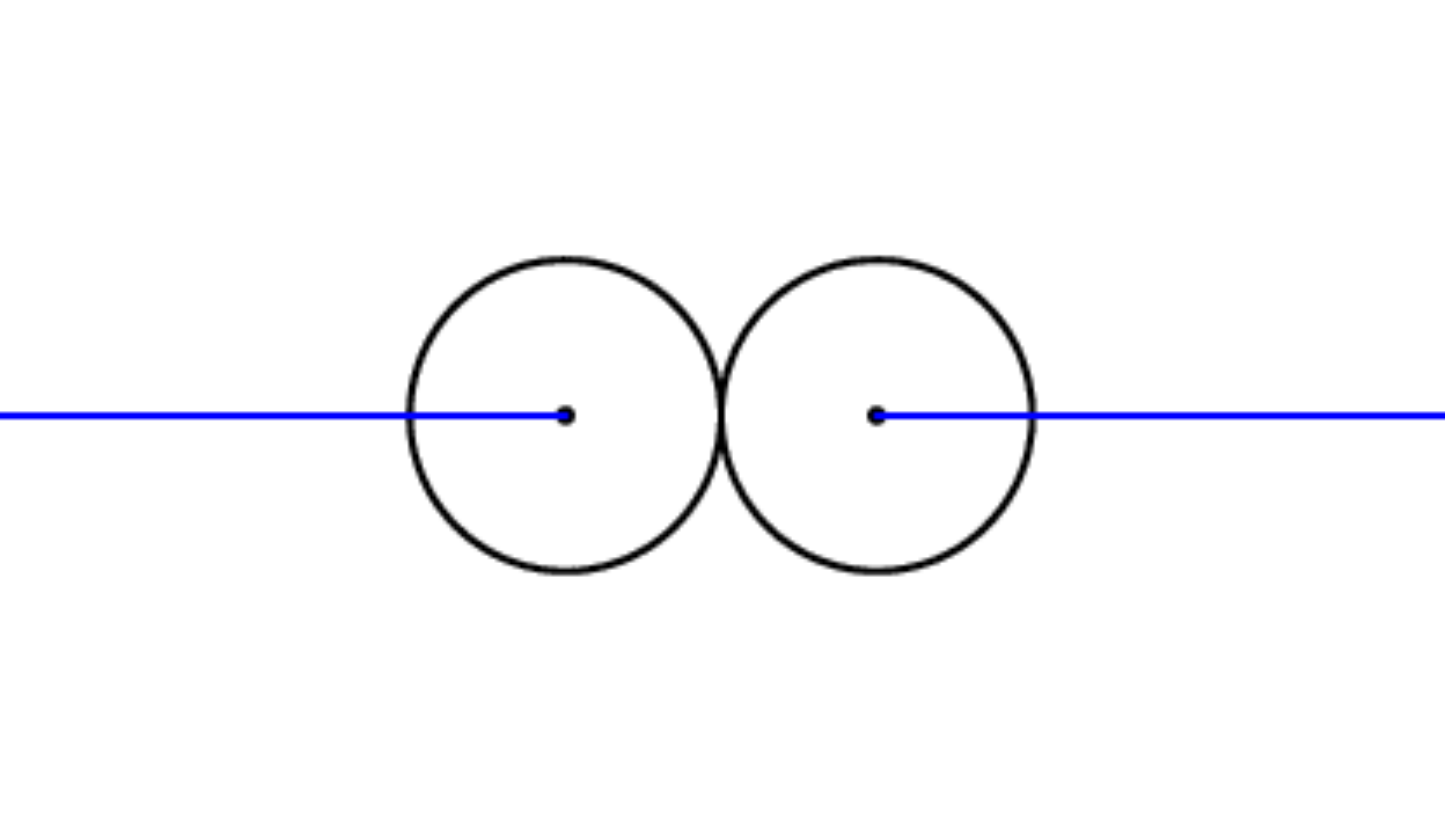} }\\
		\subfloat[]{ \includegraphics[width=0.23\textwidth]{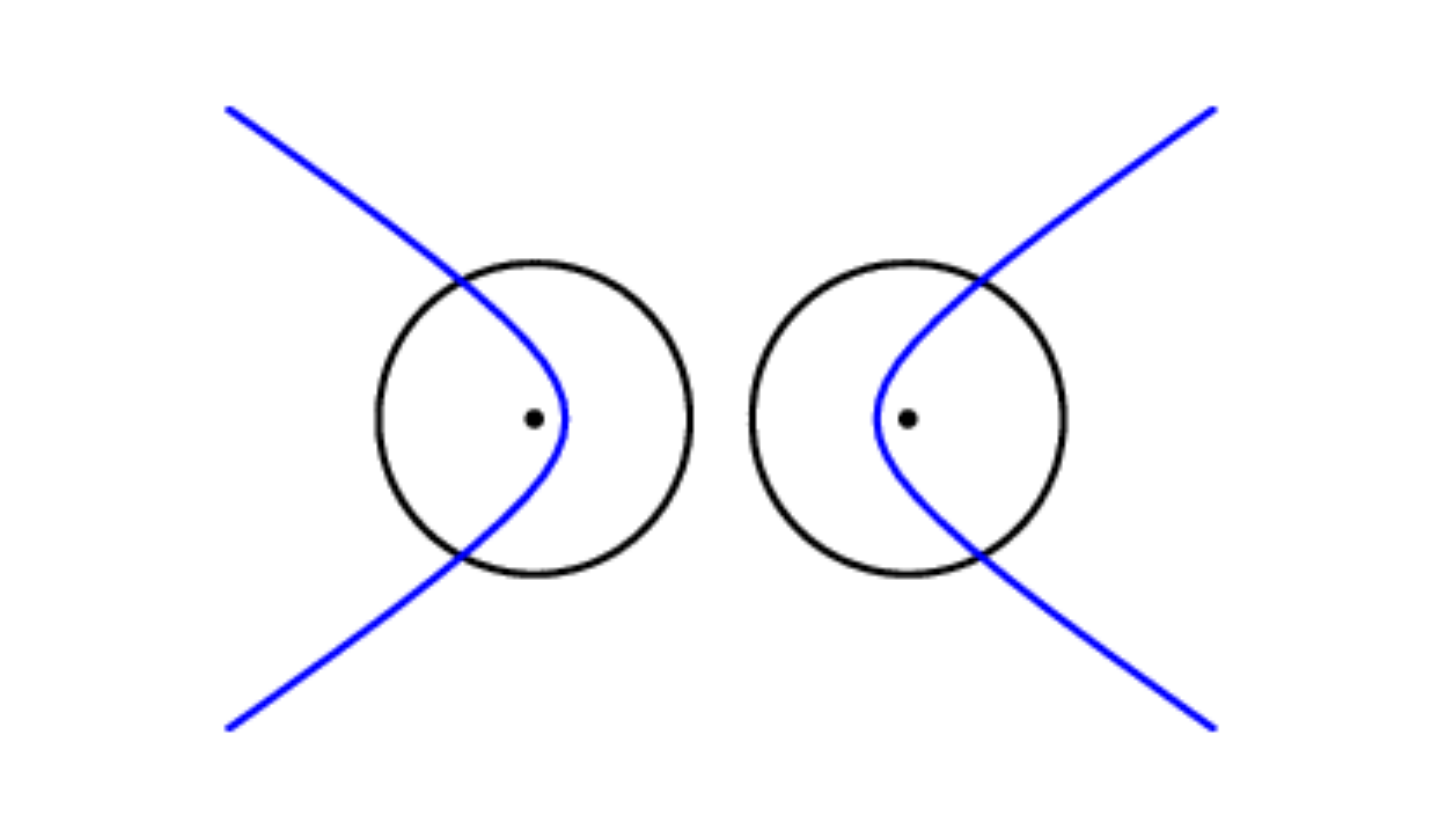} }
		\subfloat[]{ \includegraphics[width=0.23\textwidth]{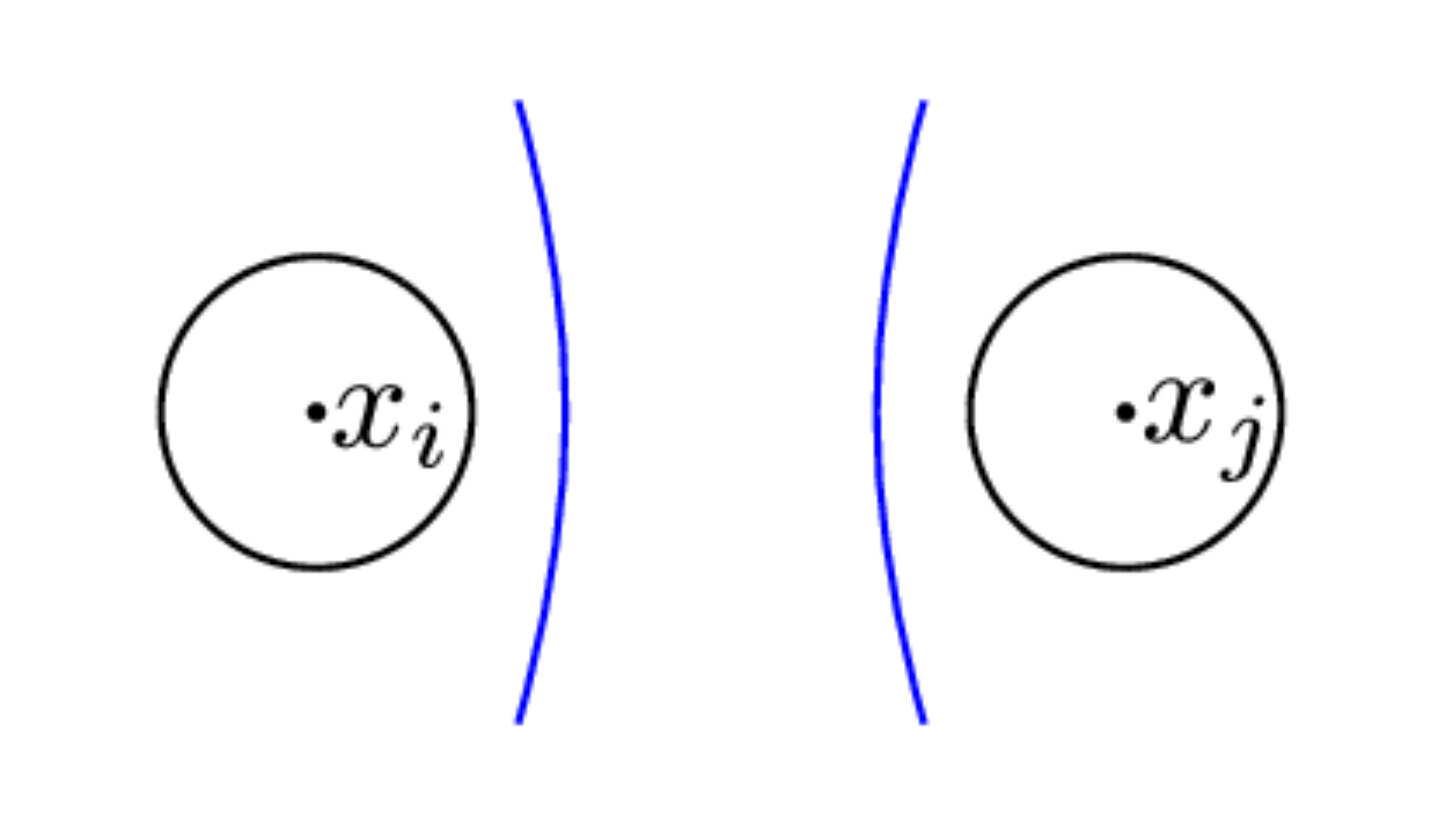} }
	\else
		\subfloat[]{
		\includegraphics[width=0.45\textwidth]{figures/GV_cells_2_nodes/1_overlap_es.pdf} }
		\subfloat[]{ \includegraphics[width=0.45\textwidth]{figures/GV_cells_2_nodes/2_tangent_es.pdf} }\\
		\subfloat[]{ \includegraphics[width=0.45\textwidth]{figures/GV_cells_2_nodes/3_close_es.pdf} }
		\subfloat[]{ \includegraphics[width=0.45\textwidth]{figures/GV_cells_2_nodes/4_far_text_es.pdf} }
	\fi
	
	\caption{Dependence of the GV cells on $\left\|q_i-q_j\right\|$.}
	\label{fig:cells_distance}
\end{figure}

\begin{figure}[htb]
	\centering
	\ifx\singlecol\undefined
		\includegraphics[width=0.155\textwidth]{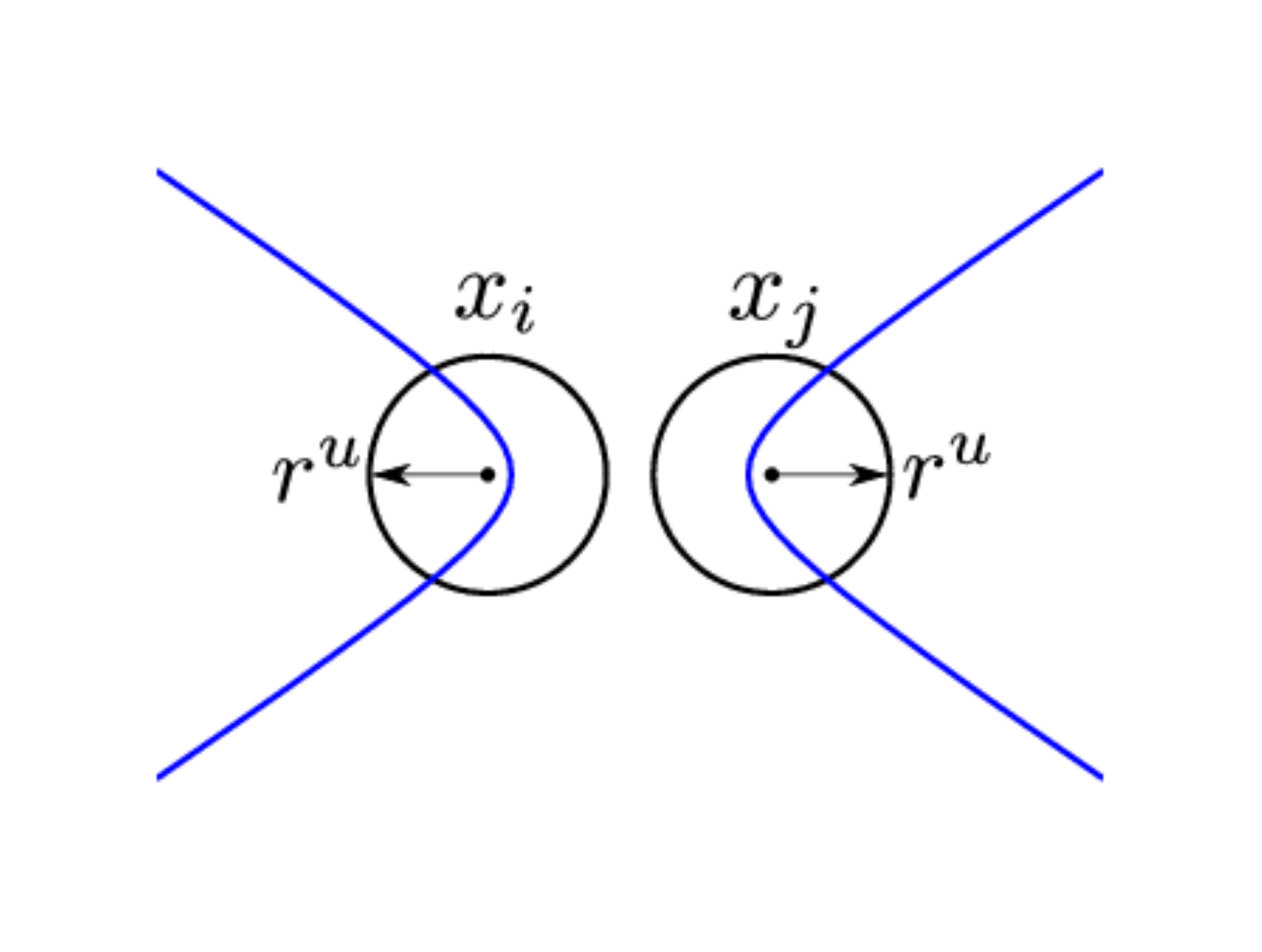}
		\includegraphics[width=0.155\textwidth]{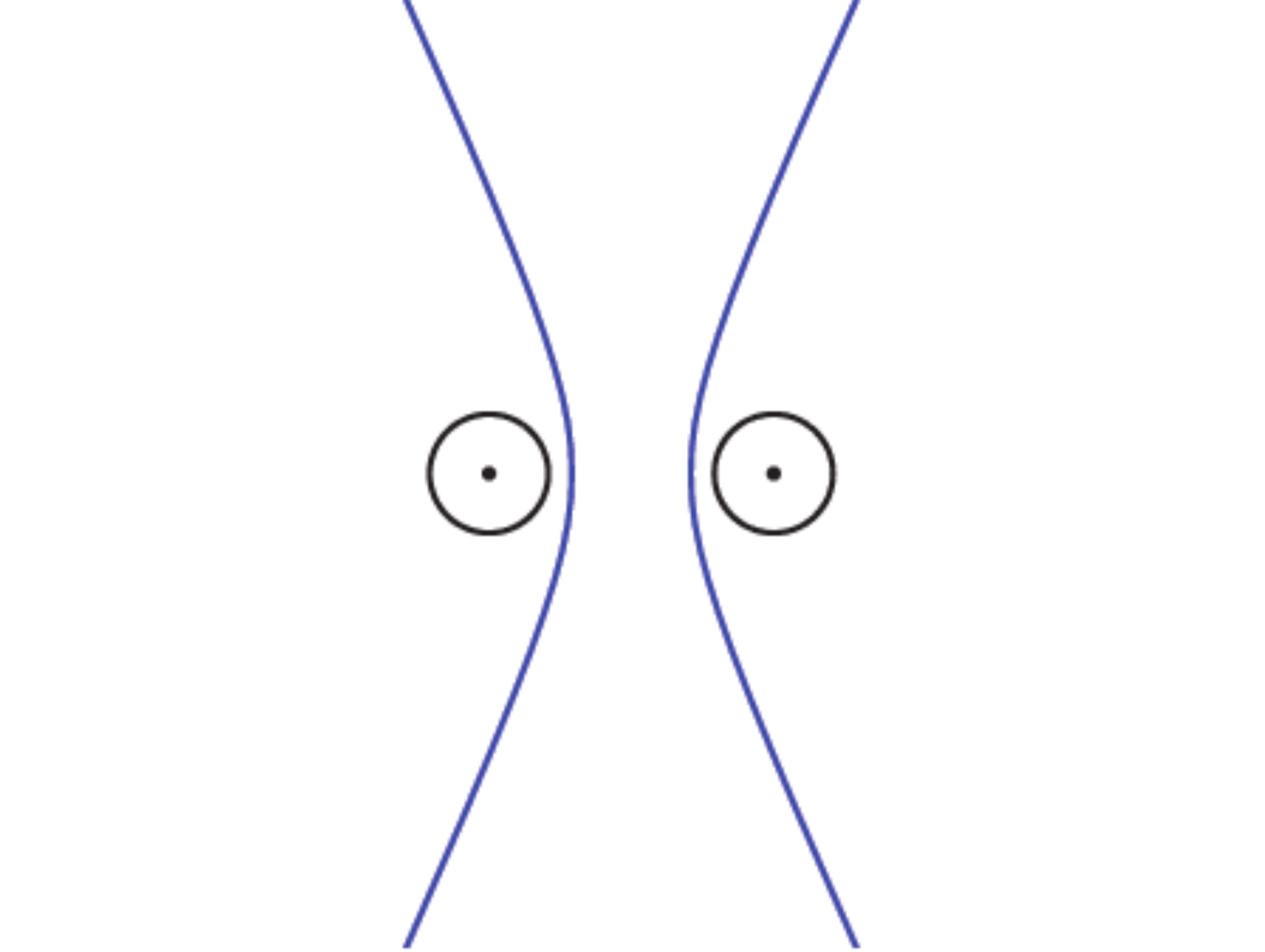}
		\includegraphics[width=0.155\textwidth]{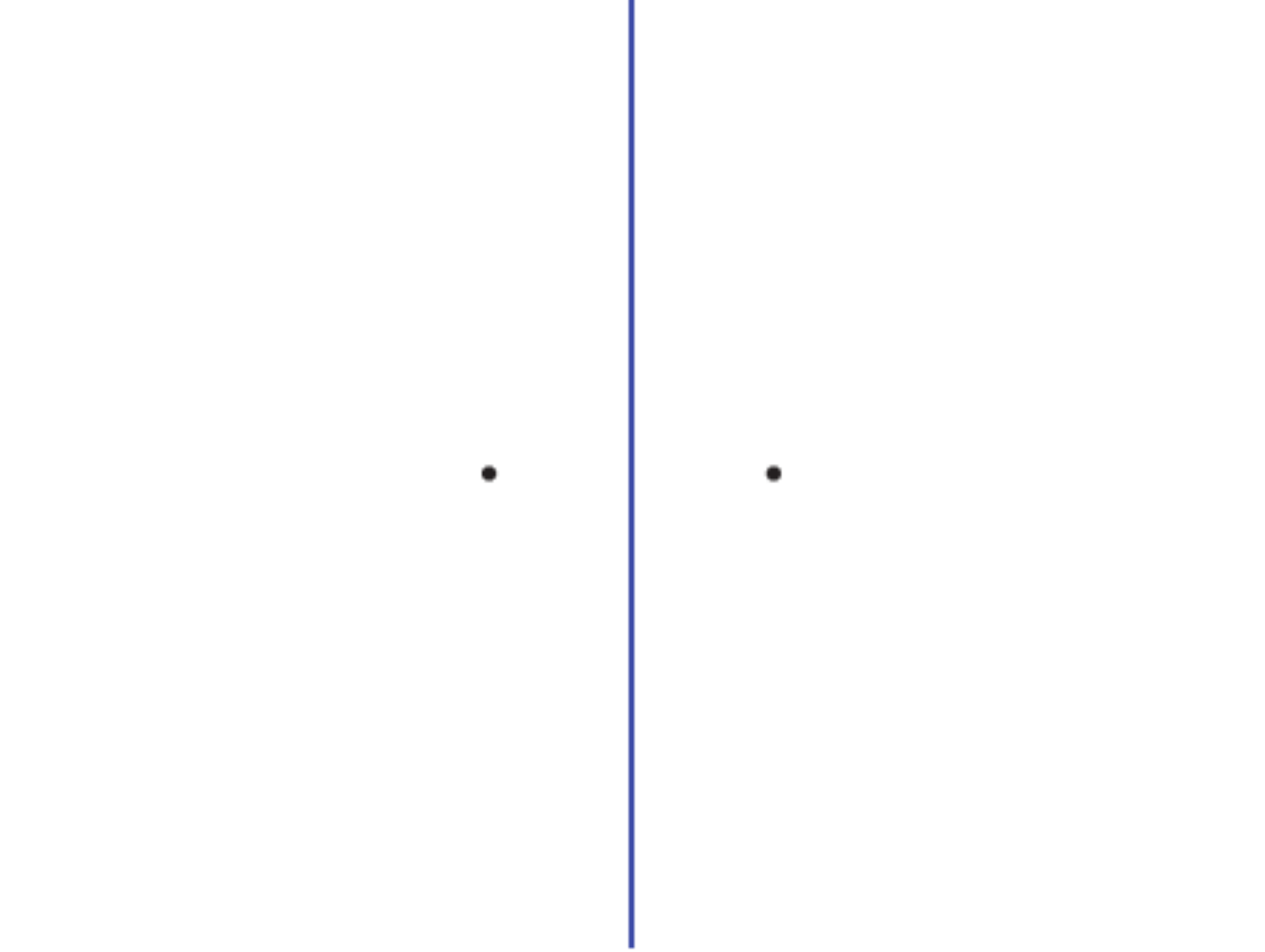}
	\else
		\includegraphics[width=0.3\textwidth]{figures/GV_cells_2_nodes_radii/1-large_text_es.pdf}
		\includegraphics[width=0.3\textwidth]{figures/GV_cells_2_nodes_radii/2-small_es.pdf}
		\includegraphics[width=0.3\textwidth]{figures/GV_cells_2_nodes_radii/3-zero_es.pdf}
	\fi
	
	\caption{Dependence of the GV cells on $r_i + r_j$.}
	\label{fig:cells_dimensions}
\end{figure}

It has been also shown that the Guaranteed Delaunay neighbors $N_i^g$ are a subset of the classic Delaunay neighbors of the uncertain disks' centers. The two sets $N_i$ and $N_i^g$ are equal when the GV diagram is constructed for the whole plane, however when the GV diagram is constrained within a subset $\Omega$ of the plane, $N_i^g$ can be a subset of $N_i$. Such a case is shown in Figure \ref{fig:V_GV_comparison} where nodes 1 and 5 are Delaunay neighbors but not Guaranteed Delaunay neighbors. This way, the GV diagram of disks can be constructed in a manner similar to the classic Voronoi diagram. First the Delaunay triangulation of the disks' centers is constructed and then each cell $V_i^g$ is constructed by intersecting the regions bound by hyperbolic arcs $H_{ij} \forall j \in N_i^g$. This process is shown in Figure \ref{fig:GV_construction}.

\begin{figure}[htb]
	\centering
	\ifx\singlecol\undefined
		\includegraphics[width=0.4\textwidth]{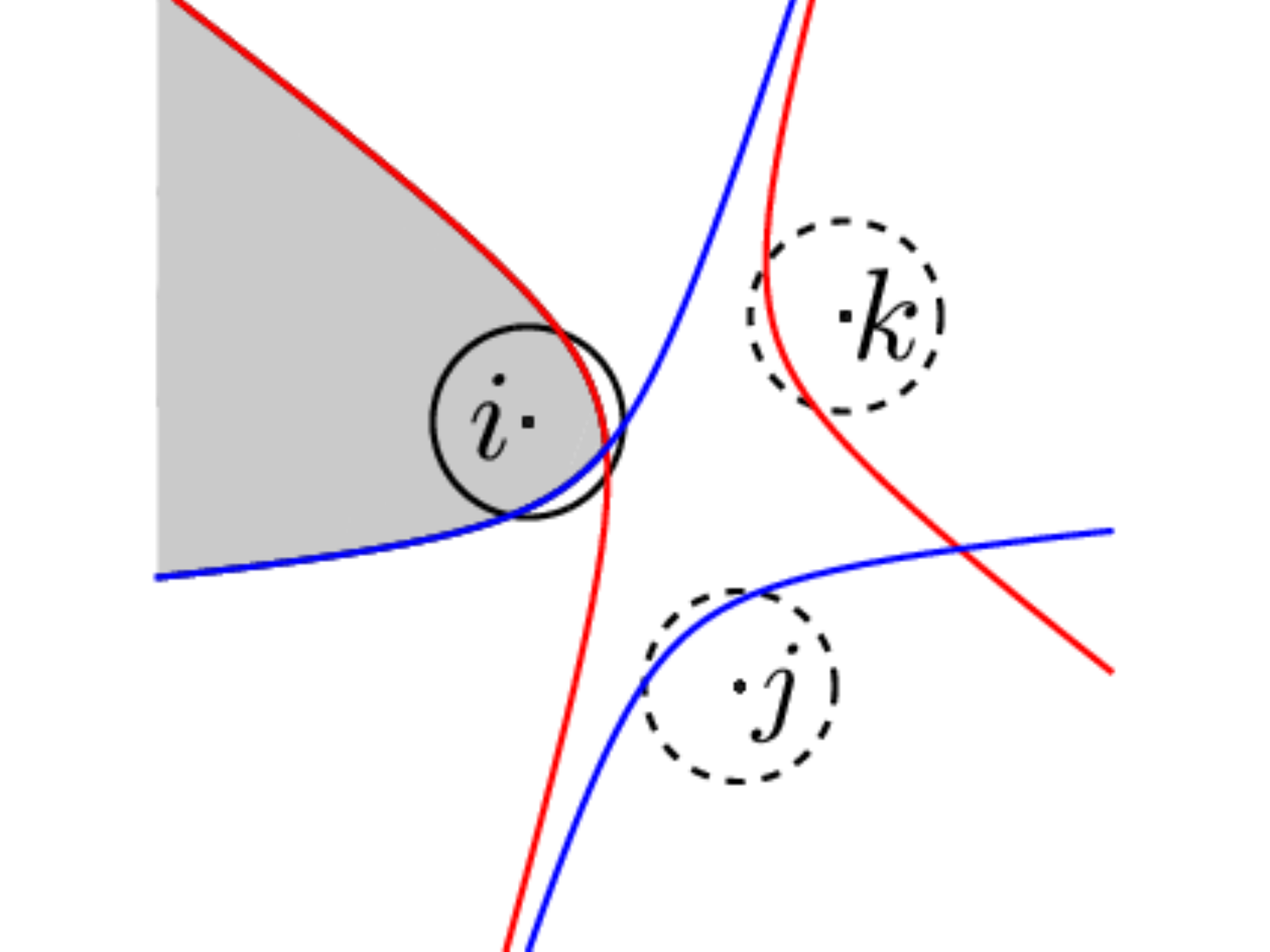}
	\else
		\includegraphics[width=0.8\textwidth]{figures/GV_construction_text.pdf}
	\fi
	\caption{The construction of a GV cell by the hyperbolic region intersection method.}
	\label{fig:GV_construction}
\end{figure}

It is useful to define Guaranteed-Sensing Voronoi (GSV) cells as
\begin{equation}
	V_i^{gs}=V_i^g\cap C_{i}^{gs},~~i\in I_n,
\label{eq:guaranteed sensing cells}
\end{equation}
which are the parts of the GV cells that are guaranteed to be sensed by their respective nodes. It can be shown that since GV cells are disjoint, GSV cells are also disjoint regions.

\section{Problem Statement - Distributed Control Law}
\label{sec:problem_law}
By taking into account the GV partitioning of the space, the following area coverage criterion is constructed
\begin{equation}
	\mathcal{H} =
	\int_{\bigcup_{i\in I_n}V_i^{gs}} \phi\left(q\right)\mathrm{d}q=
	\sum\limits_{i\in I_n}\int_{V_i^{gs}}\phi\left(q\right)\mathrm{d}q=
	\sum\limits_{i\in I_n}\mathcal{H}_{i},
	\label{coverage_criterion}
\end{equation}
where the function $\phi\colon \mathbb{R}^2\rightarrow \mathbb{R}_+$ is related to the a~priori knowledge of importance of a point $q\in\Omega$ indicating the probability of an event to take place at $q$ in a coverage scenario. This criterion expresses the area that is guaranteed to be closest to and covered by the nodes. 

A distributed control law is designed in order to maximize the coverage criterion (\ref{coverage_criterion}) while taking into account the nodes' guaranteed sensing pattern (\ref{eq:guaranteed_sensing}) and the GV partitioning (\ref{eq:GV_partitioning}). It is shown that this control law leads to a monotonic increase of the coverage criterion.

\begin{theorem}
For a network of mobile nodes with uncertain localization as in (\ref{eq:positioning_uncertainty}), uniform range--limited radial sensing performance as in (\ref{eq:guaranteed_sensing}), governed by the individual agent's kinodynamics described in (\ref{eq:dynamics}), and the GV--partitioning of $\Omega$ defined in (\ref{eq:GV_partitioning}), the coordination scheme
\begin{eqnarray}
	\nonumber
	u_i &=& \alpha_i \int_{\partial V_i^{gs} \cap\partial C_i^{gs}}n_i\,\phi\,\mathrm{d}q
	+ 
	\\
	&& \hspace{-2cm} \alpha_i  \sum_{j\in N_i^g}\left[\int_{\partial V_i^{gs} \cap \partial H_{ij}} \upsilon_i^i\,n_i\,\phi\,\mathrm{d}q + \int_{\partial V_j^{gs} \cap \partial H_{ji}} \upsilon_j^i\,n_j\,\phi\,\mathrm{d}q\right]
	\label{initial_control_law}
\end{eqnarray}
where $n_i$ is the outward unit normal on $\partial V_i^{gs}$ and $\alpha_i$ a positive constant, maximizes the performance criterion (\ref{coverage_criterion}) along the nodes' trajectories in a monotonic manner, leading in a locally area--optimal configuration of the network.
\label{thm:initial_law}
\end{theorem}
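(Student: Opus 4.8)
The plan is to show that the proposed law is exactly a gradient ascent on $\mathcal{H}$, so that $\dot{\mathcal{H}}$ along the closed loop is a sum of squares. Since $\mathcal{H}=\sum_{i\in I_n}\mathcal{H}_i$ with $\mathcal{H}_i=\int_{V_i^{gs}}\phi\,\mathrm{d}q$ as in (\ref{coverage_criterion}), and each node obeys $\dot q_i=u_i$ from (\ref{eq:dynamics}), the chain rule gives $\dot{\mathcal{H}}=\sum_{i\in I_n}\bigl(\partial\mathcal{H}/\partial q_i\bigr)\cdot u_i$, so the task reduces to computing the gradient $\partial\mathcal{H}/\partial q_i$ and checking it coincides with the bracketed vector defining $u_i$ in (\ref{initial_control_law}) up to the positive scalar $\alpha_i$.

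To compute $\partial\mathcal{H}/\partial q_i$, note first that moving $q_i$ alters only the cell $V_i^{gs}$ and the cells $V_j^{gs}$ with $j\in N_i^g$: the hyperbolic arcs bounding $V_i^g$ have $q_i$ as a focus, and by the guaranteed Delaunay property (\ref{eq:guaranteed_delaunay})--(\ref{eq:GV_partitioning}) the arcs of any non-neighbor are independent of $q_i$; hence $\partial\mathcal{H}/\partial q_i=\partial\mathcal{H}_i/\partial q_i+\sum_{j\in N_i^g}\partial\mathcal{H}_j/\partial q_i$. Each summand is the derivative of the integral of the fixed density $\phi$ over a region with a $q_i$-dependent boundary, so Leibniz's rule converts it into an integral over that boundary of $\phi$ times the normal velocity induced by the motion of $q_i$. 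For $\mathcal{H}_i$, decompose $\partial V_i^{gs}$ into the circular part on $\partial C_i^{gs}$, the hyperbolic parts on $\partial H_{ij}$, $j\in N_i^g$, and the part on $\partial\Omega$: the $\partial\Omega$ part is stationary and contributes nothing; on $\partial C_i^{gs}$ the boundary translates rigidly with $q_i$, which yields $\int_{\partial V_i^{gs}\cap\partial C_i^{gs}}n_i\,\phi\,\mathrm{d}q$; and on each arc $\partial H_{ij}$ the deformation rate is captured by the Jacobian of the arc's parametric description (from the Appendix) with respect to $q_i$, i.e.\ the matrix $\upsilon_i^i$, yielding $\int_{\partial V_i^{gs}\cap\partial H_{ij}}\upsilon_i^i\,n_i\,\phi\,\mathrm{d}q$. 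For a neighbor $j\in N_i^g$, the disk $C_j^{gs}$ is independent of $q_i$, so the only $q_i$-dependent piece of $\partial V_j^{gs}$ is the arc $\partial H_{ji}$, whose sensitivity to $q_i$ is the matrix $\upsilon_j^i$, producing $\int_{\partial V_j^{gs}\cap\partial H_{ji}}\upsilon_j^i\,n_j\,\phi\,\mathrm{d}q$. Collecting these contributions shows $\partial\mathcal{H}/\partial q_i$ equals the bracketed expression in (\ref{initial_control_law}), hence $u_i=\alpha_i\,\partial\mathcal{H}/\partial q_i$.

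Substituting into the chain-rule expression gives $\dot{\mathcal{H}}=\sum_{i\in I_n}\alpha_i\,\|\partial\mathcal{H}/\partial q_i\|^2\ge 0$, so $\mathcal{H}$ increases monotonically along every trajectory; being an integral of an integrable density over the compact set $\Omega$ it is bounded above, hence convergent, and LaSalle's invariance principle confines the network to the set where $\partial\mathcal{H}/\partial q_i=0$ for all $i$, i.e.\ a critical point of $\mathcal{H}$ --- a locally area-optimal configuration. I expect the main obstacle to be the careful bookkeeping in the Leibniz step: one must obtain the Jacobians $\upsilon_i^i,\upsilon_j^i$ explicitly from the hyperbola parametrization, and verify that the extra terms arising at the moving endpoints where an arc $\partial H_{ij}$ meets $\partial C_i^{gs}$ or $\partial\Omega$ cancel, which holds because adjacent boundary pieces share those endpoints and $\phi$ is continuous there. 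Degenerate cases (overlapping or tangent uncertainty disks, or $C_i^{gs}=\emptyset$) are excluded or treated as limits under the standing assumptions $r^s\ge r_i^u$ and disjointness of the $C_i^u$.
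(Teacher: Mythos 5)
Your proposal is correct and follows essentially the same route as the paper: chain rule to reduce the claim to showing $u_i=\alpha_i\,\partial\mathcal{H}/\partial q_i$, restriction of the sum to guaranteed Delaunay neighbors, the generalized Leibniz rule to convert the area derivatives to boundary integrals, the same three-way decomposition of $\partial V_i^{gs}$ (with zero Jacobian on $\partial\Omega$, identity on $\partial C_i^{gs}$, and $\upsilon_i^i,\upsilon_j^i$ on the hyperbolic arcs), and the sum-of-squares expression for $\dot{\mathcal{H}}$. Your added LaSalle step and the remark about endpoint-term cancellation go slightly beyond what the paper records, but the argument is the same.
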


\begin{proof}
In order to guarantee monotonous increase of the coverage criterion (\ref{coverage_criterion}), we evaluate its time derivative
\begin{equation}
\frac{\partial \mathcal{H}}{\partial t}  = \sum_{i \in I_n} \frac{\partial \mathcal{H}_i}{\partial q_i} \dot{q}_i = \sum_{i \in I_n} \frac{\partial \mathcal{H}_i}{\partial q_i} u_i
\end{equation}

By using the gradient--based control law
\begin{equation}
u_i = \frac{\partial \mathcal{H}_i}{\partial q_i}, ~i \in I_n,
\end{equation}
monotonous increase of $\mathcal{H}$ is achieved.
 
We now evaluate the partial derivative of $\mathcal{H}$ with respect to $q_i$ as
\begin{equation*}
\frac{\partial\mathcal{H}}{\partial q_i}=
\frac{\partial}{\partial q_i}\int_{V_i^{gs}}\phi  dq+ \sum_{j \in N_i^g}\frac{\partial}{\partial q_i}\int_{V_j^{gs}}\phi dq.
\end{equation*}
Considering the second summation term, infinitesimal motion of $q_i$ may only affect $\partial V_j^{gs}$ at $\partial V_j^{gs} \cap \partial \mathcal{O}$, where $\mathcal{O}$ is the neutral region defined in Remark \ref{rem:neutral}, since both hyperbola branches are affected by alteration of one of the foci. In addition, only the Guaranteed Delaunay neighbors (\ref{eq:guaranteed_delaunay}) of $i$ are considered in the summation, as a major property of GV--partitioning. Therefore, the former expression can be written via the generalized Leibniz integral rule \cite{Flanders_AMM73} (by converting surface integrals to line ones) as
\begin{equation*}
\frac{\partial \mathcal{H}}{\partial q_i}=
\int_{\partial V_i^{gs}}\upsilon_i^i\,n_i\,\phi dq 
+ 
\sum_{j \in N_i^g}\int_{\partial V_j^{gs} \cap \partial \mathcal{O}} \upsilon_j^i\,n_j\,\phi dq,
\end{equation*}
where $\upsilon_i^i,\upsilon_j^i$ stand for the transpose Jacobian matrices with respect to $q_i$ of the points $q\in \partial V_i^{gs}$, $q\in \partial V_j^{gs}$, respectively, i.e.
\begin{equation}
	\upsilon_j^i\left(q\right) \stackrel{\triangle}{=} { \frac{\partial q}{\partial q_i} }^T,~~q\in \partial V_j^{gs},~i,j\in I_n.
	\label{ch3:eq:upsilon tilde}
\end{equation}
The boundary $\partial \tilde{V}_i^{gs}$ can be decomposed in disjoint sets as
\begin{equation}
\hspace{-0.05cm}\partial V_i^{gs}=
\left\{\partial V_i^{gs}\cap\partial\Omega\right\}
\cup 
\left\{\partial V_i^{gs}\cap\partial C_i^{gs}\right\}
\cup
\left\{\partial V_i^{gs} \cap \partial \mathcal{O} \right\}.
\end{equation}
These sets represent the parts of $\partial V_i^{gs}$ that lie on the boundary of $\Omega$, the boundary of the node's guaranteed sensing region, and the boundary of the unassigned neutral region of $\Omega$, respectively. This decomposition is shown in Figure \ref{fig:boundary_decomposition} with $\partial V_i^{gs}\cap\partial\Omega$ in green, $\partial V_i^{gs}\cap\partial C_i^{gs}$ in solid red and $\partial V_i^{gs} \cap \partial \mathcal{O}$ in solid blue.

\begin{figure}[htb]
	\centering
	\ifx\singlecol\undefined
		\includegraphics[width=0.4\textwidth]{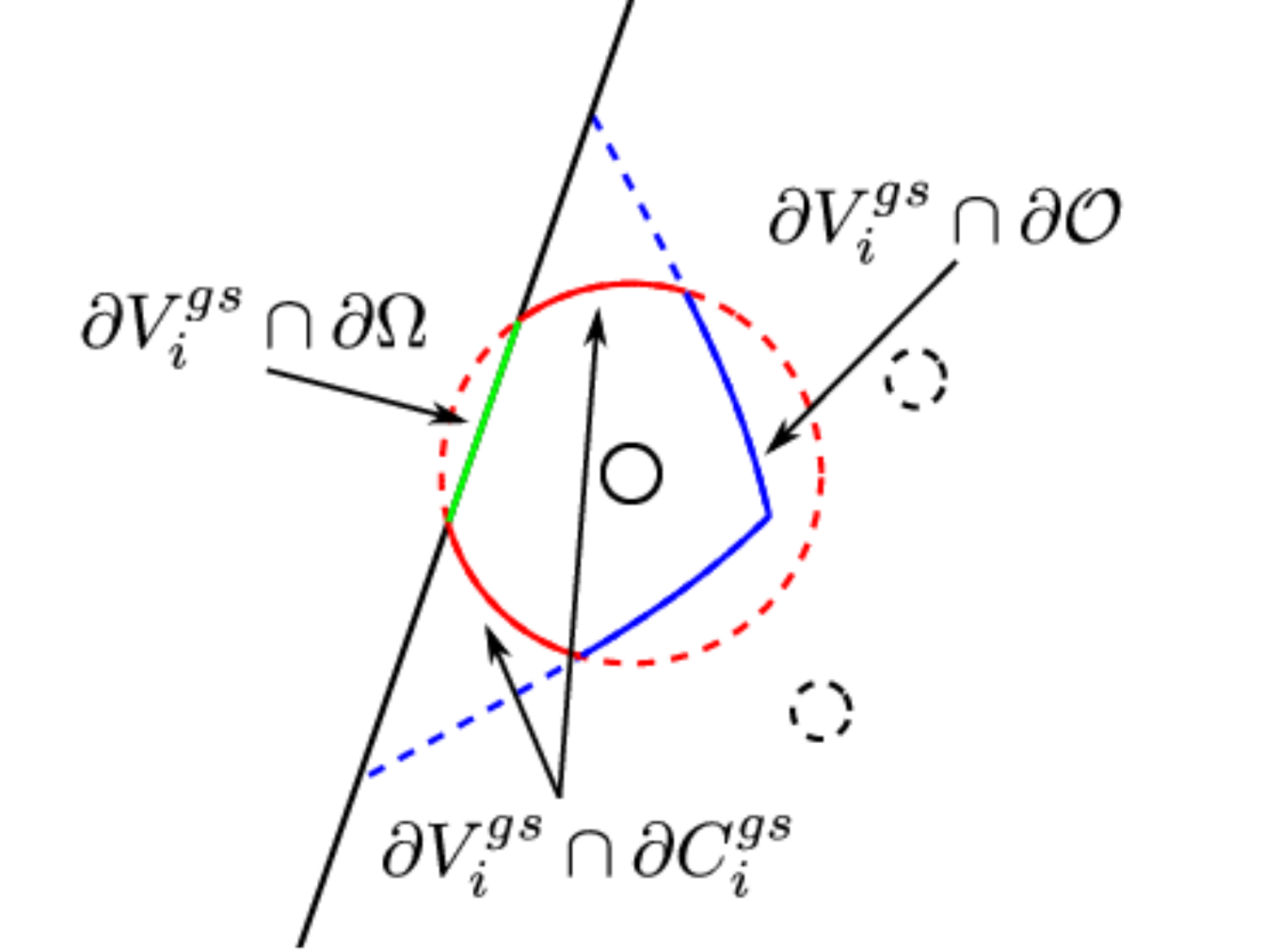}
	\else
		\includegraphics[width=0.8\textwidth]{figures/boundary_decomposition_text.pdf}
	\fi
	
	\caption{The decomposition of $\partial V_i^{gs}$ into disjoint sets (solid red, green and blue).}
	\label{fig:boundary_decomposition}
\end{figure}

Hence $\frac{\partial \mathcal{H}}{\partial q_i}$ can be written as
\begin{eqnarray}
		\frac{\partial \mathcal{H}}{\partial q_i} &=&
\int_{\partial V_i^{gs} \cap \partial \Omega}\upsilon_i^i\,n_i\,\phi\,\mathrm{d}q
+
\int_{\partial V_i^{gs} \cap \partial C_i^{gs}}\upsilon_i^i\,n_i\,\phi\,\mathrm{d}q
+
\nonumber
\\
&& \hspace{-1cm}
\int_{\partial V_i^{gs} \cap \partial \mathcal{O}} \upsilon_i^i\,n_i\,\phi\,\mathrm{d}q 
+
\sum_{j \in N_i^g}\int_{\partial V_j^{gs} \cap \partial \mathcal{O}} \upsilon_j^i\,n_j\,\phi\,\mathrm{d}q.
\end{eqnarray}
It is apparent that $\upsilon_i^i=0$ at $q\in \partial V_i^{gs}\cap\Omega$ since all $q\in\partial \Omega$ remain unaltered by infinitesimal motions of $q_i$, assuming no alteration of the environment. 
Considering the second integral, for any point $q \in \partial\tilde{V}_i^{gs}\cap\partial C_i^{gs}$ it holds that $\upsilon_i^i\left(q\right)=\mathbb{I}_2$, where $\mathbb{I}$ stands for the identity matrix, since they translate along the direction of motion of $q_i$ at the same rate. 
As far as the sets $\partial V_i^{gs} \cap \partial \mathcal{O}$ and $\partial V_j^{gs} \cap \partial \mathcal{O},~j\in N_i^g$ are concerned, they can be merged in pairs via utilization of the left and right hyperbolic branches, as introduced in GV--partitioning (\ref{eq:GV_partitioning}), as follows
\begin{eqnarray}
\nonumber
\int_{\partial V_i^{gs} \cap \partial \mathcal{O}} \upsilon_i^i\,n_i\,\phi\,\mathrm{d}q 
+
\sum_{j \in N_i^g}\int_{\partial V_j^{gs} \cap \partial \mathcal{O}} \upsilon_j^i\,n_j\,\phi\,\mathrm{d}q=
\\
\sum_{j \in N_i^g}\left[\int_{\partial V_i^{gs} \cap \partial H_{ij}} \upsilon_i^i\,n_i\,\phi\,\mathrm{d}q + \int_{\partial V_j^{gs} \cap \partial H_{ji}} \upsilon_j^i\,n_j\,\phi\,\mathrm{d}q\right]~.
\end{eqnarray}

However, for any two Delaunay neighbors $i,j$ it can be observed that 
\begin{itemize}
	\item The hyperbolic branches $\partial H_{ij}$ and $\partial H_{ji}$ are symmetric with respect to the perpendicular bisector of $q_i,q_j$, and are governed by the same set of parametric equations (left and right branch).
	\item The vectors $n_j$ are mirrored images of the corresponding $n_i$ with respect to the perpendicular bisector of $q_i,q_j$.
\end{itemize}

Taking into account the above, $\frac{\partial\mathcal{H}}{\partial q_i}$ can be written as
\begin{eqnarray}
\nonumber
\frac{\partial\mathcal{H}}{\partial q_i}&=&\int_{\partial V_i^{gs} \cap\partial C_i^{gs}}n_i\,\phi\,\mathrm{d}q
+ 
\\
&& \hspace{-2cm}\sum_{j \in N_i^g}\left[\int_{\partial V_i^{gs} \cap \partial H_{ij}} \upsilon_i^i\,n_i\,\phi\,\mathrm{d}q + \int_{\partial V_j^{gs} \cap \partial H_{ji}} \upsilon_j^i\,n_j\,\phi\,\mathrm{d}q\right]~.
\label{criterion_derivative}
\end{eqnarray}

The unit normal vectors $n_i,n_j$ and Jacobian matrices $\upsilon_i^i,\upsilon_j^i$ in the second part of (\ref{criterion_derivative}) can be evaluated via utilization of the parametric representations of the sets over which the integration takes place. These sets are parts of the left ($\partial H_{ij}$) and right ($\partial H_{ji}$) branch of the hyperbola that assigns the space among two arbitrary nodes $i,j$. The second term of the sum in (\ref{criterion_derivative}) requires knowledge of all the nodes in $\bigcup_{j \in N_i^g} N_j^g \subseteq I_n$ and so the control law is distributed. 

The decomposition of the set $\partial V_i^{gs} \cap \partial \mathcal{O}$ of node $i$ into mutually disjoint hyperbolic arcs $\partial V_i^{gs} \cap \partial H_{ij}$ and $\partial V_i^{gs} \cap \partial H_{ik}$ is shown in Figure \ref{fig:control_decomposition}. Figure \ref{fig:control_decomposition} also shows the other domains of integration used in (\ref{criterion_derivative}) $\partial V_i^{gs} \cap \partial C_{i}^{gs}$, $\partial V_j^{gs} \cap \partial H_{ji}$ and $\partial V_k^{gs} \cap \partial H_{ki}$.

\begin{figure}[htb]
	\centering
	\ifx\singlecol\undefined
		\includegraphics[width=0.4\textwidth]{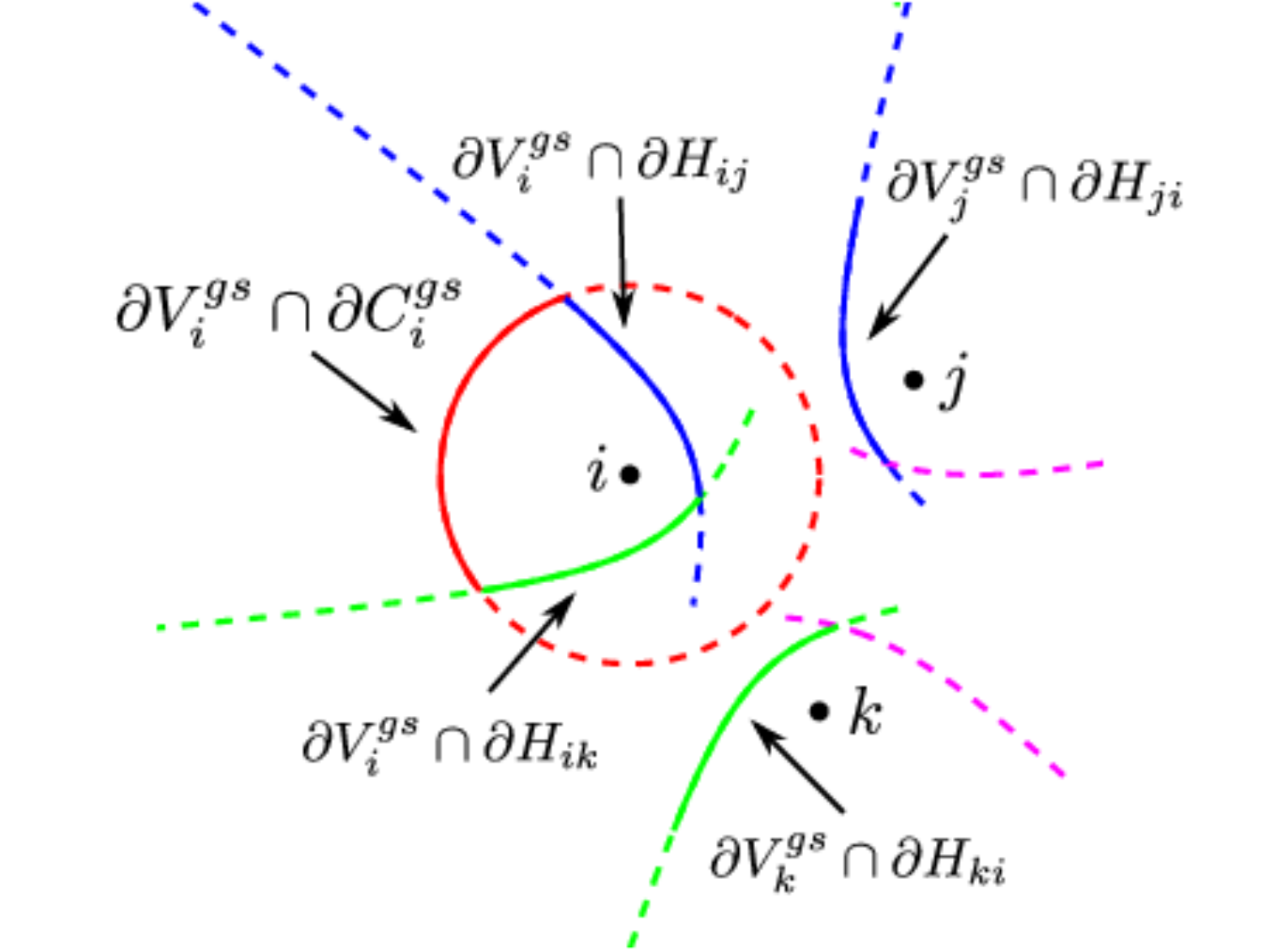}
	\else
		\includegraphics[width=0.8\textwidth]{figures/control_decomposition2_text.pdf}
	\fi
	
	\caption{The domains of integration of the control law (\ref{initial_control_law}).}
	\label{fig:control_decomposition}
\end{figure}

The proposed law (\ref{initial_control_law}) leads to a gradient flow of $\mathcal{H}$ along the nodes trajectories, while $\mathcal{H}$ increases monotonically, since
\[\frac{d\mathcal{H}}{dt}=\sum_{i\in I_n}\frac{\partial\mathcal{H}}{\partial q_i}\cdot \dot{q}_i=\sum_{i\in I_n}\left\|u_i\right\|^2\geq 0.\]
\end{proof}

\begin{remark}
It is not immediately clear from the control law (\ref{initial_control_law}) the information node $i$ requires in order to implement it. The integrals over $\partial V_i^{gs} \cap\partial C_i^{gs}$ and $\partial V_i^{gs} \cap\partial H_{ij}$ require knowledge of the Guaranteed Delaunay neighbors $N_i^g$ in order to compute $V_i^{gs}$ and $H_{ij}$. However the integral over $\partial V_j^{gs} \cap\partial H_{ji}$ requires knowledge of the whole cell $V_j^{gs}$ of node $i$ and as such requires informations form the Guaranteed Delaunay neighbors $N_j^g$ of node $j$. Thus node $i$ requires information from all nodes in
\begin{equation*}
\tilde{N}_i^g = \bigcup_{j \in N_i^g} N_j^g,
\end{equation*}
which are the Guaranteed Delaunay neighbors of all the Guaranteed Delaunay neighbors of node $i$. 
\label{rem:required_info}
\end{remark}

\subsection{Constraining Nodes Inside $\Omega$}
\label{sec:constrain_movement}

When using this control law, it is possible that a portion of some node's positioning uncertainty region lies outside the region $\Omega$, i.e. that $C_i^u \cap \Omega \neq C_i^u$ for some node $i$. As a result, it would be possible for that node to be outside the region $\Omega$ given that it may be anywhere within $C_i^u$.

In order to prevent this situation, instead of $\Omega$, a subset $\Omega_i^s \subseteq \Omega$ will be used, in the general case different for each node since their uncertainty radii are allowed to differ. This subset is calculated as the Minkowski difference of $\Omega$ with the disk $C\left(r_i^u\right) = \left\{ q \in \Omega \colon \parallel q \parallel \leq r_i^u \right\}$.

\begin{equation*}
\Omega_i^s = \left\{ q \in \Omega ~|~ q + C(r_i^u) \subseteq \Omega \right\}, i \in I_n.
\end{equation*}

Thus by constraining the center of each uncertainty disk $q_i$ inside $\Omega_i^s$, it is guaranteed that no part of the uncertainty disk will ever be outside $\Omega$. This is achieved by projecting the velocity control input $u_i$ on $\partial \Omega_i^s$. This is only needed if $q_i \in \partial \Omega_i^s$ and the contorl input $u_i$ points towards the outside of $\Omega_i^s$. As a result, the control law becomes
\begin{equation}
\tilde{u_i} = 
\begin{cases}
\text{proj}_{\partial \Omega_i^s} u_i & \text{if} q_i \in \partial \Omega_i^s ~\wedge~ q_i+\epsilon~u_i \notin \Omega_i^s \\
u_i & \text{otherwise}
\end{cases}
\label{eq:law_constrain}
\end{equation}
where $\epsilon$ is an infinitesimally small positive constant.

\subsection{Suboptimal Control Law}
\label{sec:control_suboptimal}
Because of the high complexity of the integrals over $H_{ij}$ and $H_{ji}$ of the control law (\ref{initial_control_law}), the large amount of information required in order to implement it, as described in Remark \ref{rem:required_info}, as well as the problems described in Section \ref{sec:constrain_movement}, a simplified version of the control law is proposed.

\begin{conjecture}
The computationally efficient suboptimal control law 
\begin{equation}
u_i = \alpha_i \int_{\partial V_i^{gs} \cap\partial C_i^{gs}}n_i\,\phi\,\mathrm{d}q
\label{eq:control_suboptimal}
\end{equation}
leads the nodes to suboptimal trajectories.
\end{conjecture}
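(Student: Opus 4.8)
The plan starts from the observation that this statement is flagged as a \emph{conjecture} rather than a theorem precisely because the law (\ref{eq:control_suboptimal}) is only a truncation of the true gradient (\ref{criterion_derivative}), so one cannot hope for the clean monotonicity argument used in Theorem~\ref{thm:initial_law}. A supporting argument should therefore (i) make explicit which part of $\partial\mathcal{H}/\partial q_i$ is discarded, (ii) isolate the operating regime in which nothing is discarded — where the conjecture collapses onto Theorem~\ref{thm:initial_law} — and (iii) estimate the discarded contribution in the remaining regime, so that ``suboptimal trajectories'' can be read as ``trajectories along which $\mathcal{H}$ still typically increases and the network still settles to a (generally suboptimal) configuration.'' Concretely, I would write (\ref{criterion_derivative}) as $\frac{\partial\mathcal{H}}{\partial q_i}=S_i+R_i$, with $S_i=\int_{\partial V_i^{gs}\cap\partial C_i^{gs}}n_i\,\phi\,\mathrm{d}q$ the guaranteed-sensing boundary term kept in (\ref{eq:control_suboptimal}) and $R_i=\sum_{j\in N_i^g}\bigl[\int_{\partial V_i^{gs}\cap\partial H_{ij}}\upsilon_i^i\,n_i\,\phi\,\mathrm{d}q+\int_{\partial V_j^{gs}\cap\partial H_{ji}}\upsilon_j^i\,n_j\,\phi\,\mathrm{d}q\bigr]$ the hyperbolic-arc residual. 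Substituting $u_i=\alpha_i S_i$ gives
\[ \frac{d\mathcal{H}}{dt}=\sum_{i\in I_n}\frac{1}{\alpha_i}\|u_i\|^2+\sum_{i\in I_n} R_i\cdot u_i, \]
so the retained term always contributes the nonnegative quantity $\sum_i\|u_i\|^2/\alpha_i$, and everything hinges on the sign and size of the coupling term $\sum_i R_i\cdot u_i$.

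The key step I would develop is that $R_i\equiv 0$ for every node whenever $\partial V_i^{gs}\cap\partial\mathcal{O}=\emptyset$, i.e.\ whenever each guaranteed-sensing disk $C_i^{gs}$ meets the boundary of its GV cell only along $\partial\Omega$ and never along a hyperbolic arc. Using (\ref{eq:guaranteed_sensing}) together with the disk geometry of Section~\ref{sec:gvoronoi}, this holds as soon as the guaranteed sensing radii $r^s-r_i^u$ are small relative to the inter-agent gaps (the ``sparse deployment'' regime). In that regime the suboptimal law (\ref{eq:control_suboptimal}) coincides verbatim with the optimal law (\ref{initial_control_law}), hence $\frac{d\mathcal{H}}{dt}=\sum_i\|u_i\|^2/\alpha_i\ge 0$ and, since $\mathcal{H}\le\int_\Omega\phi\,\mathrm{d}q$, a LaSalle-type argument gives convergence to the set where all $u_i=0$ — exactly the locally area-optimal configurations of Theorem~\ref{thm:initial_law}, so there the ``suboptimal'' trajectories are in fact optimal. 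For the complementary dense regime I would bound $\|R_i\|$ from the parametric representation of $\partial H_{ij},\partial H_{ji}$ in the Appendix: the arcs lie in the bounded set $\Omega$, so $\|R_i\|\le L_i\,(\max_\Omega\phi)\,\sup\|\upsilon\|$ with $L_i$ the total hyperbolic arc length on $\partial V_i^{gs}$, which shows $\mathcal{H}$ still increases whenever the footprint-expansion terms $\|S_i\|$ dominate the coupling terms $\|R_i\|$ — the generic situation once the nodes begin stretching their sensed regions toward regions of large $\phi$.

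The main obstacle is the sign of $\sum_i R_i\cdot u_i$ in the dense regime: there is no structural reason forcing it to be nonnegative, and one can plausibly engineer near-degenerate configurations (clustered uncertainty disks, long and strongly curved hyperbolic arcs) in which it overwhelms $\sum_i\|u_i\|^2/\alpha_i$, so strict monotonicity of $\mathcal{H}$ genuinely fails for (\ref{eq:control_suboptimal}) in general. For this reason I would not try to promote the statement to a theorem; the honest route is to prove rigorously the sparse-regime equivalence above and to support the general assertion — that (\ref{eq:control_suboptimal}) still drives the network to a reasonable, if not locally optimal, configuration at a small fraction of the computational and communication cost of (\ref{initial_control_law}), cf.\ Remark~\ref{rem:required_info} — through the simulation studies of Section~\ref{sec:simulations}, which is exactly why it is stated as a conjecture.
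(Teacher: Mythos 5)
The first thing to note is that the paper does not prove this statement at all: it is deliberately labelled a \emph{conjecture}, no argument is given beyond the motivating discussion in Remark~\ref{rem:required_info} and Section~\ref{sec:constrain_movement}, and the only support offered is empirical, via the simulation studies of Section~\ref{sec:simulations} and the experiments of Section~\ref{sec:experiments}. So there is no paper proof to match your attempt against; your proposal goes strictly further than the paper does, and your closing paragraph correctly diagnoses why the authors stopped at a conjecture rather than a theorem.

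On the substance of your plan: the decomposition $\partial\mathcal{H}/\partial q_i=S_i+R_i$ is the right way to relate (\ref{eq:control_suboptimal}) to (\ref{criterion_derivative}), the identity $\frac{d\mathcal{H}}{dt}=\sum_i\|u_i\|^2/\alpha_i+\sum_i R_i\cdot u_i$ is correct, and the observation that $R_i$ vanishes for every $i$ precisely when no guaranteed sensing disk reaches a hyperbolic arc of its cell (so that $\partial V_i^{gs}\cap\partial\mathcal{O}=\emptyset$ for all $i$, which also kills the neighbour terms over $\partial V_j^{gs}\cap\partial H_{ji}$) is sound and is the one piece that could be turned into a rigorous lemma, recovering Theorem~\ref{thm:initial_law} in that regime. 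Two caveats if you pursue it. First, the sparse regime is not obviously forward-invariant: even if every $C_i^{gs}$ starts strictly inside its GV cell, the cells deform as the nodes move, so you need an additional invariance argument (or must restrict the claim to ``for as long as the configuration remains sparse'') before invoking LaSalle. Second, your dense-regime estimate $\|R_i\|\le L_i\,(\max_\Omega\phi)\,\sup\|\upsilon\|$ controls magnitude but not sign, so it cannot by itself certify $d\mathcal{H}/dt\ge 0$; it only says monotonicity survives when $\|S_i\|$ dominates, which is a hypothesis rather than a conclusion, and the claim that this is ``generic'' would itself need support. Neither caveat is fatal to the proposal as written, since you explicitly decline to promote the statement to a theorem and fall back on the simulations for the general case --- which is exactly the paper's own position.
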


The control law enhancement described in Section \ref{sec:constrain_movement} is used in conjunction with this simplified control law.


\section{Simulation Studies}
\label{sec:simulations}
Simulations have been conducted in order to evaluate the efficiency of the control law as well as compare the complete (\ref{initial_control_law}) with the simplified (\ref{eq:control_suboptimal}) law. All nodes have common positioning uncertainty regions $C_i, \forall q_i \in \Omega$ as well as sensing performance $C_i^{gs}$. The a~priori importance of points inside the region was $\phi(q) = 1, ~\forall q \in \Omega$. In this case, the maximum possible value for $\mathcal{H}$ is
\begin{equation*}
\mathcal{H}^{\max} = \sum_{i \in I_n} \mathcal{A}\left( C_i^{gs} \right)~,
\end{equation*}
where $\mathcal{A}(\cdot)$ is the area function, given that $n$ guaranteed sensing disks can be packed inside $\Omega$. In all simulations, the control law extension as described in Section \ref{sec:constrain_movement} is used.

\subsection{Case Study I}
In this case study the convergence speed between the optimal and suboptimal laws is examined. A network of three nodes in a large region is simulated. Figure \ref{fig:sim1_network} shows the initial and final states of the network, with the uncertainty disks shown in black, the guaranteed sensing disks in red and the GV cells in blue. In Figure \ref{fig:sim1_traj} the trajectories of the nodes when using the optimal control law are shown in red and when using the suboptimal control law in blue, while the initial positions are represented by circles and the final by squares. It is observed that the final configurations in both cases are very similar since the average distance between nodes' final states is $0.57 \%$ of the region's diameter. This is due to the low number of nodes in the network and is a result that can not be generalized, as seen in the following case study. Figure \ref{fig:sim1_H} shows the evolution of the coverage objective $\mathcal{H}$ with time, again shown in red for the optimal control law and in blue for the suboptimal. As it is expected, the optimal law converges faster than the suboptimal. The final value of the objective function is the same for both control laws in this case since the network reaches a globally optimal state in both cases. 

\begin{figure}[htbp]
	\centering
	\ifx\singlecol\undefined
		\subfloat[]{ \includegraphics[width=0.3\textwidth]{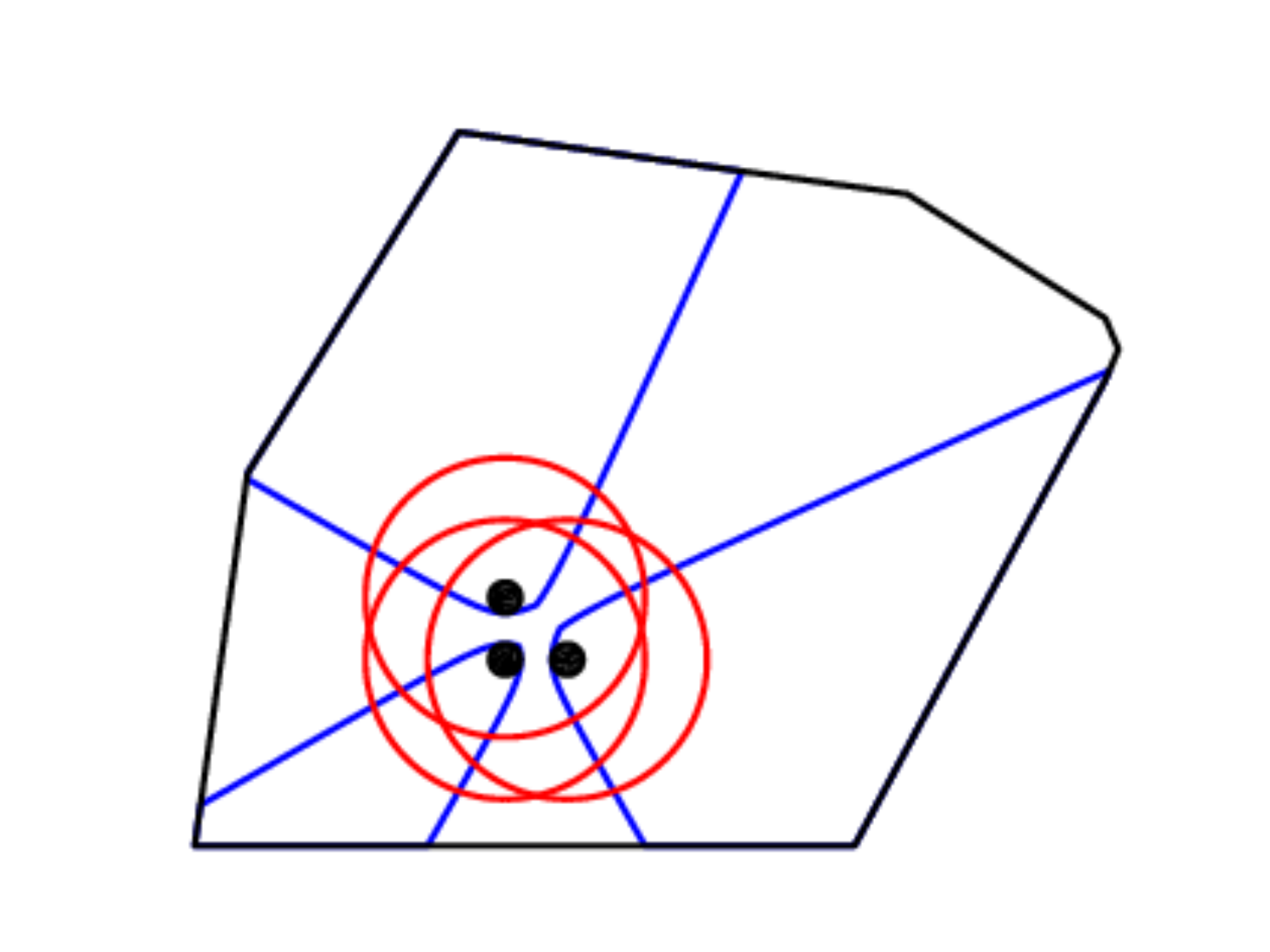} }\\
		\subfloat[]{ \includegraphics[width=0.23\textwidth]{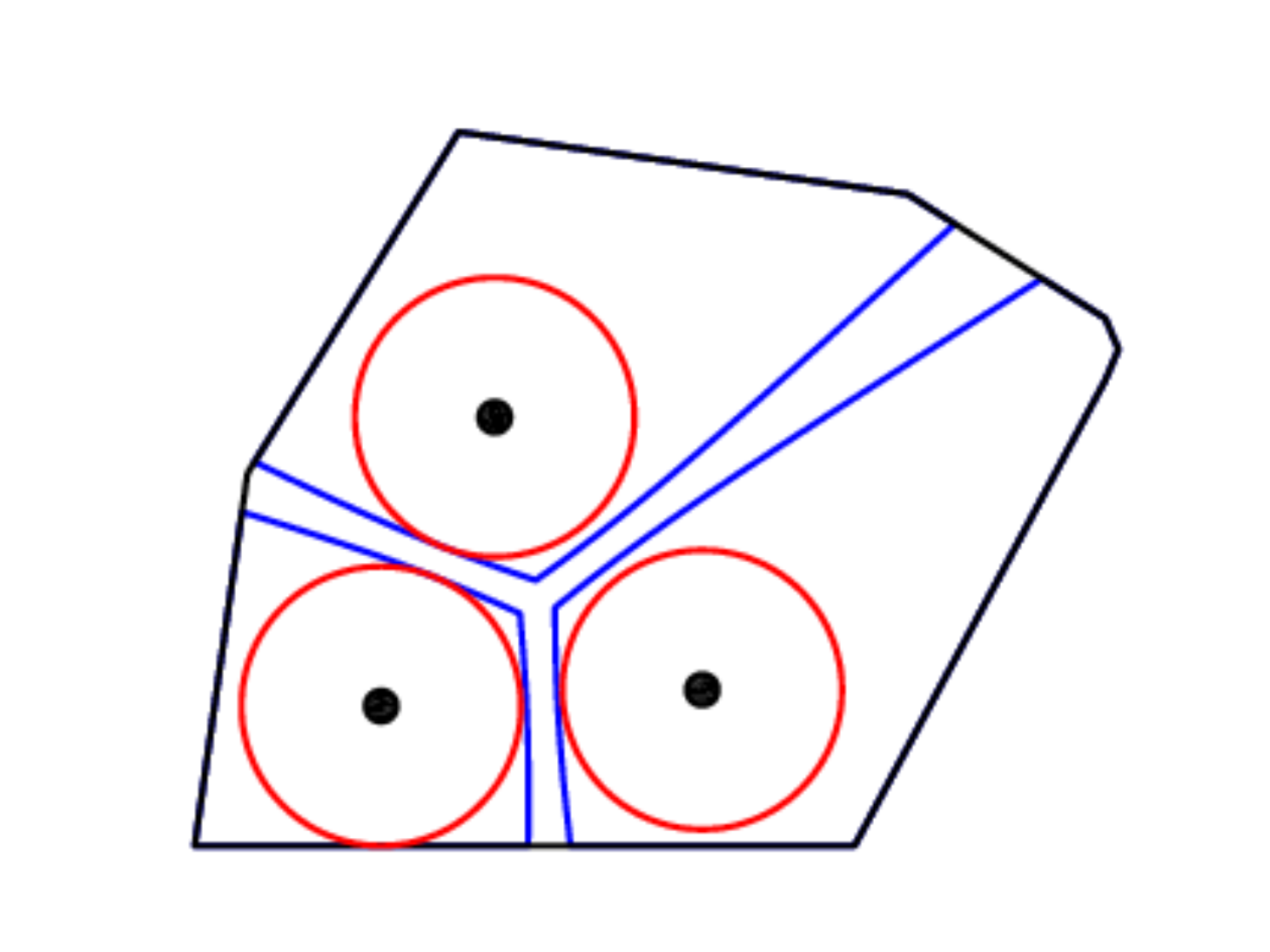} }
		\subfloat[]{ \includegraphics[width=0.23\textwidth]{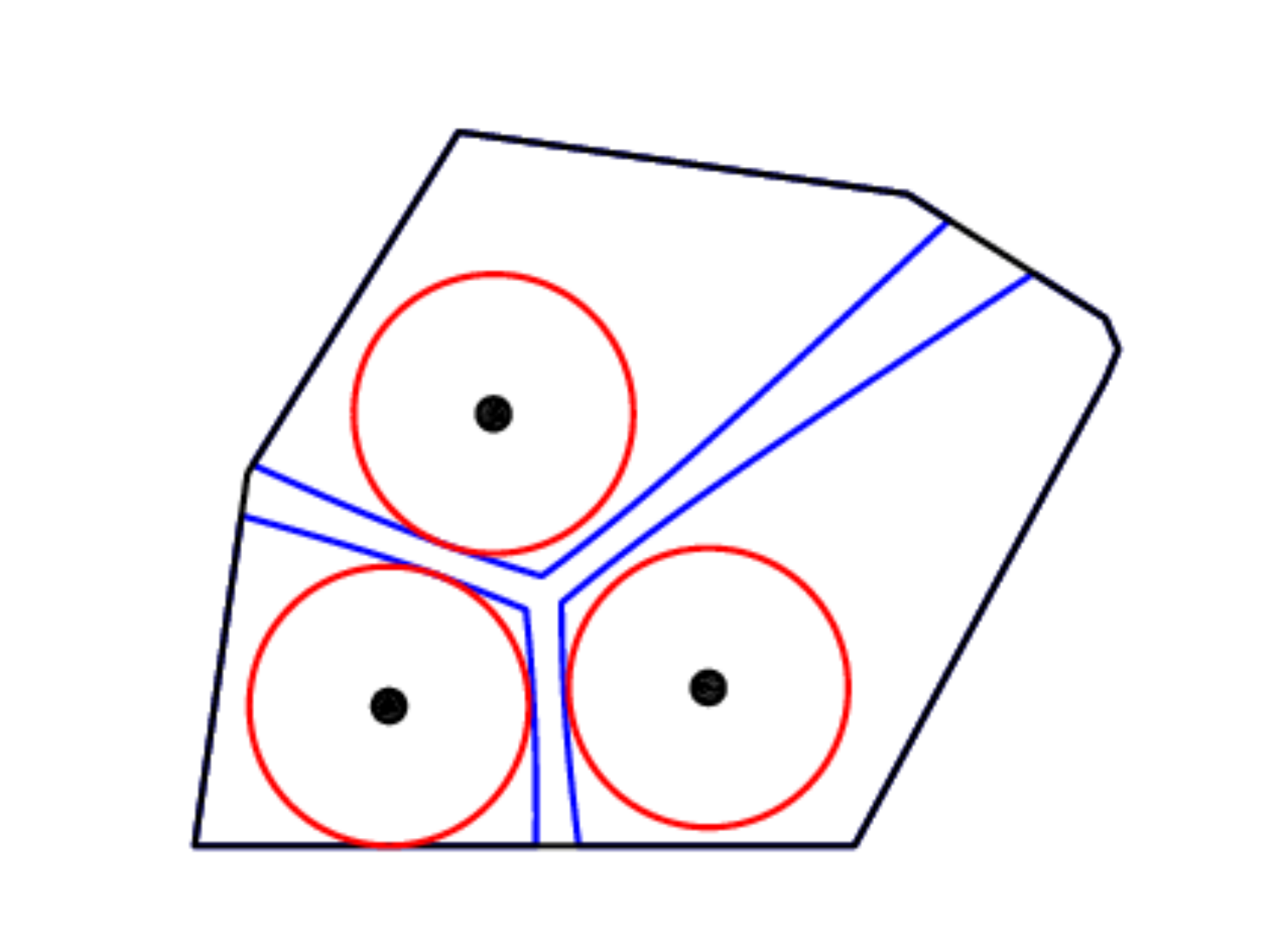} }
	\else
		\subfloat[]{ \includegraphics[width=0.6\textwidth]{figures/Case_Study_I/N3_initial.pdf} }\\
		\subfloat[]{ \includegraphics[width=0.45\textwidth]{figures/Case_Study_I/N3_final_complete.pdf} }
		\subfloat[]{ \includegraphics[width=0.45\textwidth]{figures/Case_Study_I/N3_final_simplified.pdf} }
	\fi
	
	\caption{Case Study I: (a) Initial state, (b) Final state using the optimal control law, (c) Final state using the suboptimal control law.}
	\label{fig:sim1_network}
\end{figure}

\begin{figure}[htbp]
	\centering
	\ifx\singlecol\undefined
		\includegraphics[width=0.45\textwidth]{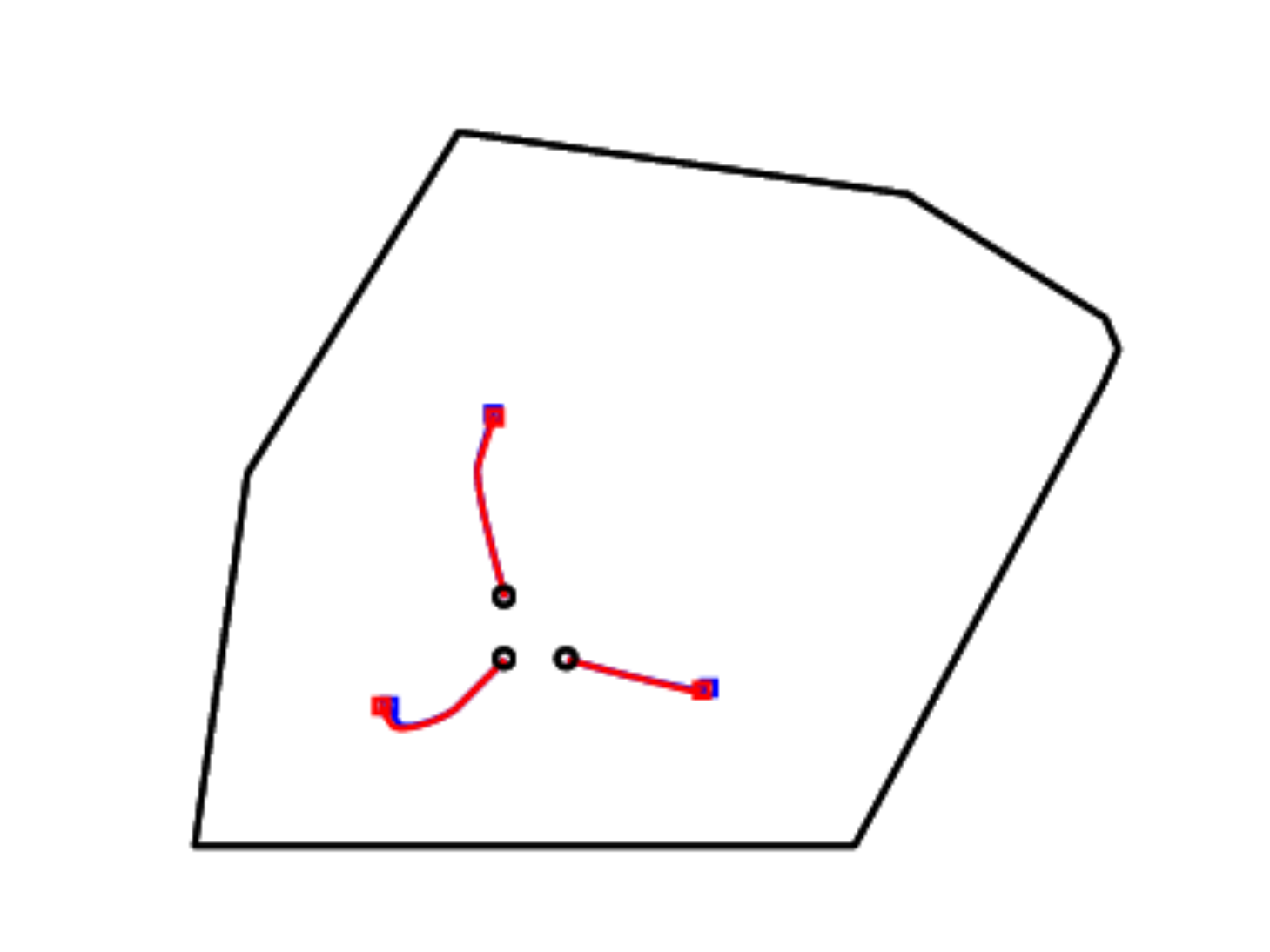}
	\else
		\includegraphics[width=0.9\textwidth]{figures/Case_Study_I/N3_traj_comp.pdf}
	\fi
	\caption{The trajectories of the nodes using the optimal (red) and suboptimal (blue) control laws.}
	\label{fig:sim1_traj}
\end{figure}

\begin{figure}[htbp]
	\centering
	\ifx\singlecol\undefined
		\includegraphics[width=0.45\textwidth]{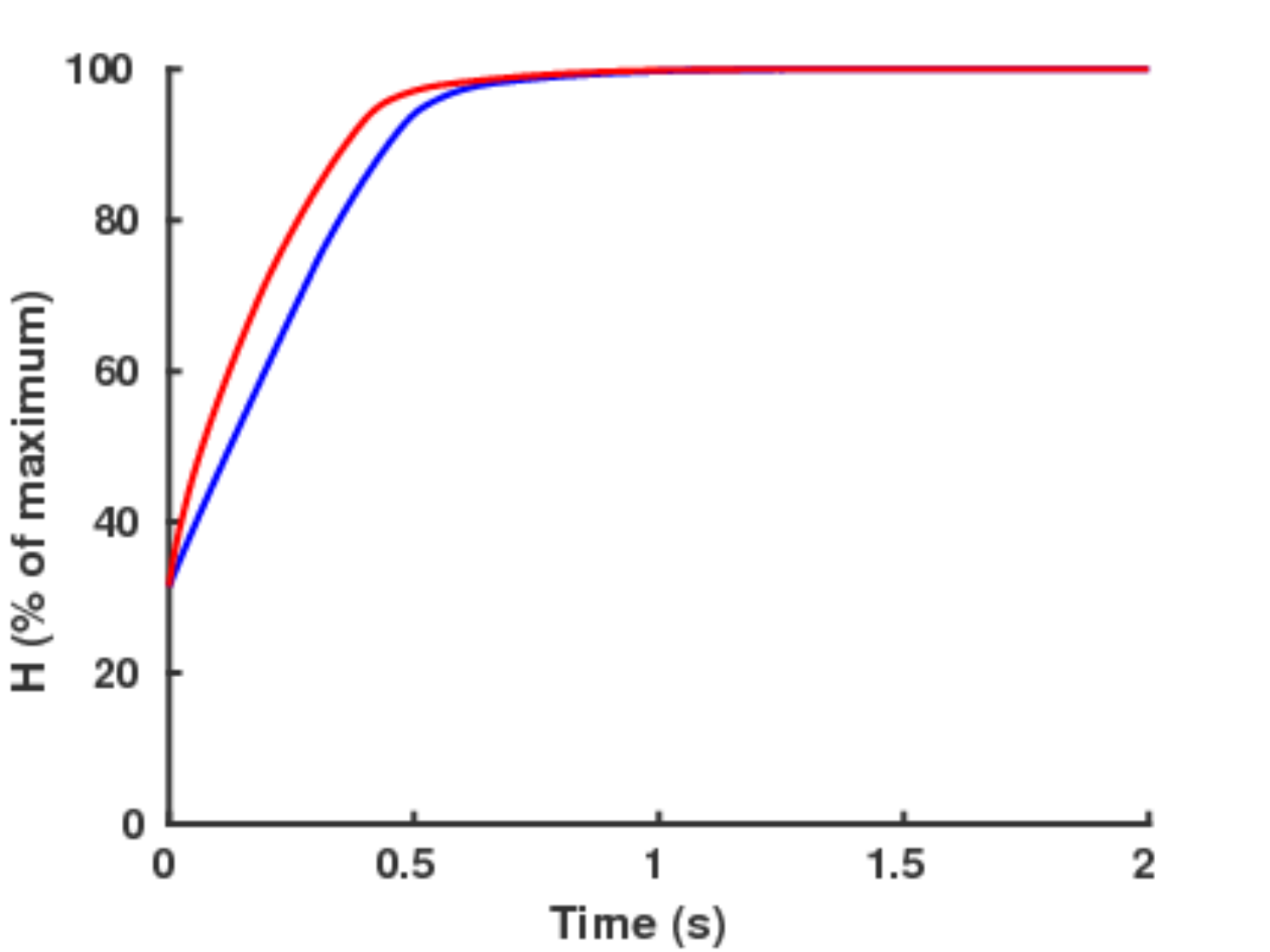}
	\else
		\includegraphics[width=0.9\textwidth]{figures/Case_Study_I/N3_H_comp.pdf}
	\fi
	\caption{The coverage objective $\mathcal{H}$ using the optimal (red) and suboptimal (blue) control laws.}
	\label{fig:sim1_H}
\end{figure}

\subsection{Case Study II}
In this case study network of ten nodes is examined. Figure \ref{fig:sim2_network} shows the initial and final states of the network, with the uncertainty disks shown in black, the guaranteed sensing disks in red and the GV cells in blue. In Figure \ref{fig:sim2_traj} the trajectories of the nodes when using the optimal control law are shown in red and when using the suboptimal control law in blue, while the initial positions are represented by circles and the final by squares. It is observed that the final configurations in both cases seem very similar at first glance, however the average distance between nodes' final states is $10.12 \%$ of the region's diameter. Careful examination of the trajectories shows that not all nodes starting from the same initial position converge to the same final position, explaining the large average distance. . Figure \ref{fig:sim2_H} shows the evolution of the coverage objective $\mathcal{H}$ with time, again shown in red for the optimal control law and in blue for the suboptimal. As it is expected, the optimal law converges faster than the suboptimal. The final value of the objective function is the same for both control laws in this case because of the similarity of their final configurations. However, since both control laws converge to local maxima, there are no guarantees that one or the other will always achieve better performance.

This case study highlights the need for the control enhancement in Section \ref{sec:constrain_movement}, especially in the case of the optimal control law. It can be seen in Figure \ref{fig:sim2_traj} that a node  moving under the optimal law (red trajectory) was stopped from moving outside the region on the lower left corner. 

\begin{figure}[htbp]
	\centering
	\ifx\singlecol\undefined
		\subfloat[]{ \includegraphics[width=0.3\textwidth]{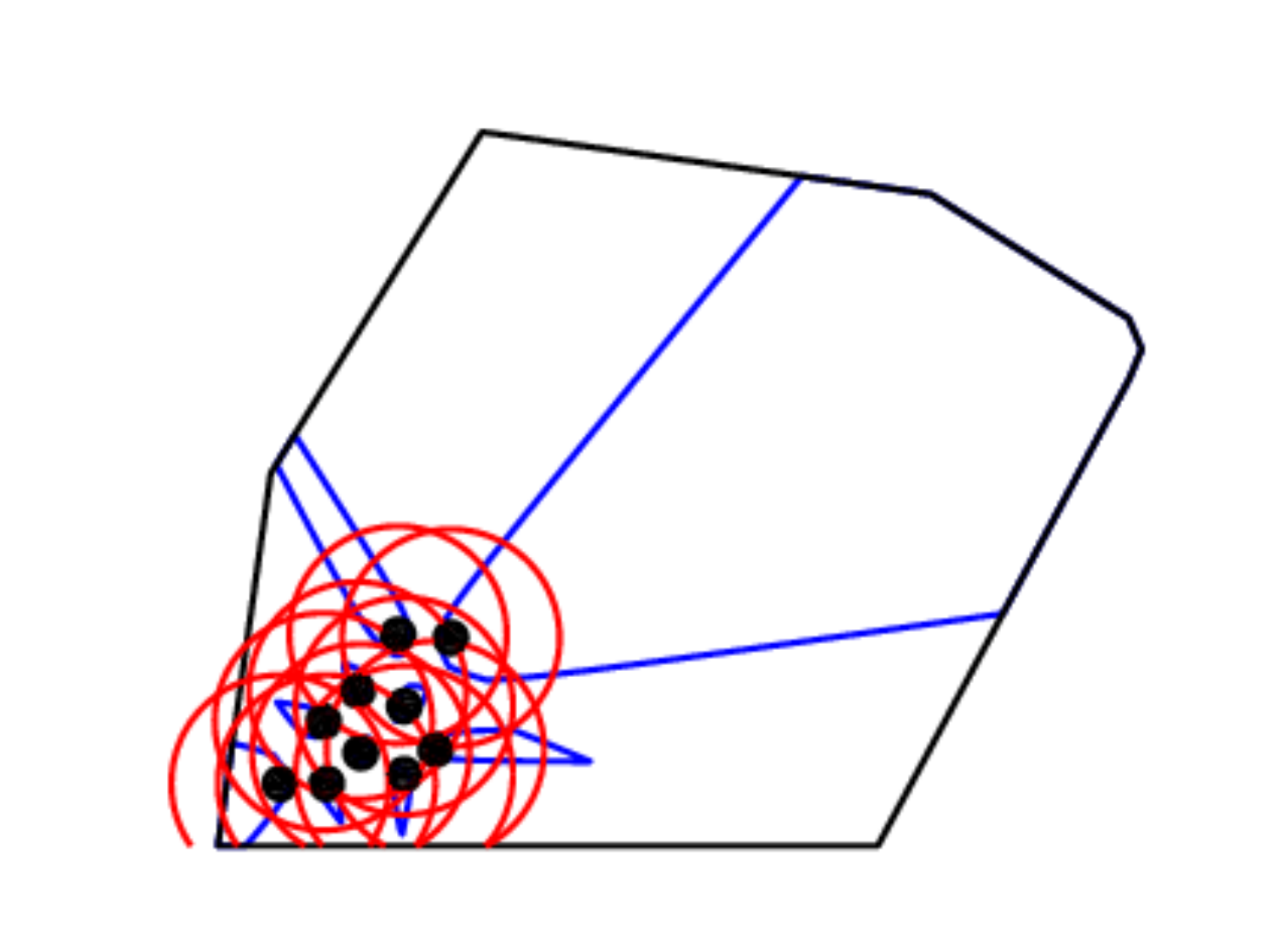} }\\
		\subfloat[]{ \includegraphics[width=0.23\textwidth]{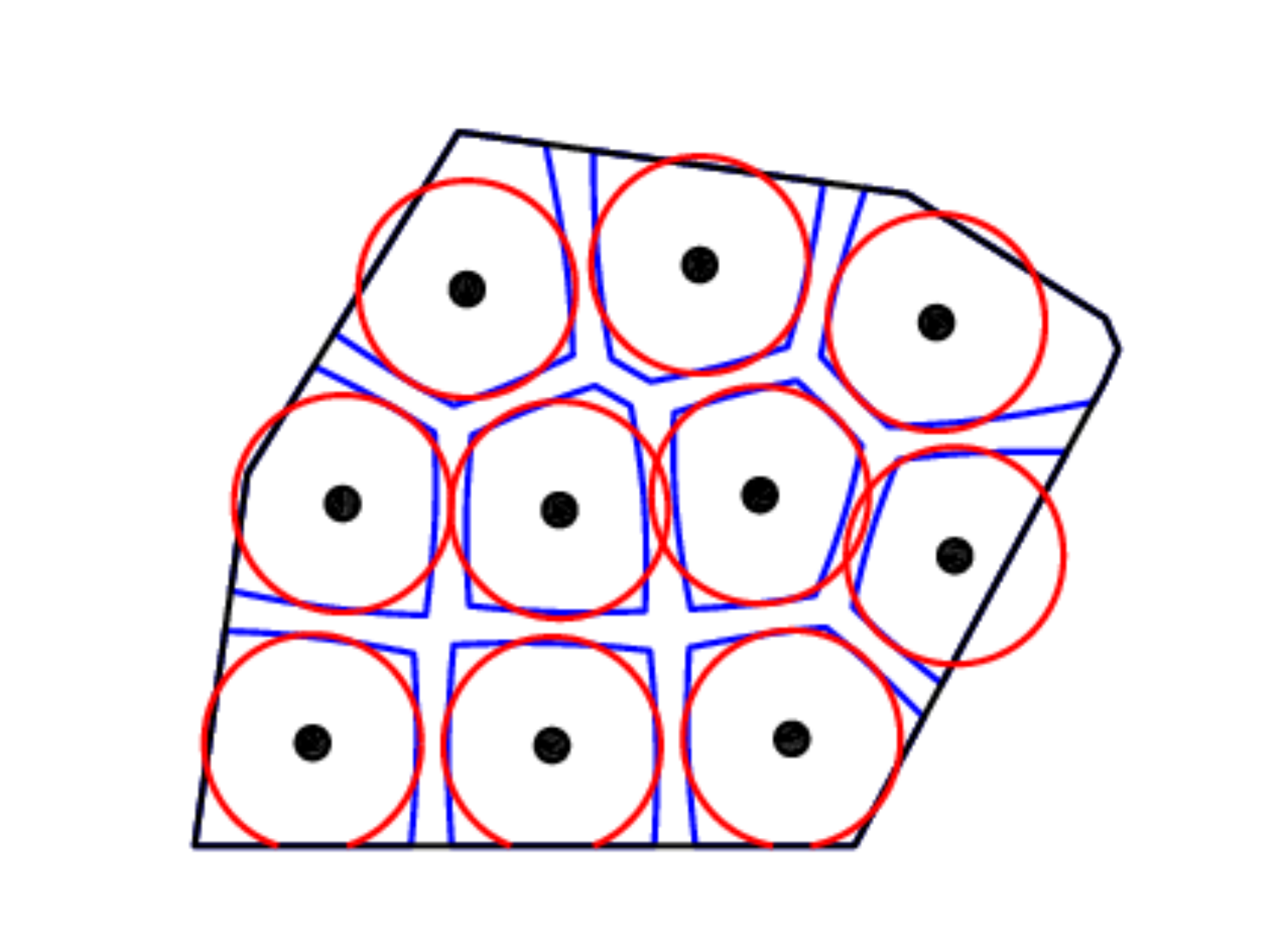} }
		\subfloat[]{ \includegraphics[width=0.23\textwidth]{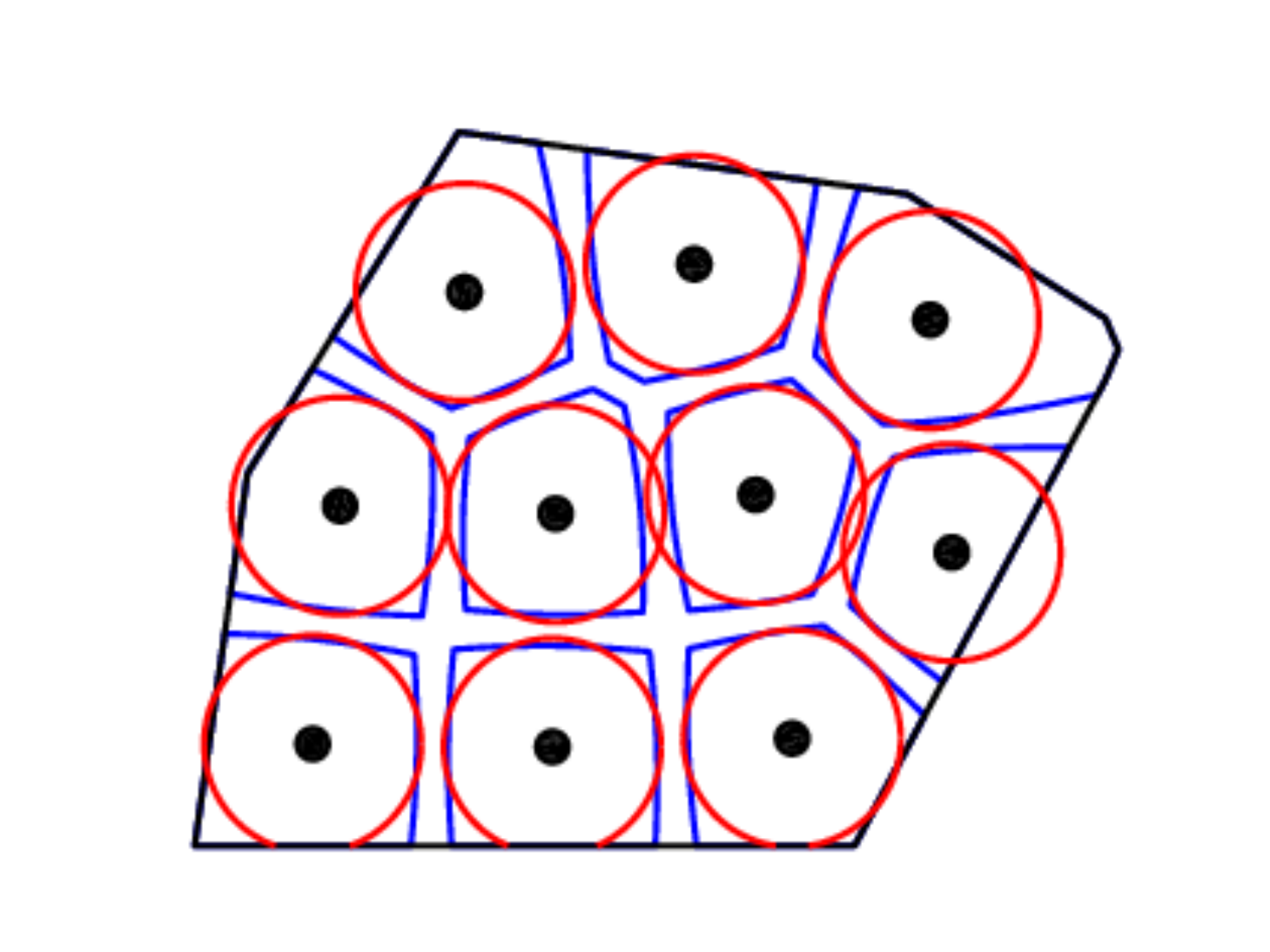} }
	\else
		\subfloat[]{ \includegraphics[width=0.6\textwidth]{figures/Case_Study_II/N10_initial.pdf} }\\
		\subfloat[]{ \includegraphics[width=0.45\textwidth]{figures/Case_Study_II/N10_final_complete.pdf} }
		\subfloat[]{ \includegraphics[width=0.45\textwidth]{figures/Case_Study_II/N10_final_simplified.pdf} }
	\fi
	
	\caption{Case Study I: (a) Initial state, (b) Final state using the optimal control law, (c) Final state using the suboptimal control law.}
	\label{fig:sim2_network}
\end{figure}

\begin{figure}[htbp]
	\centering
	\ifx\singlecol\undefined
		\includegraphics[width=0.45\textwidth]{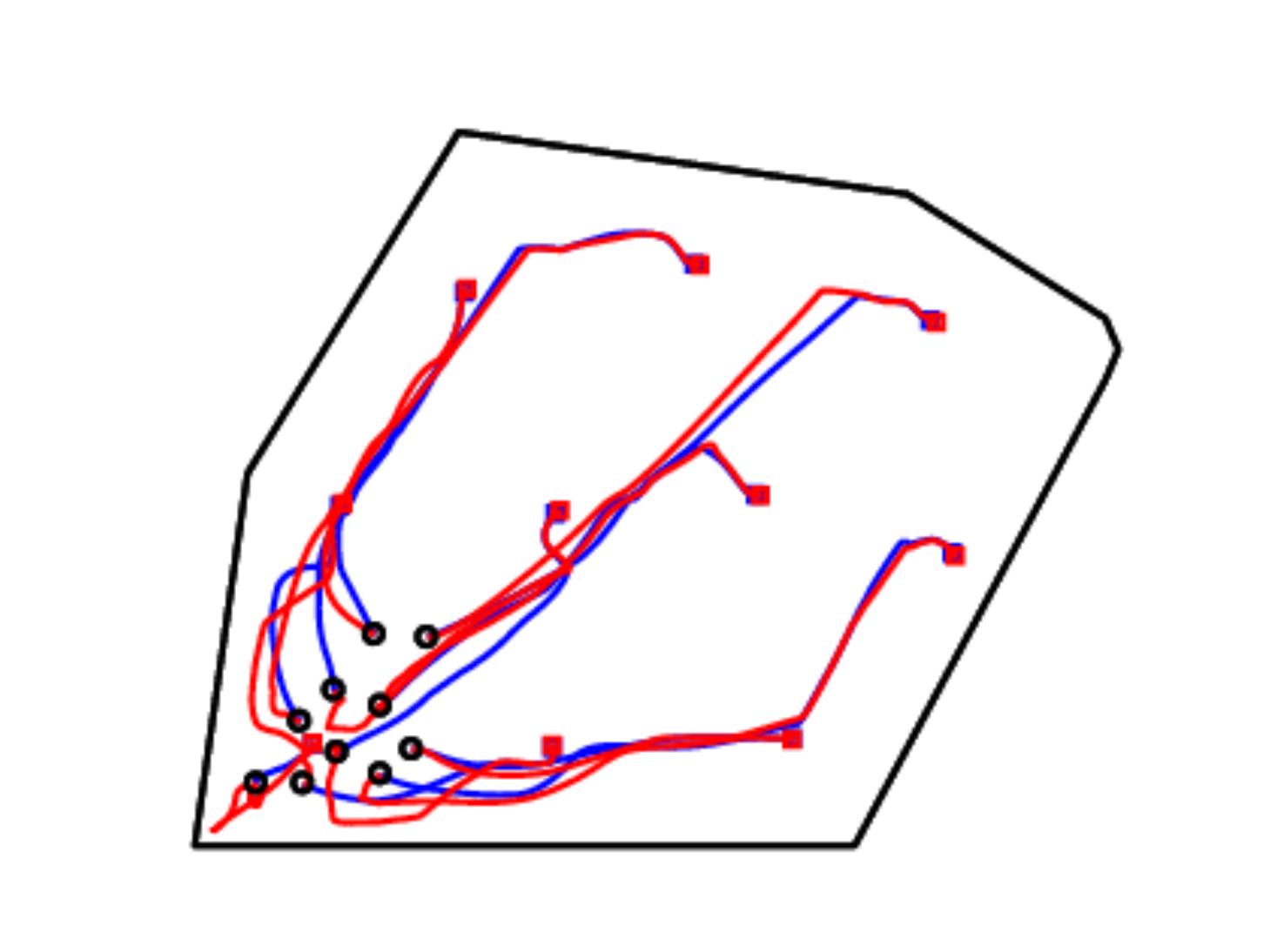}
	\else
		\includegraphics[width=0.9\textwidth]{figures/Case_Study_II/N10_traj_comp.pdf}
	\fi
	
	\caption{The trajectories of the nodes using the optimal (red) and suboptimal (blue) control laws.}
	\label{fig:sim2_traj}
\end{figure}

\begin{figure}[htbp]
	\centering
	\ifx\singlecol\undefined
		\includegraphics[width=0.45\textwidth]{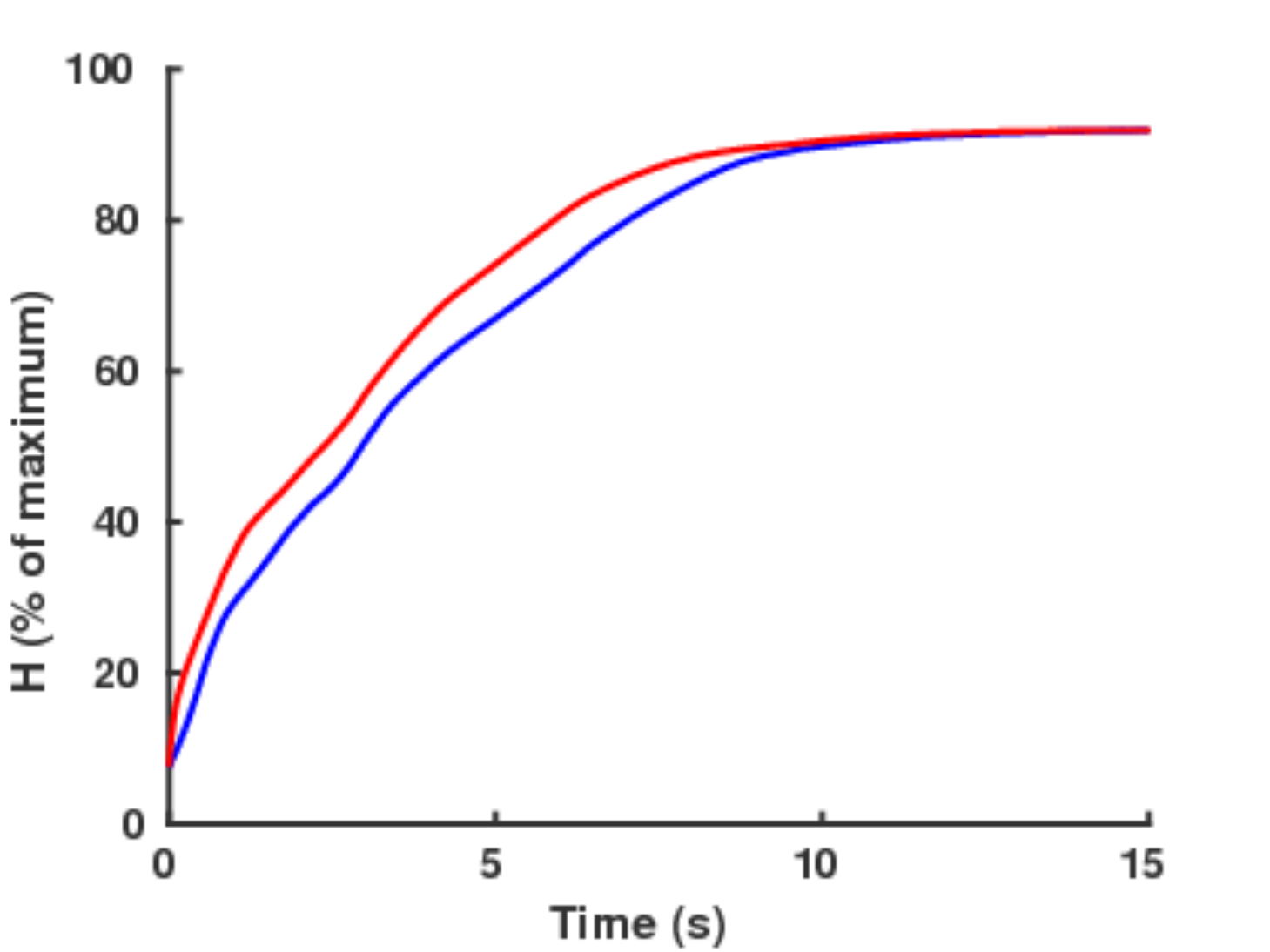}
	\else
		\includegraphics[width=0.9\textwidth]{figures/Case_Study_II/N10_H_comp.pdf}
	\fi
	
	\caption{The coverage objective $\mathcal{H}$ using the optimal (red) and suboptimal (blue) control laws.}
	\label{fig:sim2_H}
\end{figure}


\section{Experimental Studies}
\label{sec:experiments}
In order to evaluate the simplified control law (\ref{eq:control_suboptimal}), two experiments were planned and conducted. The experiments consist of $3$ robots, an external pose tracking system and a computer, communicating with the robots and the pose tracking system via a WiFi router.

The robots used are the differential-drive \emph{AmigoBot}s by \emph{ActiveMedia Robotics}, which provide a high level command set, such as setting their translational and rotational velocities directly. These robots are also equipped with encoders, used for self-estimating their pose, but due to the drifting error, an external tracking system was implemented. Although the external pose estimation system is used for the control law implementation, its performance is compared with that of the encoders. Since the robots are not equipped with actual sensors, circular sensing patters with a radius $r^s = 0.3~m$ were assumed in the control law implementation.

The pose estimation system consists of a camera and an ODROID-XU4 octa core computer, running an ArUco based pose estimation server. ArUco~\cite{Aruco2014} is a minimal library for Augmented Reality applications and provides pose estimation of some predefined sets of fiducial markers. Hende, by attaching a fiducial marker on top of each robot, the ArUco library provides an estimation of each robot's pose.

In order to estimate the positioning uncertainty of the experimental setup, a fiducial marker was placed on the centroid and on each of the vertices of the region $\Omega$. A 3D triangular mesh was created from these measurements as shown in Figure \ref{fig:barycentric}. The maximum uncertainty value from that mesh $r^{u_{\max}} = 0.032$~m was used for the experiments. 

\begin{figure}[htb]
	\centering
	\ifx\singlecol\undefined
		\includegraphics[width=0.5\textwidth]{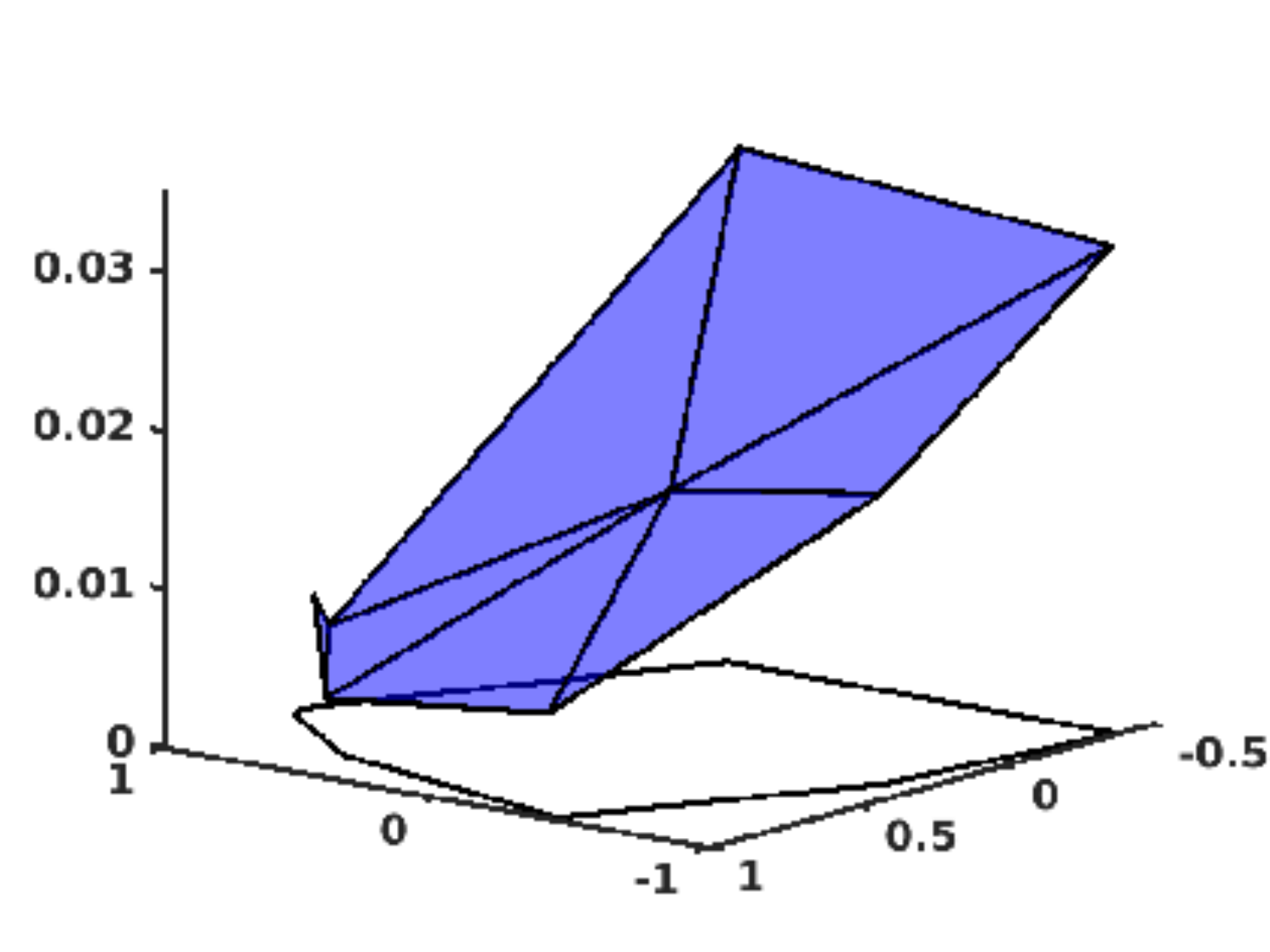}
	\else
		\includegraphics[width=0.95\textwidth]{figures/barycentric.pdf}
	\fi
	
	\caption{Triangular mesh from positioning uncertainty measurements.}
	\label{fig:barycentric}
\end{figure}


The control law was implemented algorithmically as a loop with a period $T_s = 0.1$~sec. After the end of each iteration the computer sends velocity commands to the robots. At first, the robots' current positions are used to create the Guaranteed Voronoi diagram of the uncertain disks and calculate the control inputs according to the simplified control law~(\ref{eq:control_suboptimal}). Subsequently, a target point $q_i^t \in \mathbb{R}^2$ is created for each robot based on its current position $q_i$ and its control input $\tilde{u_i}$. Once the target point for each robot has been found, and given each robot's position $q_i$ and orientation $\theta_i$, the translational $v_i$ and rotational $\omega_i$ velocity control inputs are sent to each robot.
\begin{align*}
v_i &= \min \left( \frac{\| q_i^t - q_i \|}{T_s}, v_{\max} \right) \cos(d\theta_i) \\
\omega &= \min \left( \frac{ | d\theta_i | }{T_s}, \omega_{\max} \right) \sin(d\theta_i)
\end{align*}
where $d\theta_i = \angle(q_i^t - q_i) - \theta_i$ and $v_{\max}$, $\omega_{\max}$ are constraints on the maximum translational and rotational velocity respectively of the Amigobot robots. Once all robots are within a predefined distance $d_t = 0.02~m$ of their respective target points, the Guaranteed Voronoi diagram and the control law are calculated again and new target points are produced.

The region shown in all figures regarding the experiments is $\Omega^s \subseteq \Omega$ i.e. the region inside which the centers of the uncertain disks are constrained, as explained in Section \ref{sec:constrain_movement}. 

Both experiments are evaluated against simulations with the same initial robot positions, positioning uncertainty and sensing performance.

\subsection{Experiment I}
The initial and final positions of the robots for both the experiment and the simulation are shown in Figures \ref{fig:exp1_initial} and \ref{fig:exp1_final} respectively with the uncertainty disks shown in black, the guaranteed sensing disks in red and the GV cells in blue. Additionally, photos of the experiment initial and final configurations can be seen on Figure \ref{fig:exp1_photos}. It can be seen that in the final configuration all guaranteed sensing disks are contained within their respective GV cells in both the experiment and the simulation, thus leading the network to a globally optimal configuration. By comparing the final configurations, as well as the robot trajectories for the experiment and simulation shown in Figure \ref{fig:exp1_traj_comp}, it can be seen that they differ significantly. The mean distance between the nodes final positions in the experiment and simulation is $8.68 \%$ of the diameter of $\Omega^s$. Because in this current case there are multiple globally optimal configurations the network can possibly reach and since the implemented control law differs from the theoretical one, it is natural that the final configurations between the experiment and the simulation differ. Nevertheless, a globally optimal configuration was reached in both cases. The coverage objective $\mathcal{H}$ did not increase monotonously due to the implemented control law. It did increase however from $83.7 \%$ to $99.9 \%$ of its maximum possible value. The positioning data from the ArUco library and the robot encoders are compared in Figure \ref{fig:exp1_traj} [Left]. As it is expected, the further a robot moves and the more it rotates, the larger the positioning error of the encoders grows. The mean error between the ArUco and encoder final robot positions is $4.04 \%$ of the diameter of $\Omega^s$. Figure \ref{fig:exp1_traj} [Right] shows the trajectories of the target points used in the control law implementation. 

\begin{figure}[htbp]
	\centering
	\ifx\singlecol\undefined
		\includegraphics[width=0.24\textwidth]{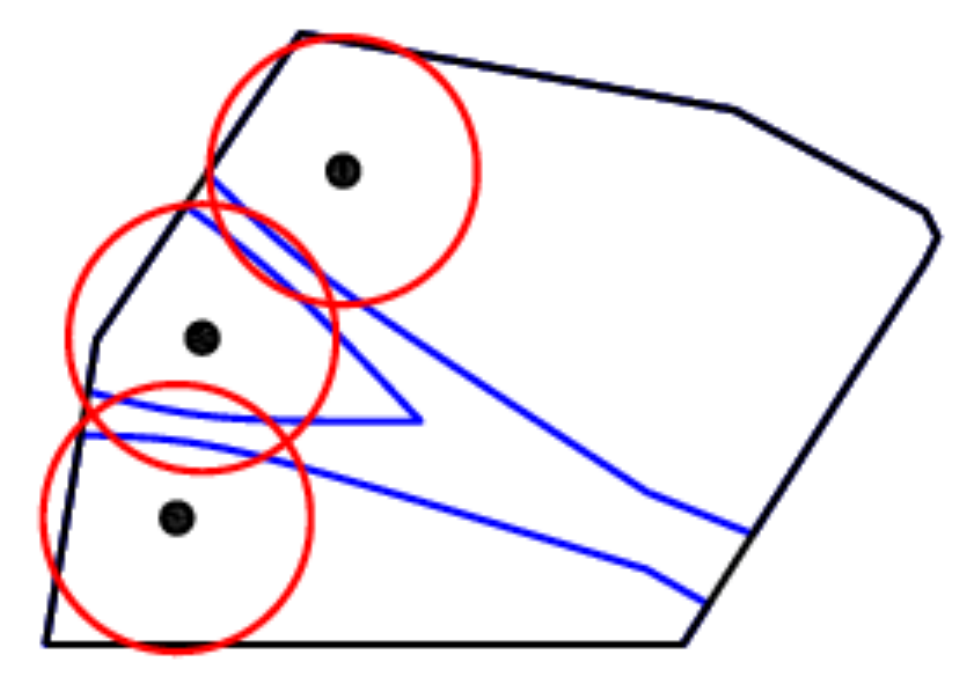}
	\else
		\includegraphics[width=0.5\textwidth]{figures/Experiment_I/exp1_initial.pdf}
	\fi
	
	\caption{Experiment I: Initial configuration.}
	\label{fig:exp1_initial}
\end{figure}

\begin{figure}[htbp]
	\centering
	\ifx\singlecol\undefined
		\includegraphics[width=0.24\textwidth]{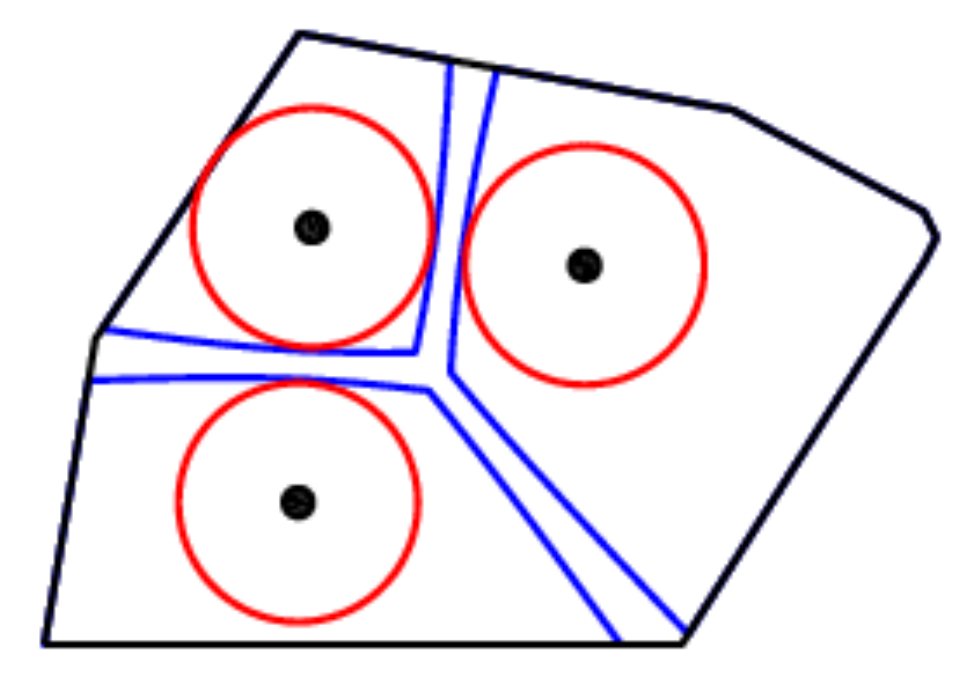}
		\includegraphics[width=0.24\textwidth]{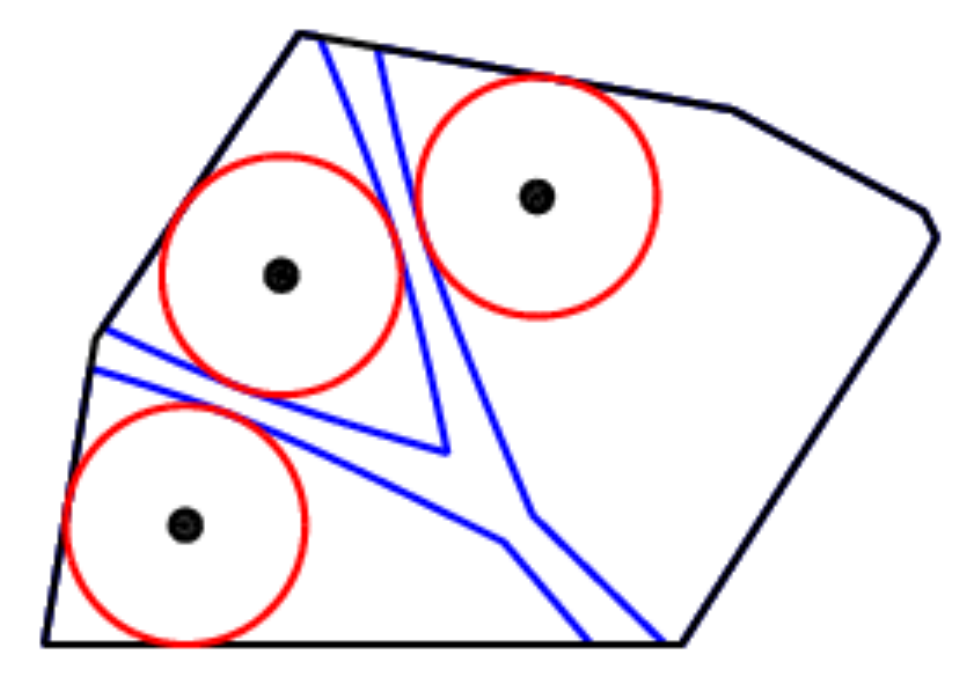}
	\else
		\includegraphics[width=0.48\textwidth]{figures/Experiment_I/exp1_final.pdf}
		\includegraphics[width=0.48\textwidth]{figures/Experiment_I/exp1_final_sim.pdf}
	\fi
	
	\caption{Experiment I: Experiment [Left] and simulation [Right] final configuration.}
	\label{fig:exp1_final}
\end{figure}

\begin{figure}[htbp]
	\centering
	\ifx\singlecol\undefined
		\includegraphics[width=0.24\textwidth]{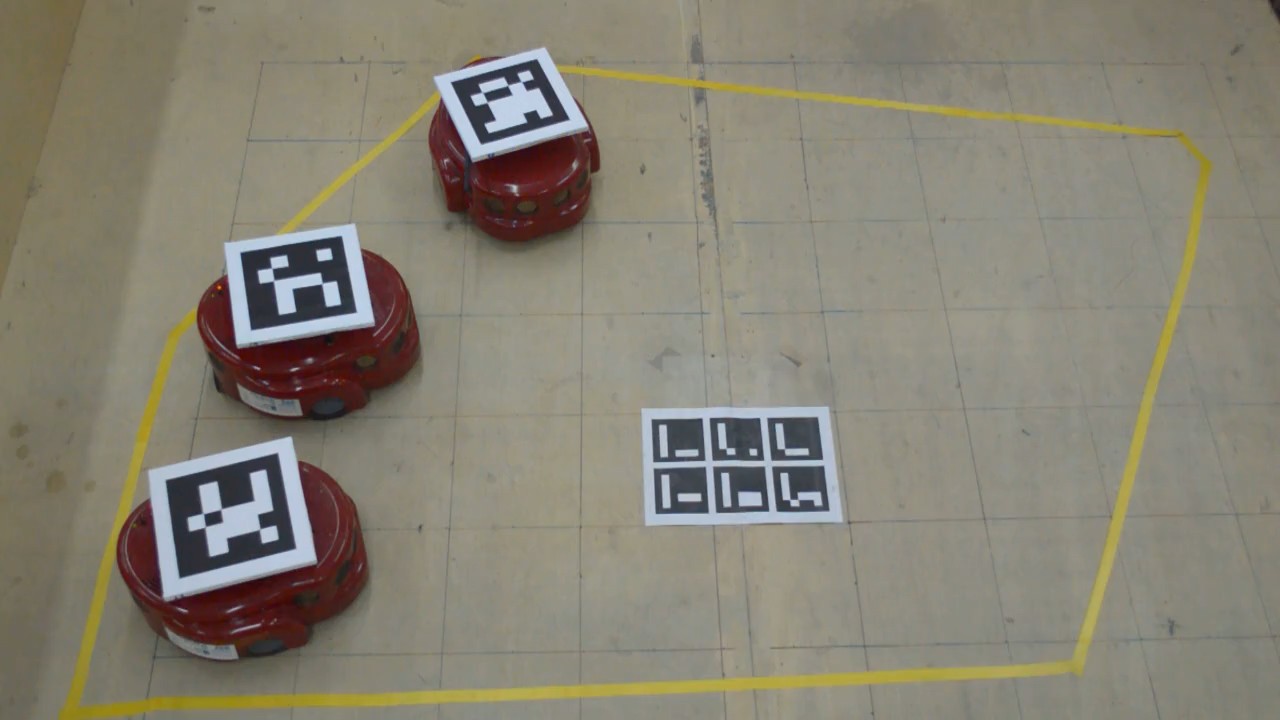}
		\includegraphics[width=0.24\textwidth]{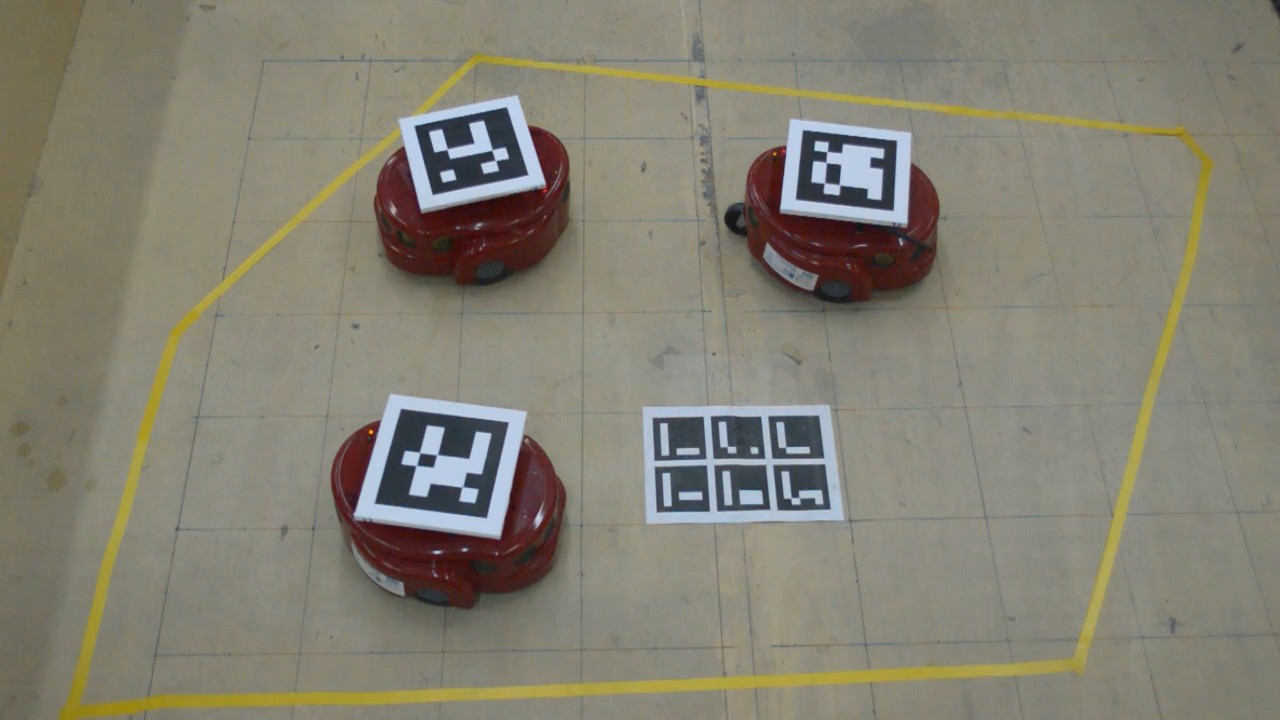}
	\else
		\includegraphics[width=0.48\textwidth]{figures/Experiment_I/exp1_initial.jpg}
		\includegraphics[width=0.48\textwidth]{figures/Experiment_I/exp1_final.jpg}
	\fi
	
	\caption{Experiment I: Photos from the initial [Left] and final [Right] robot positions in the experiment.}
	\label{fig:exp1_photos}
\end{figure}

\begin{figure}[htbp]
	\centering
	\ifx\singlecol\undefined
		\includegraphics[width=0.3\textwidth]{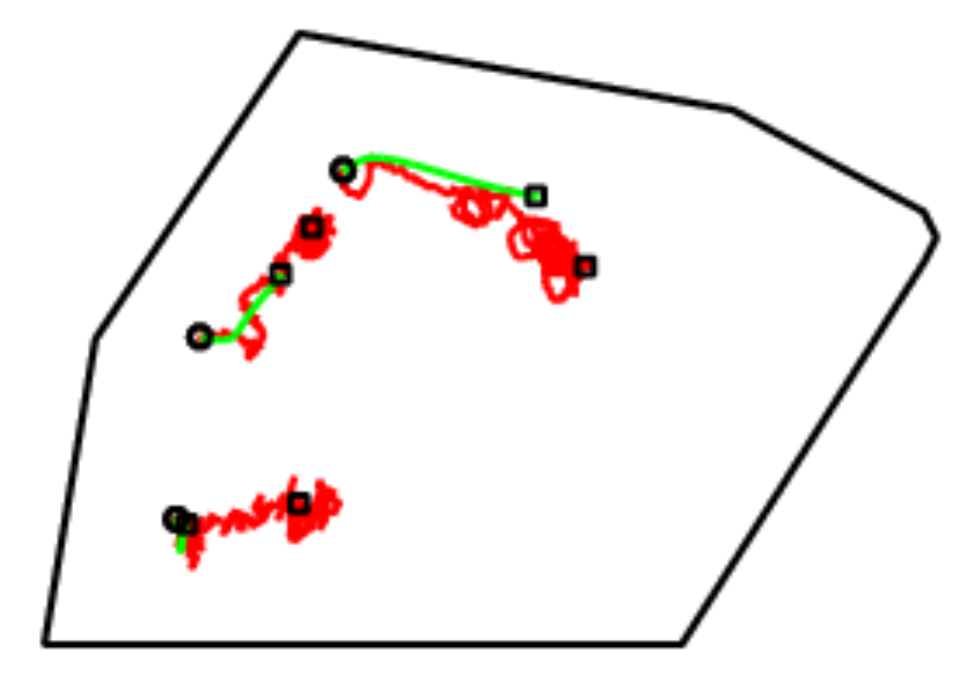}
	\else
		\includegraphics[width=0.6\textwidth]{figures/Experiment_I/exp1_traj_comp_sim.pdf}
	\fi
	
	\caption{Experiment I: Comparison of experiment (red) and simulation (green) robot trajectories. Initial positions marked with circles and final positions with squares.}
	\label{fig:exp1_traj_comp}
\end{figure}

\begin{figure}[htbp]
	\centering
	\ifx\singlecol\undefined
		\includegraphics[width=0.45\textwidth]{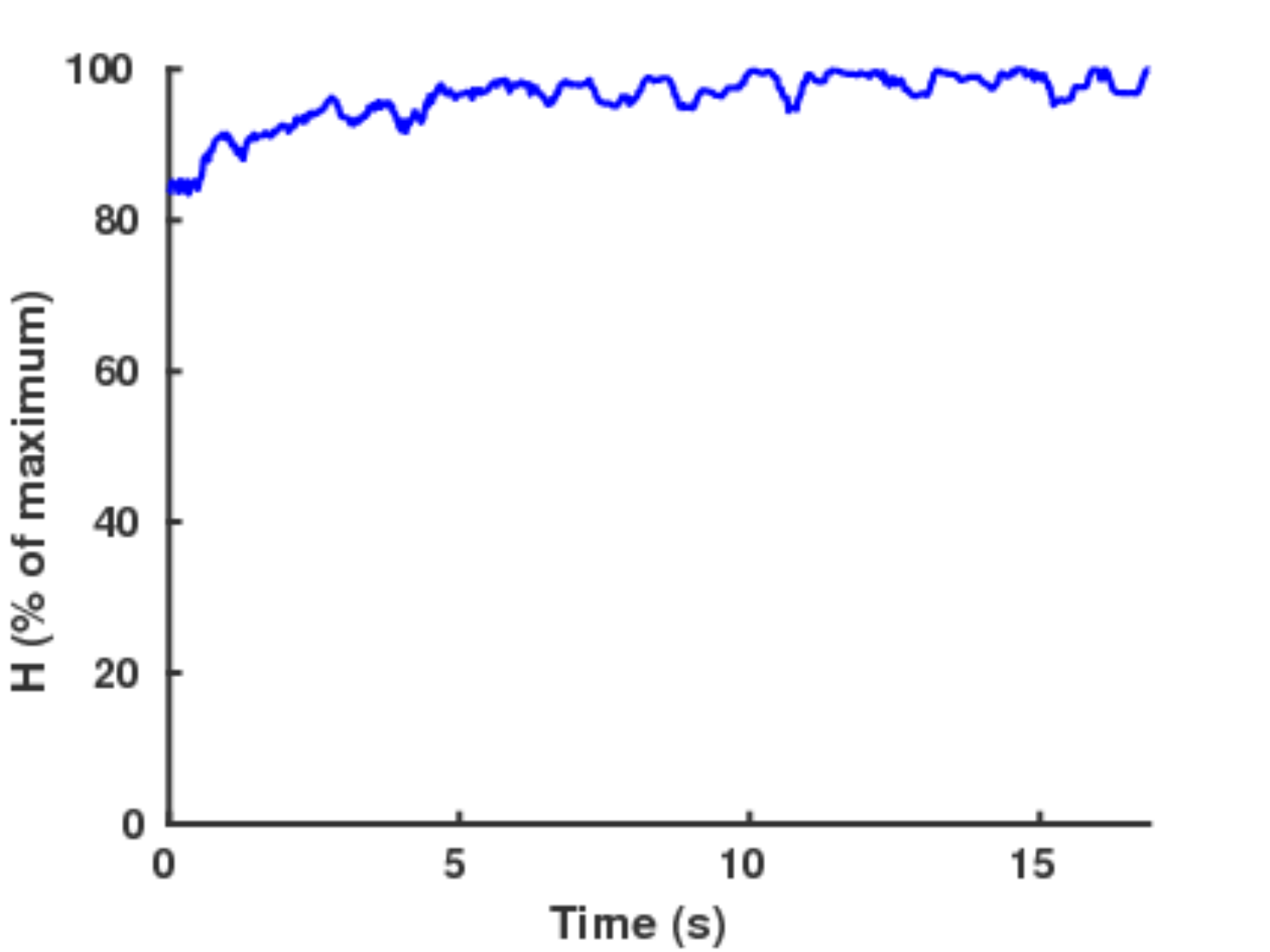}
	\else
		\includegraphics[width=0.9\textwidth]{figures/Experiment_I/exp1_H.pdf}
	\fi
	
	\caption{Experiment I: Coverage objective $\mathcal{H}$ as a function of time.}
	\label{fig:exp1_H}
\end{figure}

\begin{figure}[htbp]
	\centering
	\ifx\singlecol\undefined
		\includegraphics[width=0.24\textwidth]{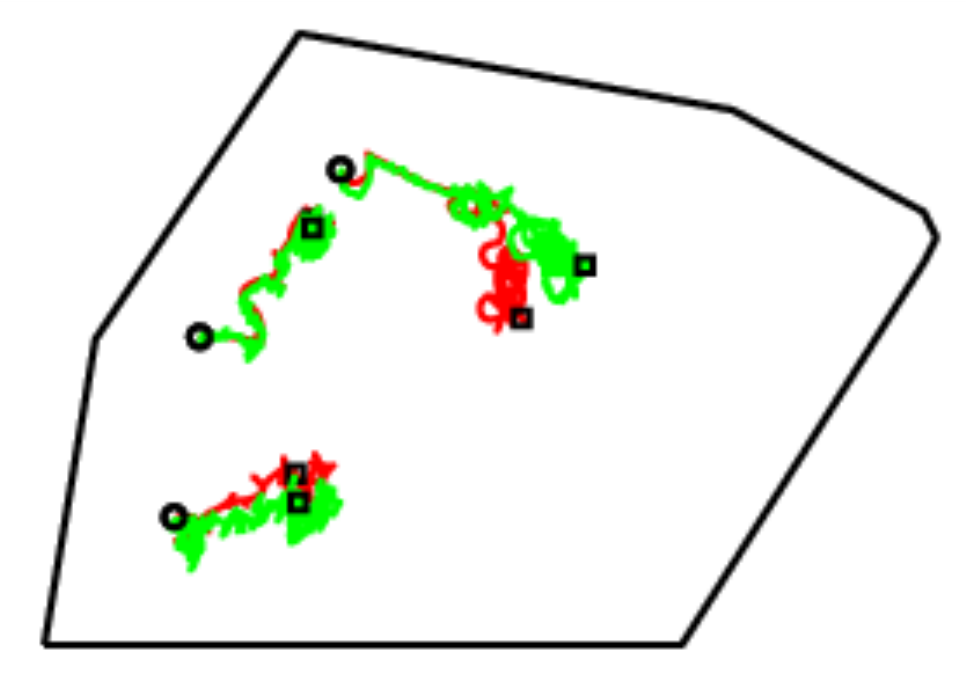}
		\includegraphics[width=0.24\textwidth]{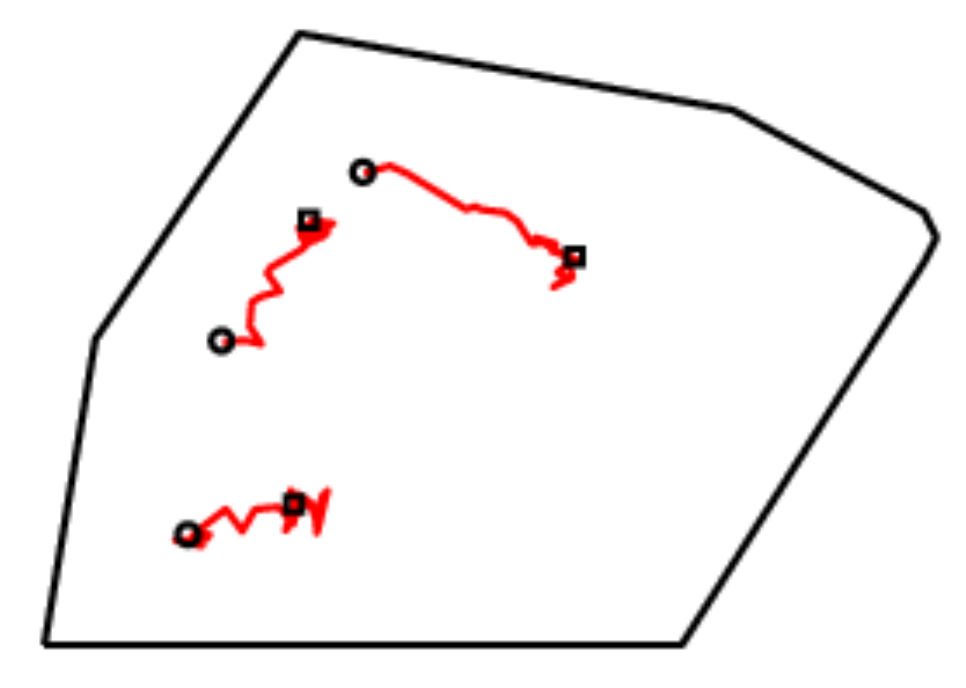}
	\else
		\includegraphics[width=0.48\textwidth]{figures/Experiment_I/exp1_traj.pdf}
		\includegraphics[width=0.48\textwidth]{figures/Experiment_I/exp1_target.pdf}
	\fi
	
	\caption{Experiment I: [Left] ArUco (green) and encoder (red) reported trajectories. [Right] Target point trajectories. Initial positions marked with circles and final positions with squares.}
	\label{fig:exp1_traj}
\end{figure}

\subsection{Experiment II}
In this experiment, half the diameter of the robots was added to their positioning uncertainty, so that each robot was completely contained within its positioning uncertainty disk $C_i^u$. The guaranteed sensing disks however were calculated using only the actual uncertainty and not the one increased by the robot radius. This was done to guarantee that the robots would remain inside the region $\Omega$. The initial and final positions of the robots for both the experiment and the simulation are shown in Figures \ref{fig:exp2_initial} and \ref{fig:exp2_final} respectively with the uncertainty disks shown in black, the guaranteed sensing disks in red and the GV cells in blue. Additionally, photos of the experiment initial and final configurations can be seen on Figure \ref{fig:exp2_photos}. It can be seen that in the final configuration all guaranteed sensing disks are almost completely contained within their respective GV cells in both the experiment and the simulation. By comparing the final configurations, as well as the robot trajectories for the experiment and simulation shown in Figure \ref{fig:exp2_traj_comp}, it can be seen that they do not differ significantly. The mean distance between the nodes final positions in the experiment and simulation is $1.54 \%$ of the diameter of $\Omega^s$. Because of the increased positioning uncertainty in this case, which results in smaller GV cells, the number of local maxima of the objective function is significantly smaller than in the previous experiment. Thus it is quite possible that, despite their differences, the theoretical and implemented control laws converge to the same final configuration. The coverage objective $\mathcal{H}$ did not increase monotonously due to the implemented control law. It did increase however from $38.8 \%$ to $98.3 \%$ of its maximum possible value. The positioning data from the ArUco library and the robot encoders are compared in Figure \ref{fig:exp2_traj} [Left]. As it is expected, the further a robot moves and the more it rotates, the larger the positioning error of the encoders grows. The mean error between the ArUco and encoder final robot positions is $1.54 \%$ of the diameter of $\Omega^s$. Figure \ref{fig:exp2_traj} [Right] shows the trajectories of the target points used in the control law implementation. 

\ifx\singlecol\undefined
A video of the simulations and the experiments can be found on \url{http://anemos.ece.upatras.gr/images/stories/videos/PTGS_IEEETAC.mp4}
\else
A video of the simulations and the experiments can be found on \\ \texttt{http://anemos.ece.upatras.gr/images/stories/videos/PTGS\_IEEETAC.mp4}
\fi

\begin{figure}[htbp]
	\centering
	\ifx\singlecol\undefined
		\includegraphics[width=0.24\textwidth]{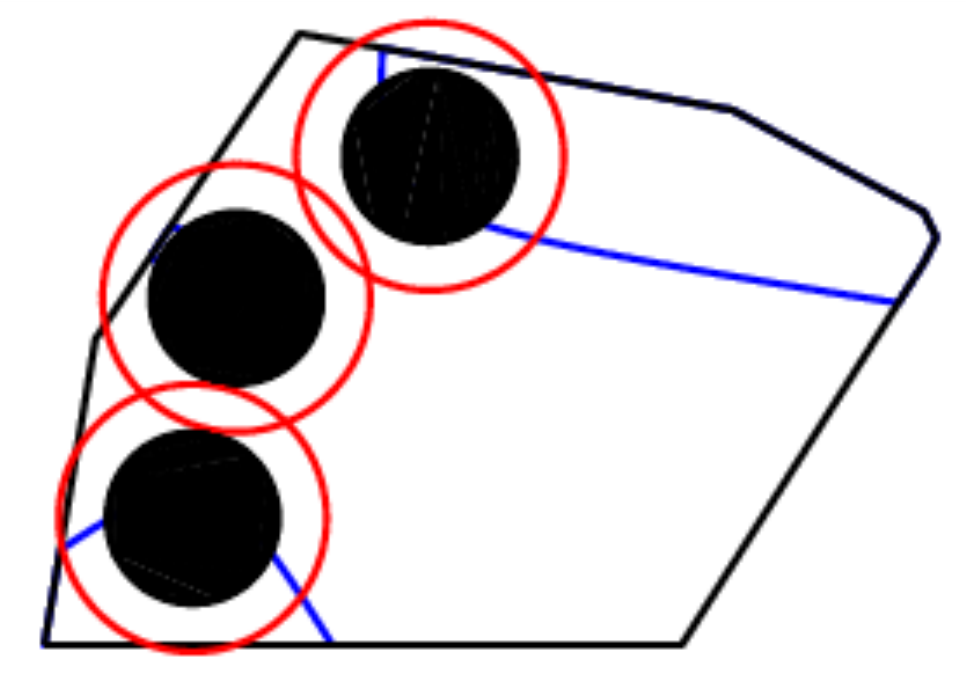}
	\else
		\includegraphics[width=0.5\textwidth]{figures/Experiment_II/exp2_initial.pdf}
	\fi
	
	\caption{Experiment II: Initial configuration.}
	\label{fig:exp2_initial}
\end{figure}

\begin{figure}[htbp]
	\centering
	\ifx\singlecol\undefined
		\includegraphics[width=0.24\textwidth]{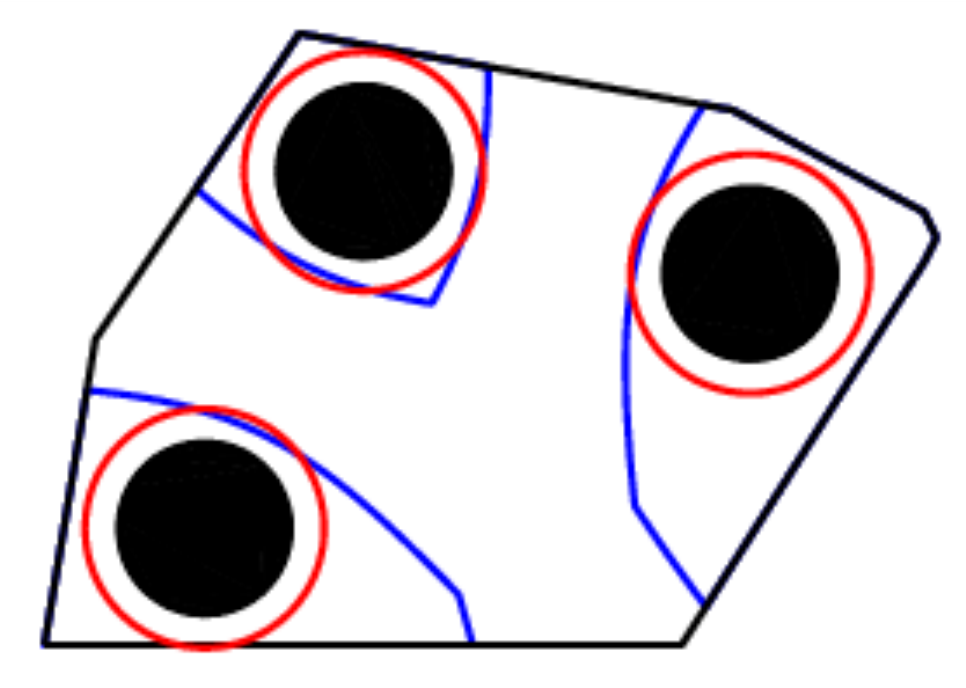}
		\includegraphics[width=0.24\textwidth]{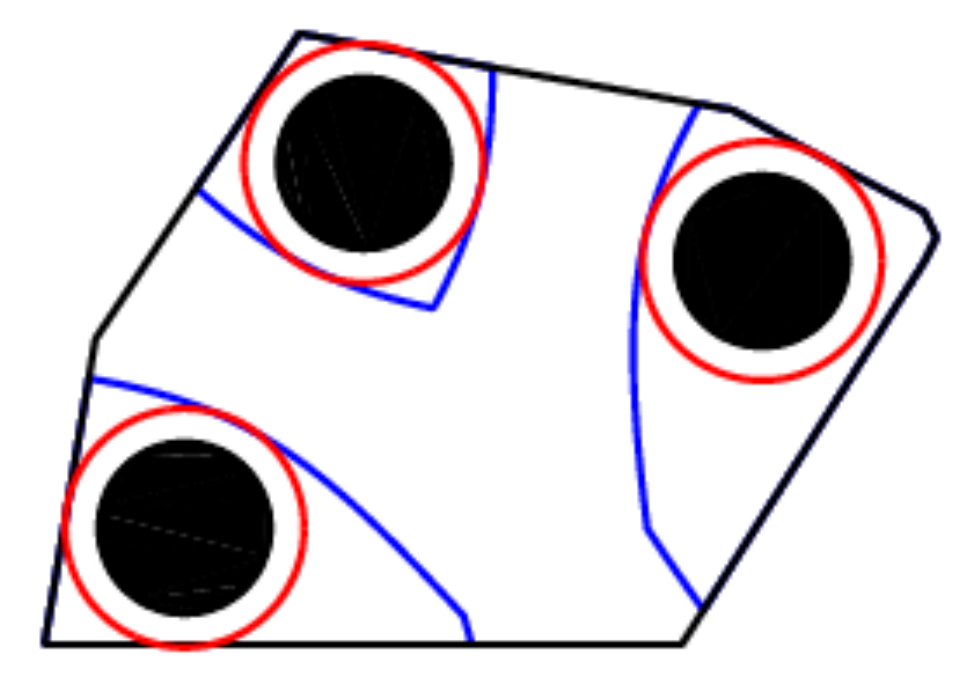}
	\else
		\includegraphics[width=0.48\textwidth]{figures/Experiment_II/exp2_final.pdf}
		\includegraphics[width=0.48\textwidth]{figures/Experiment_II/exp2_final_sim.pdf}
	\fi
	
	\caption{Experiment II: Experiment [Left] and simulation [Right] final configuration.}
	\label{fig:exp2_final}
\end{figure}

\begin{figure}[htbp]
	\centering
	\ifx\singlecol\undefined
		\includegraphics[width=0.24\textwidth]{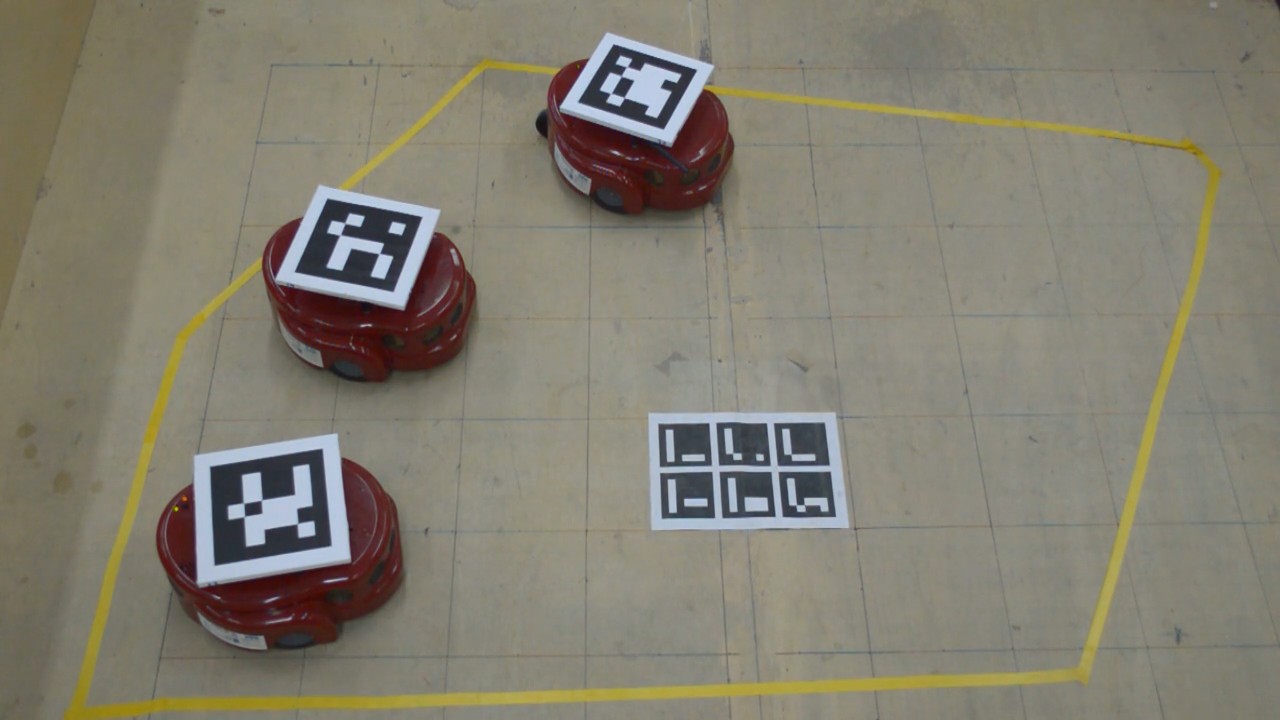}
		\includegraphics[width=0.24\textwidth]{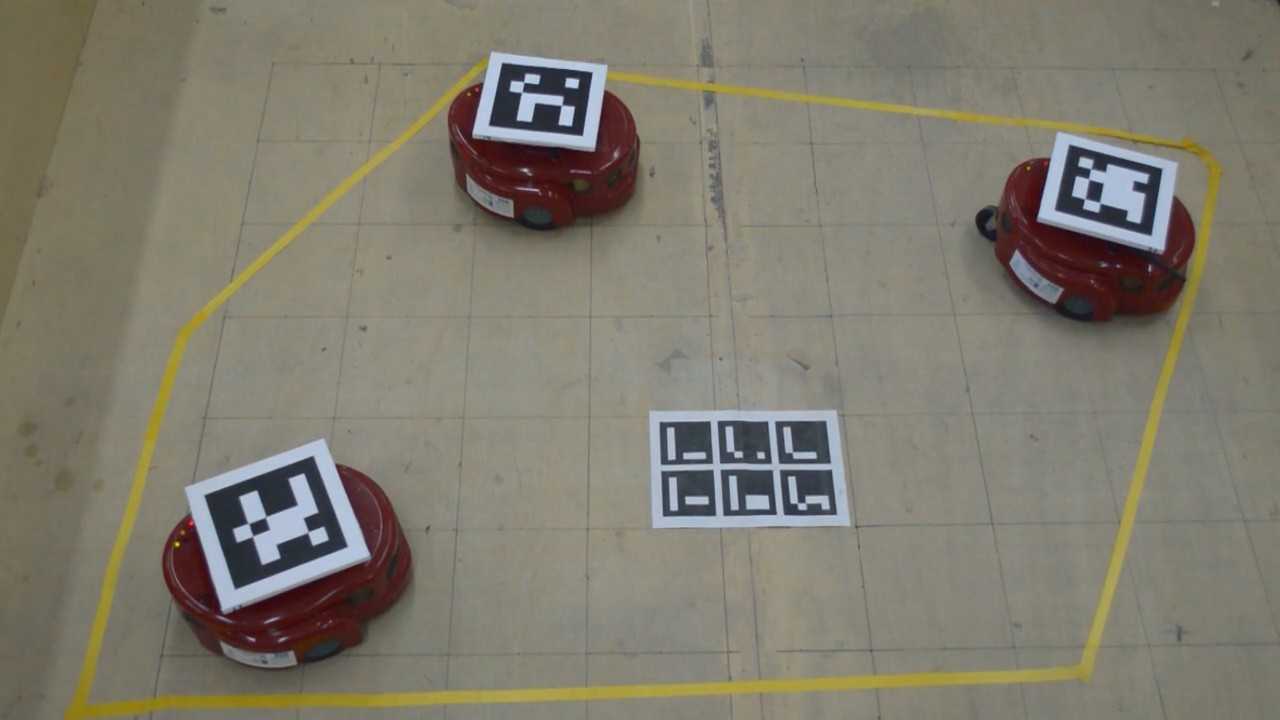}
	\else
		\includegraphics[width=0.48\textwidth]{figures/Experiment_II/exp2_initial.jpg}
		\includegraphics[width=0.48\textwidth]{figures/Experiment_II/exp2_final.jpg}
	\fi
	
	\caption{Experiment II: Photos from the initial [Left] and final [Right] robot positions in the experiment.}
	\label{fig:exp2_photos}
\end{figure}

\begin{figure}[htbp]
	\centering
	\ifx\singlecol\undefined
		\includegraphics[width=0.3\textwidth]{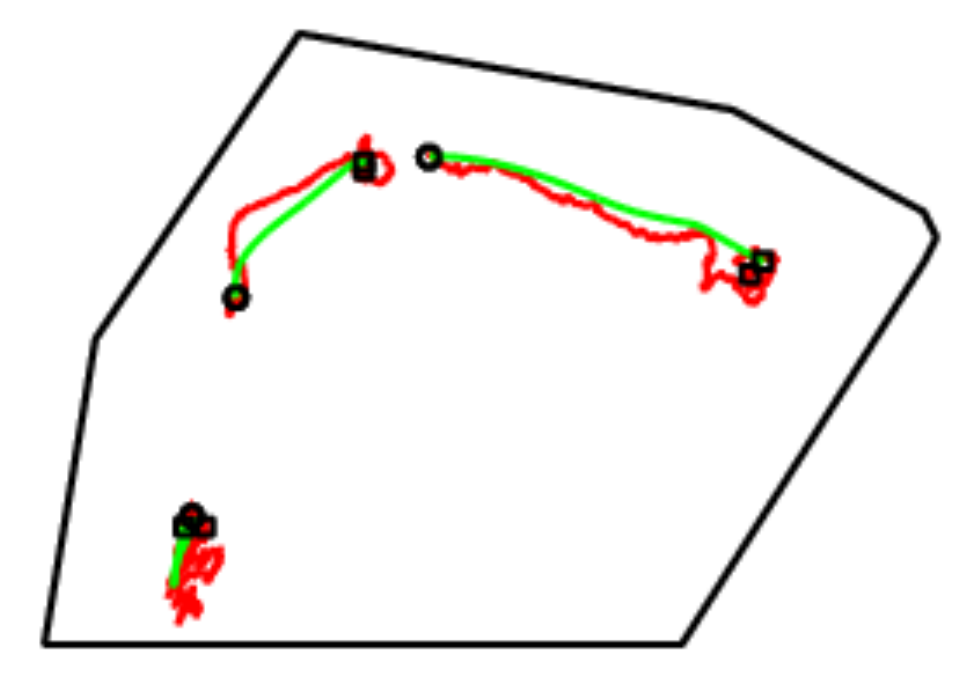}
	\else
		\includegraphics[width=0.6\textwidth]{figures/Experiment_II/exp2_traj_comp_sim.pdf}
	\fi
	
	\caption{Experiment II: Comparison of experiment (red) and simulation (green) robot trajectories. Initial positions marked with circles and final positions with squares.}
	\label{fig:exp2_traj_comp}
\end{figure}

\begin{figure}[htbp]
	\centering
	\ifx\singlecol\undefined
		\includegraphics[width=0.45\textwidth]{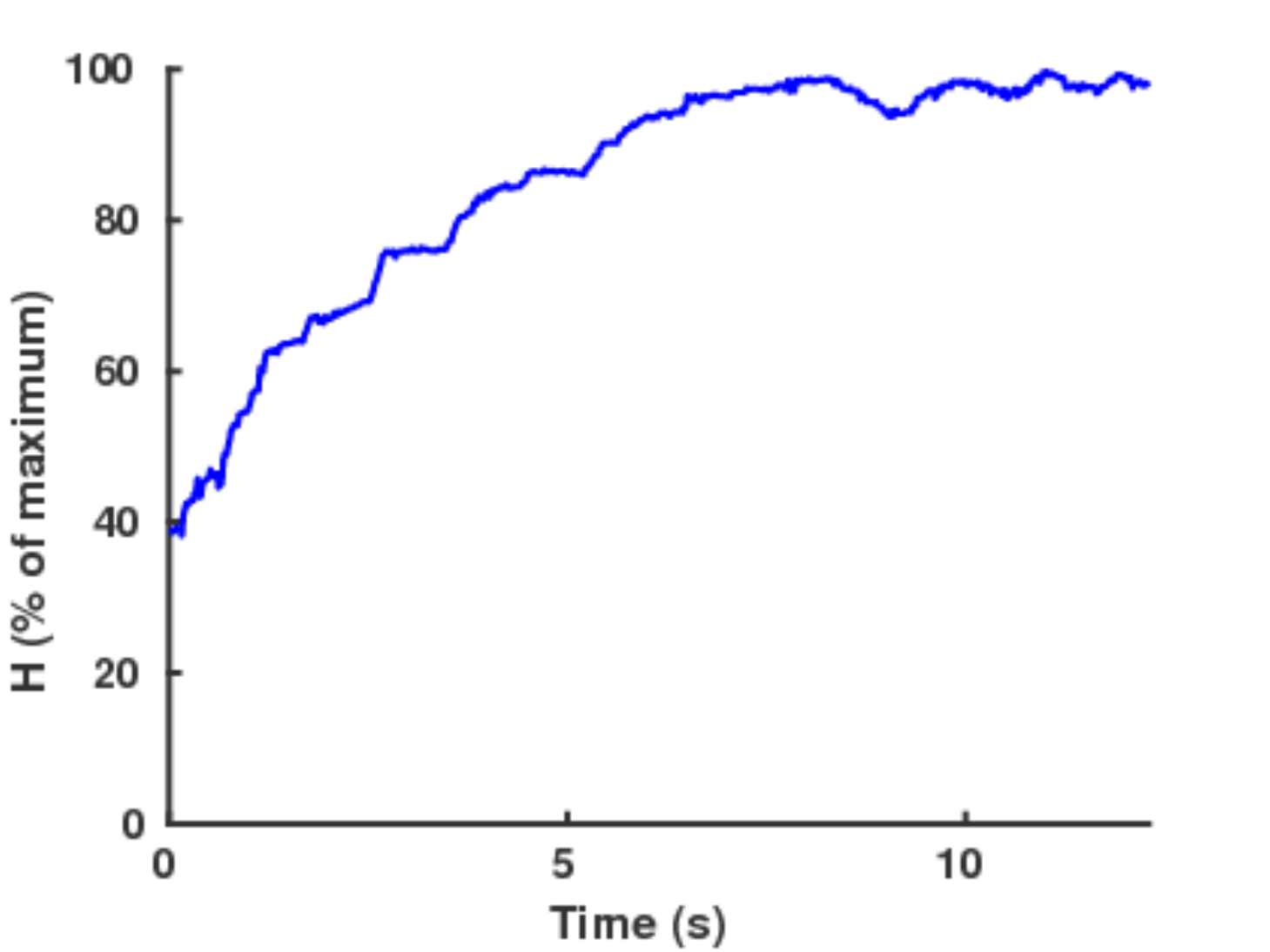}
	\else
		\includegraphics[width=0.9\textwidth]{figures/Experiment_II/exp2_H.pdf}
	\fi
	
	\caption{Experiment II: Coverage objective $\mathcal{H}$ as a function of time.}
	\label{fig:exp2_H}
\end{figure}

\begin{figure}[htbp]
	\centering
	\ifx\singlecol\undefined
		\includegraphics[width=0.24\textwidth]{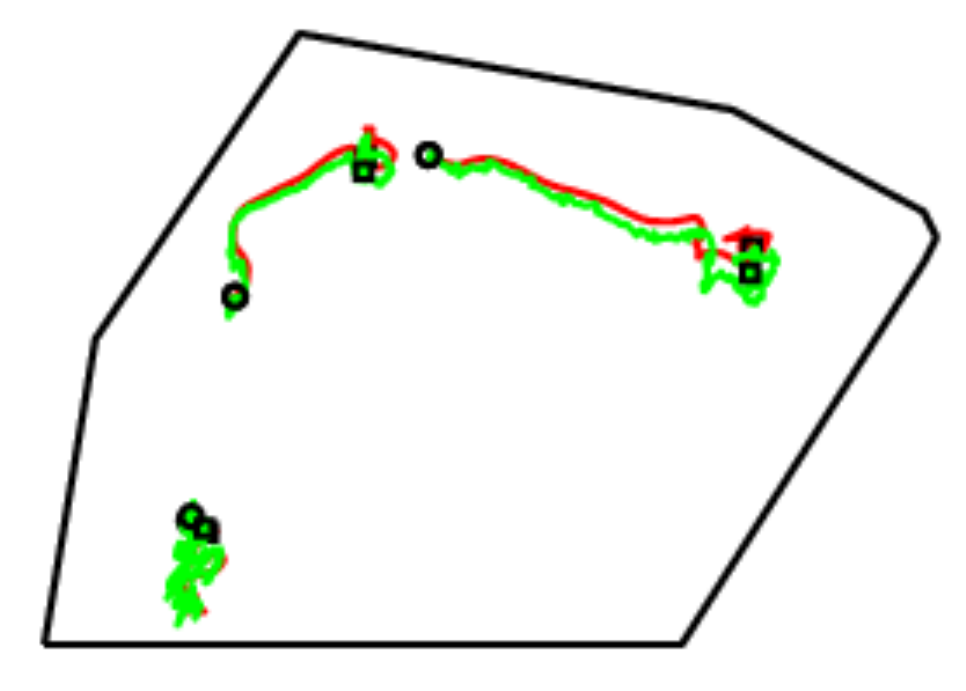}
		\includegraphics[width=0.24\textwidth]{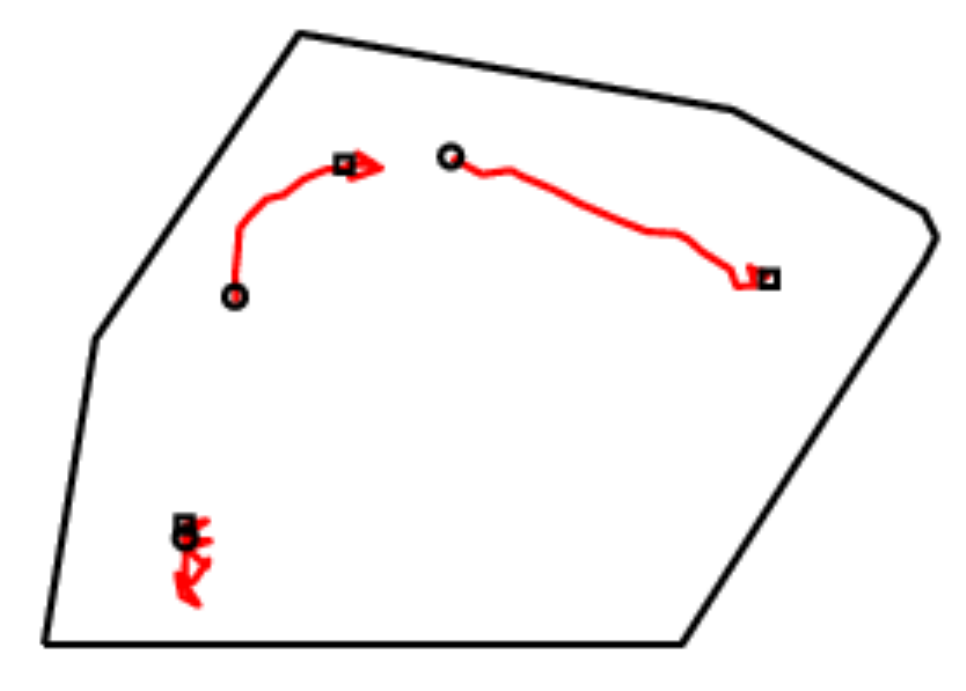}
	\else
		\includegraphics[width=0.48\textwidth]{figures/Experiment_II/exp2_traj.pdf}
		\includegraphics[width=0.48\textwidth]{figures/Experiment_II/exp2_target.pdf}
	\fi
	
	\caption{Experiment II: [Left] ArUco (green) and encoder (red) reported trajectories. [Right] Target point trajectories. Initial positions marked with circles and final positions with squares.}
	\label{fig:exp2_traj}
\end{figure}


\section{Conclusion}
This article examines the area coverage problem by a homogeneous team of mobile agents with imprecise localization. A gradient ascent based control law was designed based on a Guaranteed Voronoi partition of the region. Because of the complexity of the resulting control law, a simpler suboptimal control law is proposed and the performance of both is compared in simulation studies. Additionally, two experiments were conducted to highlight the efficiency of the suboptimal control law.


\appendices
\appendix[]
We assume $n$ disks $D_i, ~i \in I_n$ with centers $q_i = \left[ x_i, y_i \right]^T \in \mathbb{R}^2$ and radii $r_i^u, ~i \in I_n$.

\subsection{Hyperbola Definition and Properties}
\label{app:hyperbola_properties}
We will use the GV diagram definition given in (\ref{eq:GV_definition_max_min}) for two disks $D_i$ and $D_j$ and shown that the boundaries of their GV cells are hyperbola branches.

Let us find the boundary of the cell of node $i$, $V_i^g = H_{ij}$,  since we examine only two nodes. We have that
\small
\begin{equation*}
	\partial H_{ij} = \left\{ q\in\Omega\colon \max\left\|q-b_i\right\|
	= \min\left\|q-b_j\right\|, ~\forall b_i \in D_i, ~\forall b_j \in D_j
	\right\},
\end{equation*}
\normalsize
however since $D_i$ and $D_j$ are disks, the maximum and minimum distances can be calculated and thus
\small
\begin{equation*}
	\partial H_{ij} = \left\{ q\in\Omega\colon \left\|q-q_i\right\| + r_i^u
	= \left\|q-q_j\right\| - r_i^u \right\},
\end{equation*}
\normalsize
where $q_i$ and $q_j$ are the centers of $D_i$ and $D_j$ respectively. The equation
\begin{equation*}
	\partial H_{ij} = \left\{ q\in\Omega\colon \left\|q-q_j\right\| -\left\|q-q_i\right\|
	= r_i + r_j \right\},
\end{equation*}
defines one branch of a hyperbola since it is the locus of points whose difference of distances from two foci $q_i$ and $q_j$ is equal to a positive constant $r_i + r_j$. The other branch of the hyperbola is 
\begin{equation*}
	\partial H_{ji} = \left\{ q\in\Omega\colon \left\|q-q_i\right\| -\left\|q-q_j\right\|
	= r_i + r_j \right\},
\end{equation*}
and corresponds to the boundary of $V_j^g$.

In a hyperbola, the positive constant is called the major axis and is usually denoted as $2a$ whereas the distance between its foci is denoted as $2c$. A third parameter, the minor axis denoted as $2b$ is then defined by the equation $a^2+b^2=c^2$. As such in this case we have that $2a = r_i + r_j$ and $2c = \parallel q_i-q_j \parallel$. Additionally, the closest points on the two branches are called the vertices of the hyperbola and their distance is $2a$.

\begin{figure}[htb]
	\centering
	\ifx\singlecol\undefined
		\includegraphics[width=0.25\textwidth]{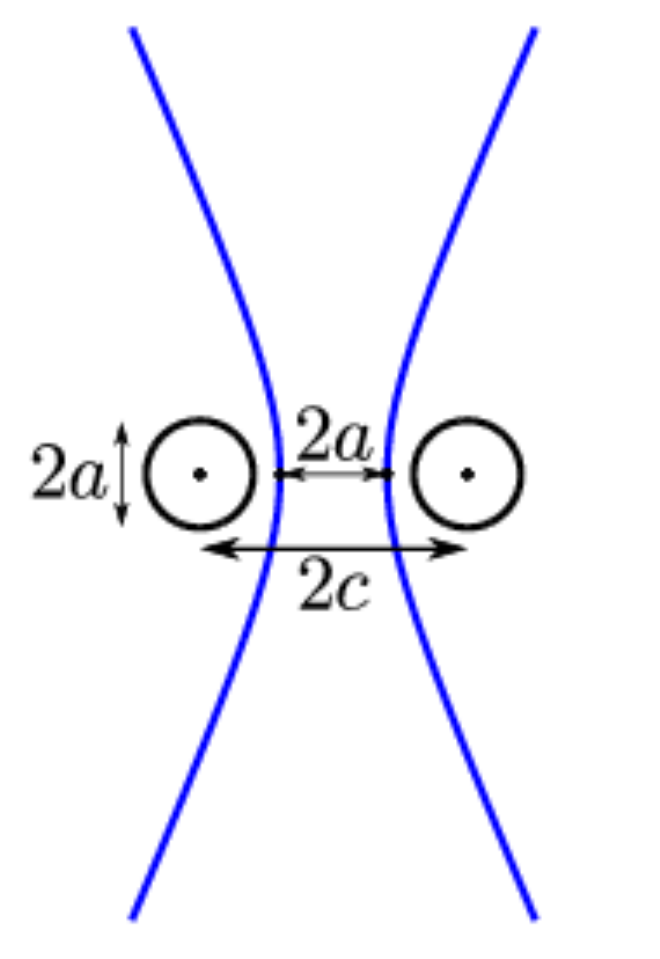}
	\else
		\includegraphics[width=0.5\textwidth]{figures/appendix/hyperbola_text.pdf}
	\fi
	
	\caption{A hyperbola with its parameters $2a$ and $2c$ shown.}
	\label{fig:hyperbola}
\end{figure}

\subsection{Hyperbola Parametric Equation}

Assuming the foci of the hyperbola are $q_i = [-c ~0]^T$  and $q_j = [c ~0]^T$, its parametric equation is
\begin{equation*}
\gamma(t) = \left[ \begin{array}{c} \pm a \cosh(t) \\ b \sinh(t) \end{array} \right], ~t \in \mathbb{R}
\end{equation*}
where $a$, $b$ and $c$ are those defined in Appendix \ref{app:hyperbola_properties} and the $+ (-)$ sign corresponds to the West (East) branch of the hyperbola which is $H_{ji} (H_{ij})$. 

By rotating the hyperbola foci around the origin by an angle $\theta$, the parametric equation becomes
\begin{equation*}
\gamma(t) = 
\left[ \begin{array}{cc}
\cos(\theta) &-\sin(\theta) \\
\sin(\theta) &\cos(\theta)
\end{array} \right] 
\left[ \begin{array}{c} \pm a \cosh(t) \\ b \sinh(t) \end{array} \right], ~t \in \mathbb{R}
\end{equation*}

Finally, given two foci $q_i = [x_i ~ y_i]^T$  and $q_j = [x_j ~ y_j]^T$ anywhere on the plane, the general hyperbola parametric equation is
\begin{equation}
\gamma(t) = 
\left[ \begin{array}{cc}
\cos(\theta) &-\sin(\theta) \\
\sin(\theta) &\cos(\theta)
\end{array} \right] 
\left[ \begin{array}{c} \pm a \cosh(t) \\ b \sinh(t) \end{array} \right]
+ \left[ \begin{array}{c}
\frac{x_i+x_j}{2} \\ \frac{y_i+y_j}{2}
\end{array} \right], ~t \in \mathbb{R}
\end{equation}
where $\theta = \arctan( \frac{y_j - y_i}{x_j - x_i} )$.

\subsection{Outward Unit Normal Vectors}

For any curve $\gamma(t)$ the normal vector is defined as
\begin{equation*}
\hat{n}(t) = \ddot{\gamma} - \left( \ddot{\gamma} \cdot \frac{\dot{\gamma}}{\parallel \dot{\gamma} \parallel}\right) \frac{\dot{\gamma}}{\parallel \dot{\gamma} \parallel}.
\end{equation*}
The magnitude of this vector is the curvature at that particular point on the curve and it points towards the center of curvature. Since the hyperbolic branches define convex regions on the plane, the outward unit normal vector is given by
\begin{equation}
n(t) = -\frac{\hat{n}}{\parallel \hat{n} \parallel}.
\end{equation}
However, since the resulting expressions are too large to show here, an indicative plot of the outward unit normal vectors of one branch of a hyperbola is shown in Figure \ref{fig:normals_hyp}.

\begin{figure}[htb]
	\centering
	\ifx\singlecol\undefined
		\includegraphics[width=0.5\textwidth]{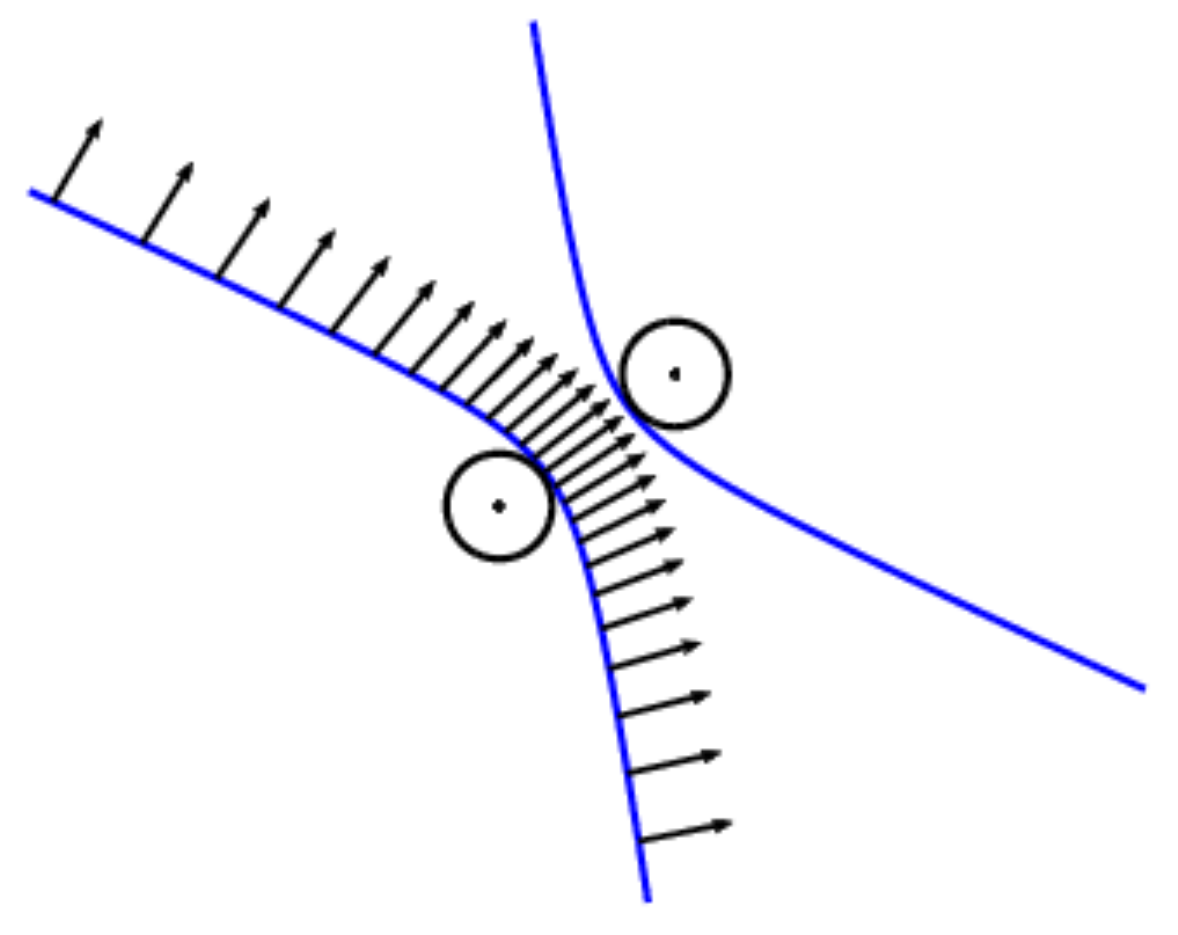}
	\else
		\includegraphics[width=0.9\textwidth]{figures/appendix/normal_vectors_hyperbola.pdf}
	\fi
	
	\caption{The outward unit normal vectors of one branch of the hyperbola are shown in black.}
	\label{fig:normals_hyp}
\end{figure}

\subsection{Jacobian Matrix}

The Jacobian matrix ${\upsilon_j^i}^T$ shows the change in the curve $\gamma_j(t) = \left[ \begin{array}{c} \gamma_{jx}(t) \\ \gamma_{jy}(t) \end{array} \right]$ caused by the movement of node $i$, thus its transpose is
\begin{equation}
\upsilon_j^i = \left[ \begin{array}{cc}
\frac{\partial \gamma_{jx}}{\partial x_i} & \frac{\partial \gamma_{jx}}{\partial y_i} \\
\frac{\partial \gamma_{jy}}{\partial x_i} & \frac{\partial \gamma_{jy}}{\partial y_i}
\end{array}\right]
\label{jacobian}
\end{equation}
The expressions for the Jacobian matrix elements are too large to show here and as such some indicative plots of them with respect to the parameter $t$ are shown in Figure \ref{fig:jacobian}. 

\begin{figure}[htb]
	\centering
	\ifx\singlecol\undefined
		\subfloat[]{ \includegraphics[width=0.23\textwidth]{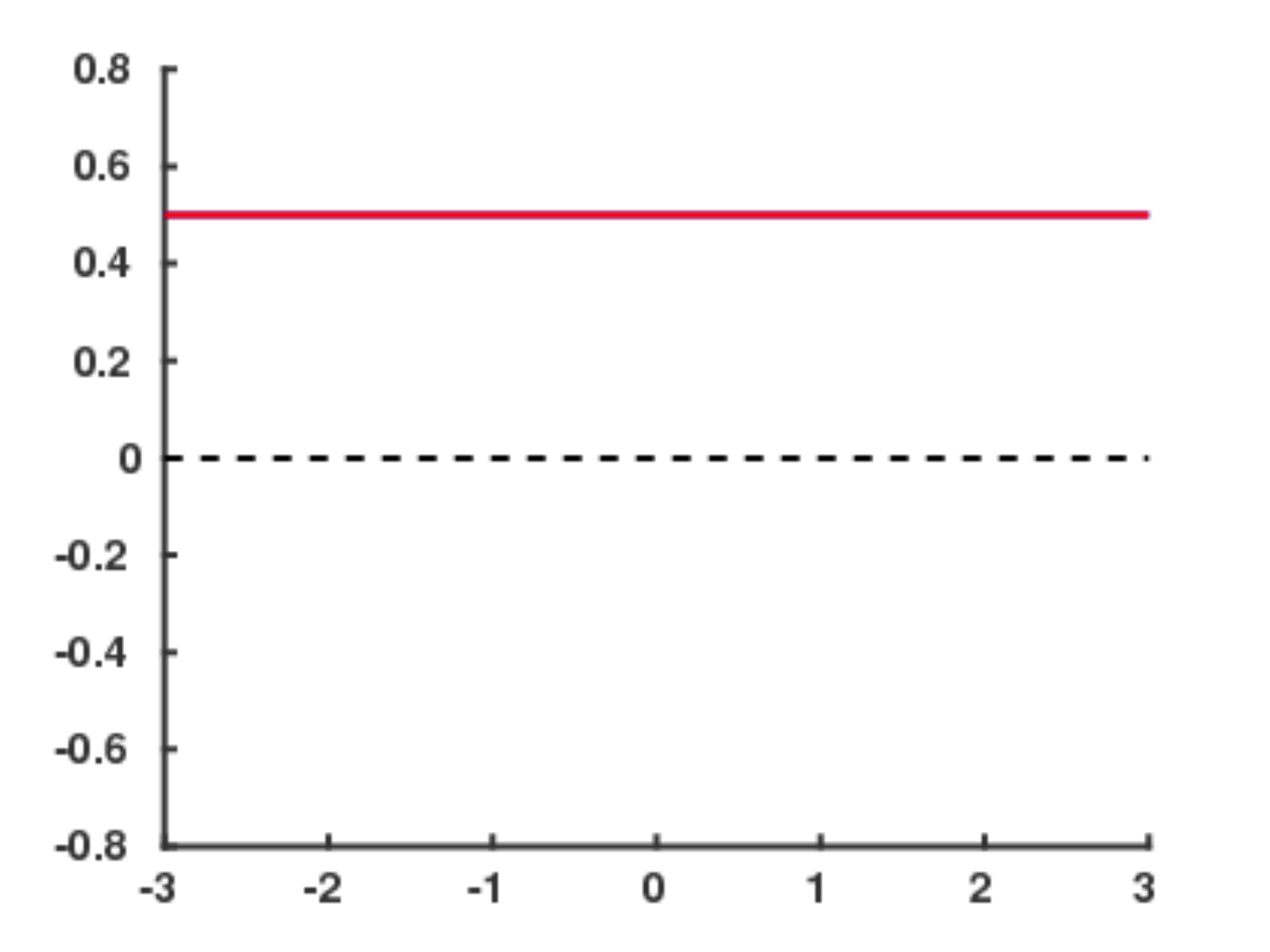} }
		\subfloat[]{ \includegraphics[width=0.23\textwidth]{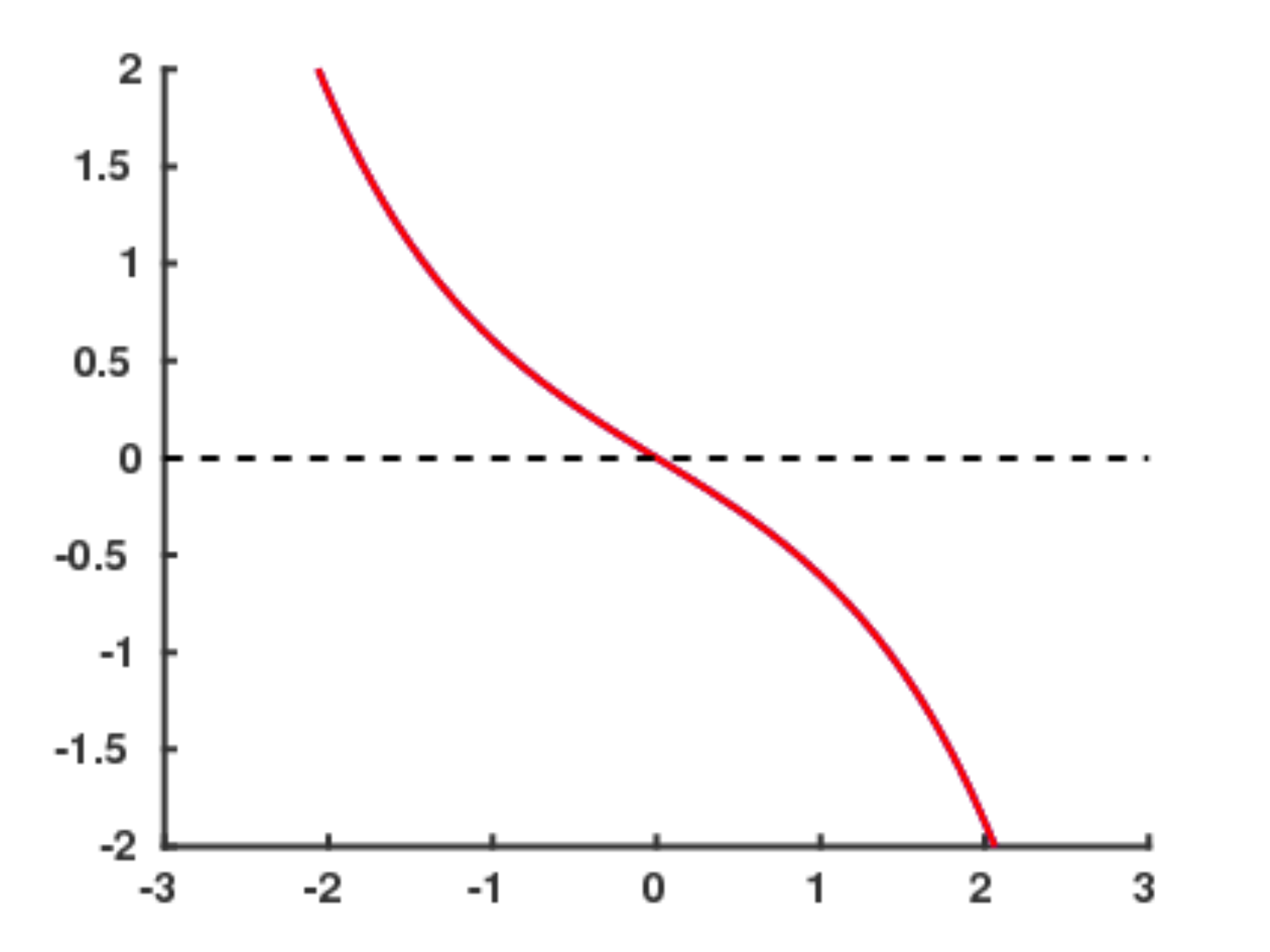} }\\
		\subfloat[]{ \includegraphics[width=0.23\textwidth]{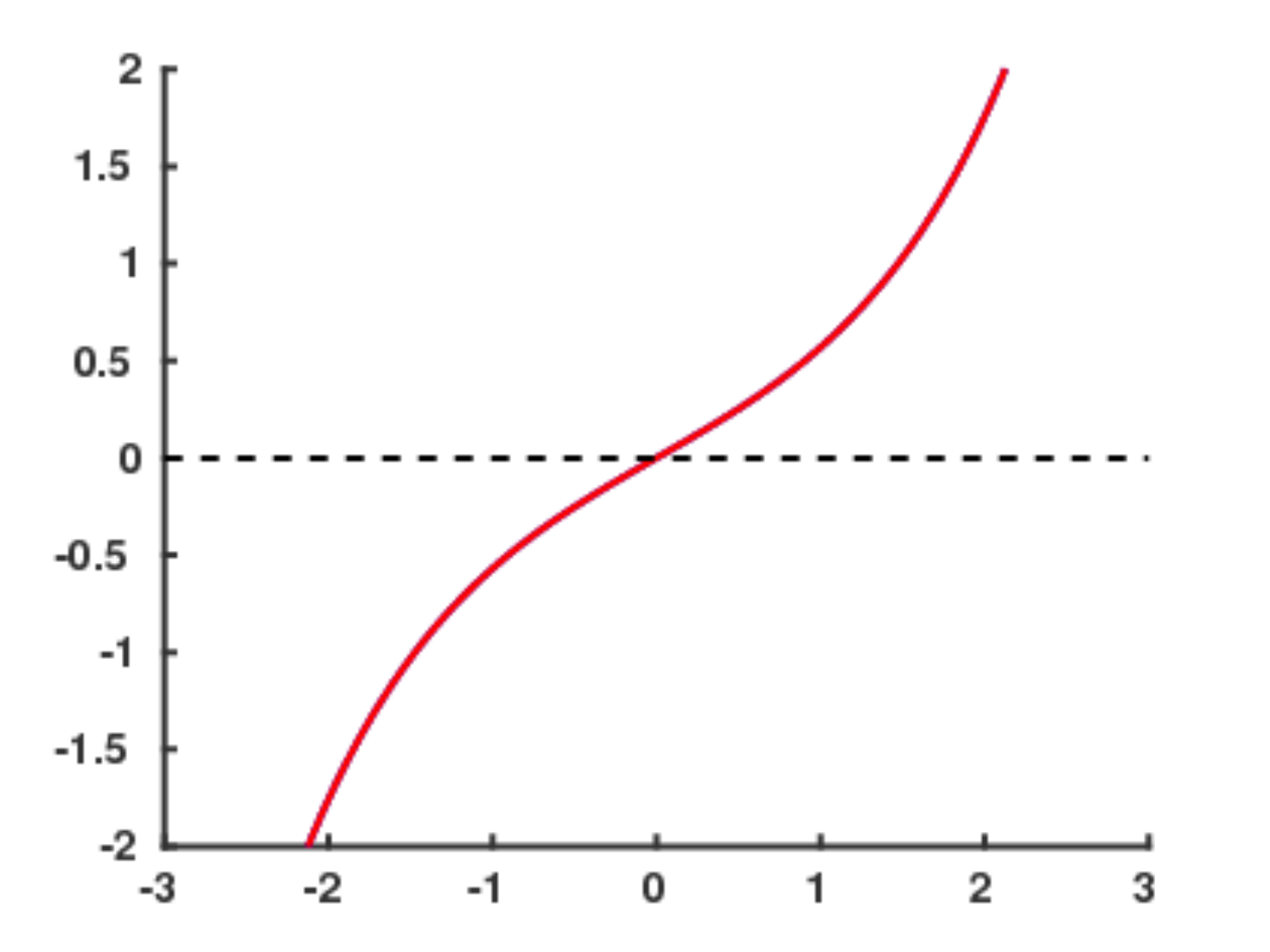} }
		\subfloat[]{ \includegraphics[width=0.23\textwidth]{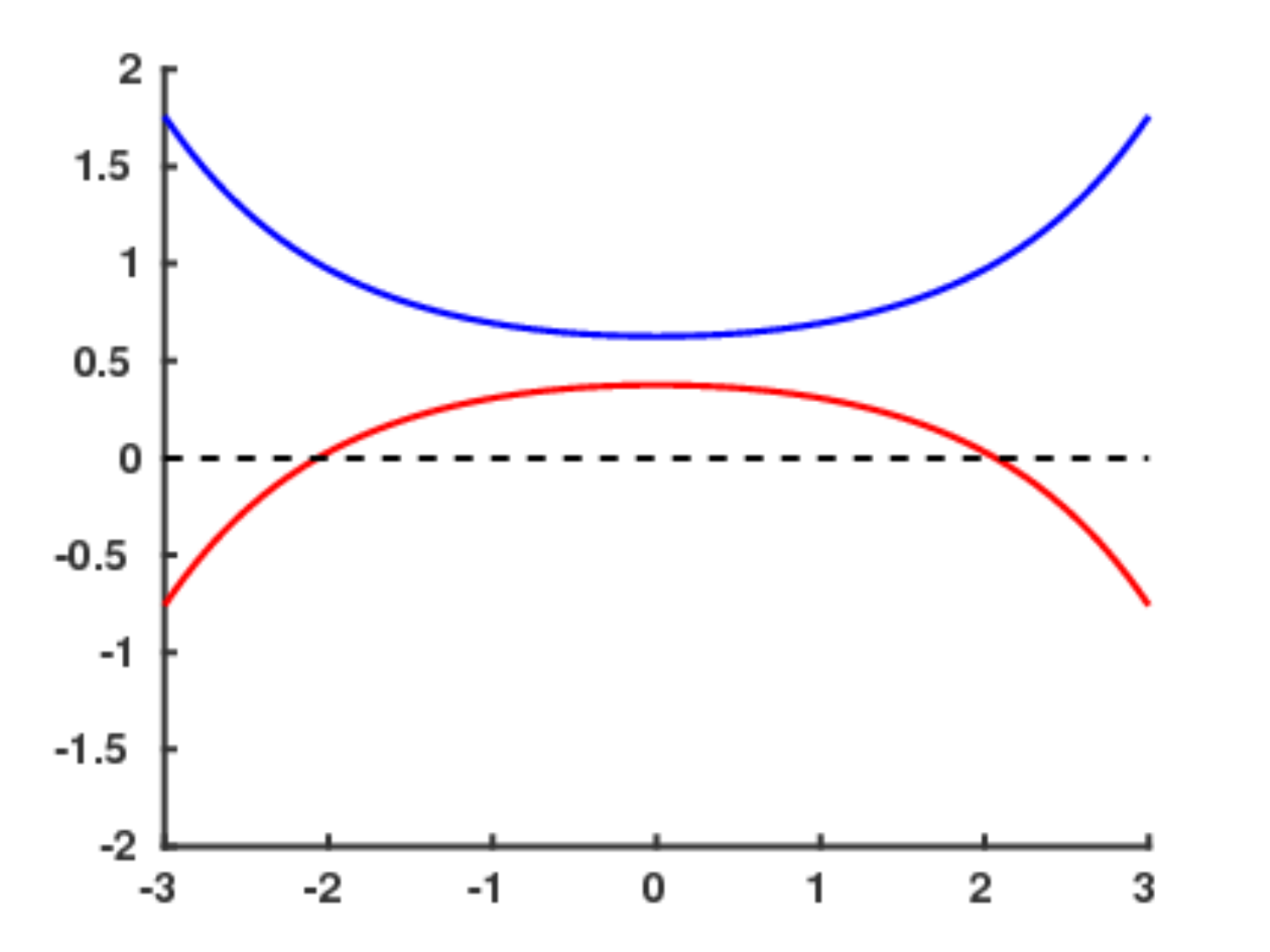} }
	\else
		\subfloat[]{ \includegraphics[width=0.45\textwidth]{figures/appendix/J_x_xi.pdf} }
		\subfloat[]{ \includegraphics[width=0.45\textwidth]{figures/appendix/J_x_yi.pdf} }\\
		\subfloat[]{ \includegraphics[width=0.45\textwidth]{figures/appendix/J_y_xi.pdf} }
		\subfloat[]{ \includegraphics[width=0.45\textwidth]{figures/appendix/J_y_yi.pdf} }
	\fi
	
	\caption{The Jacobian matrix $\upsilon_i^i$ (blue) and $\upsilon_j^i$ (red) elements as functions of $t$. (a) $\frac{\partial \gamma_{jx}}{\partial x_i}$, (b) $\frac{\partial \gamma_{jx}}{\partial y_i}$, (c) $\frac{\partial \gamma_{jy}}{\partial x_i}$, (d) $\frac{\partial \gamma_{jy}}{\partial y_i}$.}
	\label{fig:jacobian}
\end{figure}

\ifCLASSOPTIONcaptionsoff
  \newpage
\fi

\bibliographystyle{IEEEtran}
\bibliography{bibliography/SP_bibliography}

%

%
%
%
%





\end{document}